\newcommand{\qed}{\hspace*{1em}\hbox{\proofbox}}
\newtheorem{theorem}{Theorem}[section]
\newtheorem{corollary}[theorem]{Corollary}
\newtheorem{lemma}[theorem]{Lemma}
\newtheorem{definition}[theorem]{Definition}
\newtheorem{example}[theorem]{Example}
\newcommand\TTTT{%
 \textsf{T\kern-0.2em\raisebox{-0.3em}T\kern-0.2emT\kern-0.2em%
 \raisebox{-0.3em}2}\xspace%
}
\newcommand\mkbtt{\textsf{mkbtt}\xspace}
\newcommand{\tf}[1]{{\triangledown_{\!#1}}}
\newcommand{\embsm}[1]{\vartriangleright^{\!#1}_{\!\mathsf{emb}}}
\newcommand{\m}[1]{\mathsf{#1}}
\newcommand{\mc}[1]{\mathcal{#1}}
\newcommand{\mr}[1]{\mathrm{#1}}
\newcommand{\rt}{\m{root}}
\newcommand{\Wt}{\m{w,root}}
\renewcommand{\Wt}{\m{kv}}
\newcommand{\kvc}{{\m{kv'}}}
\newcommand{\lex}{\m{lex}}
\newcommand{\mul}{\m{mul}}
\newcommand{\FF}{\mc{F}}
\newcommand{\VV}{\mc{V}}
\newcommand{\TT}{\mc{T}}
\newcommand{\Var}{\VV\mr{ar}}
\newcommand{\AC}{\mr{\m{AC}}}
\newcommand{\ackbo}{\mr{\m{ACKBO}}}
\newcommand{\scoeff}{\mathit{sc}}
\newcommand{\actkbo}{\mr{\m{ACKBO}}^\scoeff}
\newcommand{\steinbach}{\mr{\m{S}}}
\newcommand{\acrpo}{\mr{\m{ACRPO}}}
\newcommand{\acrpoo}{\mr{\m{ACRPO'}}}
\newcommand{\KV}{\mr{\m{KV}}}
\newcommand{\KVC}{\mr{\m{KV'}}}
\newcommand{\seq}[2][n]{{#2_1},\dots,{#2_{#1}}}
\newcommand{\rr}[3][f]{{#2}{\restriction}^{#3}_{#1}}
\newcommand{\rrs}[3][f]{{#2}{\restriction}^{\smash{#3}}_{#1}}
\newcommand{\RR}{\mc{R}}
\newcommand{\Nat}{\mathbb{N}}
\newcommand{\vcoeff}{\m{vc}}
\renewcommand{\AA}{\mc A}
\newcommand{\proper}{proper}
\renewcommand{\proper}{strict}
\newcommand{\High}{\m{a}}
\newcommand{\Low}{\m{b}}
\newcommand{\Bot}{\m{c}}
\newcommand{\Top}{\m{d}}
\newcommand{\ToP}{\m{e}}
\newcommand{\GT}{\mathrel{\succ}}
\newcommand{\GS}{\mathrel{\succsim}}
\newcommand{\NGT}{\mathrel{\nsucc}}
\newcommand{\REL}{\mathrel{R}}
\author[Akihisa Yamada et al.]{%
AKIHISA YAMADA\\
Research Institute for Secure Systems, AIST, Japan
\and SARAH WINKLER\\
Institute of Computer Science, University of Innsbruck, Austria
\and NAO HIROKAWA\\
School of Information Science, JAIST, Japan 
\and AART MIDDELDORP\\
Institute of Computer Science, University of Innsbruck, Austria
}
	\title[AC-KBO Revisited]{AC-KBO Revisited%
	\thanks{The research described in this paper is supported by
	the Austrian Science Fund (FWF) international project I963,
	the bilateral programs of the Japan Society for the Promotion of Science
	and the KAKENHI Grant No.\ 25730004.}
	\thanks{This is an extended version of a paper presented at the Twelfth 
	International Symposium on Functional and Logic Programming (FLOPS 2014), 
	invited as a rapid publication in TPLP. The authors acknowledge the assistance 
	of the conference chairs Michael Codish and Eijiro Sumii.}}
	\title[Online appendix]{{\large\textnormal{Online appendix for the paper}}   \\
	AC-KBO revisited
	\\
	{\large\textnormal{published in Theory and Practice of Logic Programming}}
	}
\begin{document}
\maketitle
\ifx\APPENDIX\undefined
\ifx\ARTICLE\undefined
\noindent
{\bf Note:} This article has been accepted for publication in
\emph{Theory and Practice of Logic Programming}, \copyright\,Cambridge
University Press.\\
\enlargethispage{8ex}
\fi
\fi

\ifx\APPENDIX\undefined
\begin{abstract}
Equational theories that contain axioms expressing associativity and
commutativity (AC) of certain operators are ubiquitous. Theorem proving
methods in such theories rely on well-founded orders that are compatible
with the AC axioms.
In this paper we
consider various definitions of AC-compatible Knuth-Bendix orders.
The orders of Steinbach and of Korovin and Voronkov are revisited. The
former is enhanced to a more powerful version, and we modify
the latter to amend its lack of monotonicity on non-ground terms.
We further present new complexity results.
An extension reflecting the recent proposal of subterm coefficients
in standard Knuth-Bendix orders is also given.
The various orders are
compared on problems in termination and completion.
\end{abstract}

\begin{keywords}
Term Rewriting,
Termination,
Associative-Commutative Theory,
Knuth-Bendix Order
\end{keywords}

\section{Introduction}

Associative and commutative (AC) operators appear in many applications,
e.g.\ in automated reasoning with respect to algebraic structures 
such as commutative groups or rings.
We are interested in proving termination of term rewrite systems with AC
symbols. AC termination is important when deciding validity in equational
theories with AC operators by means of completion.

Several termination methods for plain rewriting have been
extended to deal with AC symbols. \citeN{BL87}
presented a characterization of polynomial interpretations that ensures
compatibility with the AC axioms. There have been numerous
papers on extending the recursive path order (RPO) of
\citeN{D82} to deal with AC symbols,
starting with the associative path order
of \citeN{BP85} and
culminating in the fully syntactic
AC-RPO of \citeN{R02}.
Several authors~\cite{KT01,MU04,GK01,ALM10} adapted
the influential dependency pair
method of \citeN{AG00} to AC rewriting.

We are aware of only two papers on AC extensions of the order (KBO) of 
\citeN{KB70}.
In this paper we revisit these orders and
present yet another AC-compatible KBO.
\citeN{S90} presented
a first version, which comes with the restriction that AC symbols are
minimal in the precedence. By incorporating ideas of \cite{R02},
\citeN{KV03b} presented a version without this restriction.
Actually, they present two versions. One is defined on ground terms
and another one on arbitrary terms. For (automatically) proving
AC termination of rewrite systems,
an AC-compatible order on arbitrary terms is required.%
\footnote{Any AC-compatible reduction order $\GT_\mathrm{g}$ on ground
terms can trivially be extended to
arbitrary terms by defining $s \GT t$ if and only if
$s\sigma \GT_\mathrm{g} t\sigma$ for all grounding substitutions $\sigma$.
This is, however, only of (mild) theoretical interest.}
We show that the second order of \citeANP{KV03b} lacks the
monotonicity property which is required by the definition of
simplification orders.
Nevertheless we prove that the order is sound for
proving termination by extending it to an AC-compatible
simplification order.
We furthermore present a simpler variant of this latter order which
properly extends the order of~\citeN{S90}.
In particular, Steinbach's order is a correct
AC-compatible simplification order, contrary to what is claimed
in~\cite{KV03b}.
We also present new complexity results which confirm that
AC rewriting is much more involved than plain rewriting.
Apart from these theoretical contributions, we implemented the
various AC-compatible KBOs to compare them also experimentally.

The remainder of this paper is organized as follows. After
recalling basic concepts of rewriting modulo AC and orders,
we revisit Steinbach's order in Section~\ref{Steinbach}.
Section~\ref{Korovin and Voronkov} is devoted to the two orders of
Korovin and Voronkov. We present a first version of our AC-compatible KBO
in Section~\ref{AC-KBO}, 
also giving the non-trivial proof that
it has the required properties. (The proofs in~\cite{KV03b} are limited
to the order on ground terms.)
In Section~\ref{complexity} we consider the complexity of the
membership and orientation decision problems for the various orders.
In Section~\ref{AC-RPO} we compare 
AC-KBO with AC-RPO.
In Section~\ref{subterm coefficients} our order is strengthened
with subterm coefficients.
In order to show effectiveness of these orders
experimental data is provided in Section~\ref{experiments}.
The paper is concluded in Section~\ref{conclusion}.

This article is an updated and extended version of~\cite{YWHM14}.
Our earlier results on complexity are extended by showing that
the orientability problems for different versions of AC-KBO
are in NP. Moreover, we include a comparison with AC-RPO, which
we present in a slightly simplified manner compared to \cite{R02}.
Due to space limitations, some proofs can be found in the online appendix.

\section{Preliminaries}

We assume familiarity with rewriting and termination. Throughout this
paper we deal with rewrite systems over
a set $\VV$ of variables and a \emph{finite} signature $\FF$
together with a designated subset $\FF_\AC$ of binary AC symbols. The
congruence relation induced by the equations $f(x,y) \approx f(y,x)$ and
$f(f(x,y),z) \approx f(x,f(y,z))$ for all $f \in \FF_\AC$ is denoted by
$=_\AC$. A term rewrite system (TRS for short) $\RR$ is AC terminating if
the relation ${=_\AC} \cdot {\to_\RR} \cdot {=_\AC}$ is well-founded. In
this paper AC termination is established by
\emph{AC-compatible simplification orders} $\GT$, which are
\proper\ orders (i.e., irreflexive and transitive relations)
closed under contexts and substitutions that have the subterm property
$f(\seq{t}) \GT t_i$ for all $1 \leqslant i \leqslant n$ and satisfy
${=_\AC} \cdot {\GT} \cdot {=_\AC} \subseteq {\GT}$.
A \proper\ order $\GT$ is \emph{AC-total} if
$s \GT t$, $t \GT s$ or $s =_\AC t$, for all ground terms $s$ and $t$.
A pair $({\GS},{\GT})$ consisting of a preorder $\GS$ and a \proper\
order $\GT$ is said to be an \emph{order pair} if the \emph{compatibility}
condition ${\GS \cdot \GT \cdot \GS} \subseteq {\GT}$ holds.

\begin{definition}
\label{lex and mul}
Let $\GT$ be a \proper\ order and $\GS$ be a preorder on a set $A$.
The \emph{lexicographic extensions} $\GT^\lex$ and $\GS^\lex$
are defined as follows:
\begin{itemize}
\item
$\vec{x} \GS^\lex \vec{y}$ if $\vec{x} \sqsupset_k^\lex \vec{y}$
for some $1 \leqslant k \leqslant n$,
\item
\smallskip
$\vec{x} \GT^\lex \vec{y}$ if $\vec{x} \sqsupset_k^\lex \vec{y}$
for some $1 \leqslant k < n$.
\end{itemize}
Here $\vec{x} = (\seq{x})$, $\vec{y} = (\seq{y})$, and
$\vec{x} \sqsupset_k^\lex \vec{y}$ denotes the following condition:
$x_i \GS y_i$ for all $i \leqslant k$ and either 
$k < n$ and $x_{k+1} \GT y_{k+1}$ or $k = n$.
The \emph{multiset extensions} $\GT^\mul$ and $\GS^\mul$ are
defined as follows:
\begin{itemize}
\item
$M \GS^\mul N$ if $M \sqsupset_k^\mul N$ for some
$0 \leqslant k \leqslant \min(m,n)$,
\item
\smallskip
$M \GT^\mul N$ if $M \sqsupset_k^\mul N$ for some
$0 \leqslant k \leqslant \min(m-1,n)$.
\end{itemize}
Here $M \sqsupset_k^\mul N$ if $M$ and $N$ consist of $\seq[m]{x}$ 
and $\seq{y}$ respectively such that $x_j \GS y_j$ for all 
$j \leqslant k$, and for every $k < j \leqslant n$ there is some
$k < i \leqslant m$ with $x_i \GT y_j$.
\end{definition}

Note that these extended relations depend on both $\GS$ and $\GT$.
The following result is folklore;
a recent formalization of multiset
extensions in Isabelle/HOL is presented in \cite{TAN12}.

\begin{theorem}
\label{thm:order pair}
If $(\GS,\GT)$ is an order pair then $(\GS^\lex,\GT^\lex)$
and $(\GS^\mul,\GT^\mul)$ are order pairs.
\qed
\end{theorem}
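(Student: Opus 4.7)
The plan is to verify, for each of $\GS^\lex, \GT^\lex$ and $\GS^\mul, \GT^\mul$, the three defining conditions of an order pair: the weak relation is a preorder (reflexive, transitive), the strict relation is irreflexive and transitive, and the two satisfy compatibility $\GS^\ast \cdot \GT^\ast \cdot \GS^\ast \subseteq \GT^\ast$. Reflexivity in both cases is immediate: for $\GS^\lex$ take $k = n$, and for $\GS^\mul$ take $k = m = n$ with elements listed in the same order; the ``strict tail'' condition is vacuous, and the $\GS$-comparisons hold by reflexivity of $\GS$. Irreflexivity of $\GT^\lex$ is forced by the constraint $k < n$, which requires $x_{k+1} \GT x_{k+1}$ and contradicts irreflexivity of $\GT$; for $\GT^\mul$, the constraint $k \leqslant m - 1$ singles out some $x_i \in M$ with $i > k$, and tracing the chain of dominations would eventually loop and produce an element $x_i$ with $x_i \GT x_i$.

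For the lexicographic case, transitivity of $\GS^\lex$ (and $\GT^\lex$) follows by unfolding the witnesses $\vec{x} \sqsupset_k^\lex \vec{y} \sqsupset_{k'}^\lex \vec{z}$ and taking $\ell = \min(k,k')$: positions $i \leqslant \ell$ compose by transitivity of $\GS$, and position $\ell+1$, if present, is handled by transitivity of $\GT$ or by compatibility $\GS \cdot \GT \subseteq \GT$ and $\GT \cdot \GS \subseteq \GT$ (both corollaries of the order pair assumption). The strictness bookkeeping is immediate: $\ell = n$ can only happen if $k = k' = n$, so if one of the factors had index $< n$ then so does the composite. Compatibility $\GS^\lex \cdot \GT^\lex \cdot \GS^\lex \subseteq \GT^\lex$ then amounts to the same calculation, with the strict factor ensuring that $\ell < n$ in the composite.

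For the multiset case the standard strategy is to build a combined matching. Given $M \sqsupset_k^\mul N \sqsupset_{k'}^\mul P$, split $P$ into the positions $j \leqslant k'$ (matched pointwise by $\GS$ to $N$) and positions $j > k'$ (strictly dominated by some $y_i \in N$ with $i > k'$); then match each $y_i$ in turn to its partner in $M$, which is either $x_i$ with $x_i \GS y_i$ (if $i \leqslant k$) or some $x_{i'} \in M$ with $x_{i'} \GT y_i$ and $i' > k$ (if $i > k$). Transitivity of $\GS$ and $\GT$, combined with compatibility of $(\GS,\GT)$, yield the required $\GS$- or $\GT$-comparison from $M$ to $P$ at each position, and the new splitting index is $\min(k,k')$. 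Strictness is preserved because if either factor was strict ($k < m$ or $k' < n$) the chosen partner in $M$ lies in the $\GT$-part. Compatibility reduces to the same construction with the $\GT^\mul$ factor placed in the middle.

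The main obstacle will be the multiset transitivity / compatibility argument: the pointwise accounting of which elements of $M$ are ``used'' as strict dominators, and the verification that the combined witness lies in the allowed range $0 \leqslant \ell \leqslant \min(m-1,n)$ for $\GT^\mul$, requires some care. Everything else is essentially mechanical once the definitions are expanded, and the compatibility clauses follow from the transitivity arguments by inserting a $\GT$-step in the middle and invoking $\GS \cdot \GT \cdot \GS \subseteq \GT$ pointwise.
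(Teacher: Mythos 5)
The paper offers no proof of this theorem: it is stated as folklore, with a pointer to an Isabelle/HOL formalization of multiset extensions, and closed immediately with \qed. So there is no in-paper argument to compare against; your proposal has to stand on its own, and it does. It is the standard direct verification, and every step you outline is sound: reflexivity via $k=n$, irreflexivity of $\GT^\lex$ from the forced strict comparison at position $k+1$, irreflexivity of $\GT^\mul$ by following the domination map on the unmatched tail until it cycles (equivalently, by picking a $\GT$-maximal unmatched element), and transitivity/compatibility by composing witnesses, using compatibility of $({\GS},{\GT})$ whenever a weak and a strict step meet. You are also right that the only delicate point is the multiset bookkeeping. One small imprecision there: because the middle multiset $N$ may be enumerated differently in the two witnesses $M \sqsupset_k^\mul N$ and $N \sqsupset_{k'}^\mul P$, the composite $\GS$-matched part need not have size exactly $\min(k,k')$ --- it is only bounded above by it. This is harmless, since any valid witness index suffices, and the required bound $\ell \leqslant m-1$ for strictness still follows (the only potentially troublesome case $k=m$ forces $k=m=n$, after which strictness of the second factor gives $\ell \leqslant k' \leqslant m-1$). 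With that caveat noted, the proof is complete in all essentials.
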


\section{Steinbach's Order}
\label{Steinbach}

In this section we recall the AC-compatible KBO $>_\steinbach$ of
\citeN{S90}, which reduces to the standard KBO if AC symbols are
absent.%
\footnote{The version in \cite{S90} is slightly more general, since non-AC
function symbols can have arbitrary status. To simplify the discussion, we
do not consider status in this paper.}
The order $>_\steinbach$ depends on a precedence
and an admissible weight function.
A \emph{precedence} $>$ is a \proper\ order on $\FF$.
A \emph{weight function} $(w,w_0)$ for a signature $\FF$ consists of
a mapping $w\colon \FF \to \Nat$ and a constant $w_0 > 0$ such that 
$w(c) \geqslant w_0$ for every constant $c \in \FF$.
The \emph{weight} of a term $t$ is recursively computed as follows:
\[
w(t) = 
\begin{cases}
w_0 & \text{if $t \in \VV$} \\
\displaystyle w(f) + \smash[b]{\sum_{1 \leqslant i \leqslant n}} w(t_i) &
\text{if $t = f(\seq{t})$}
\end{cases}
\]

\bigskip

\noindent
A weight function $(w,w_0)$ is \emph{admissible} for $>$ if 
every unary $f$ with $w(f) = 0$ satisfies $f > g$ for all
function symbols $g$ different from $f$.
Throughout this paper we assume admissibility.


The \emph{top\hyp flattening} \cite{R02}
of a term $t$ with respect to an AC symbol $f$
is the multiset $\tf{f}(t)$ defined inductively as follows:
\[
\tf{f}(t) =
\begin{cases}
\{ t \} & \text{if $\rt(t) \neq f$} \\
\tf{f}(t_1) \uplus \tf{f}(t_2) & \text{if $t = f(t_1,t_2)$}
\end{cases}
\]

\begin{definition}
\label{def:Steinbach}
Let $>$ be a precedence and $(w,w_0)$ a weight function.
The order $>_\steinbach$ is inductively defined
as follows: $s >_\steinbach t$ if
$|s|_x \geqslant |t|_x$ for all $x \in \VV$ and either $w(s) > w(t)$, or
$w(s) = w(t)$ and one of the following alternatives holds:
\begin{enumerate}
\setcounter{enumi}{-1}
\item
$s = f^k(t)$ and $t \in \VV$ for some $k > 0$,
\smallskip
\item
$s = f(\seq{s})$, $t = g(\seq[m]{t})$, and $f > g$,
\smallskip
\item
$s = f(\seq{s})$, $t = f(\seq{t})$, $f \notin \FF_\AC$,
$(\seq{s}) >_\steinbach^\lex (\seq{t})$,
\smallskip
\item
$s = f(s_1,s_2)$, $t = f(t_1,t_2)$, $f \in \FF_\AC$,
and $\tf{f}(s) >_\steinbach^\mul \tf{f}(t)$.
\end{enumerate}
\smallskip
The relation $=_\AC$ is used as preorder in
$>_\steinbach^\lex$ and $>_\steinbach^\mul$.
\end{definition}

Cases~0--2 are the same as in the standard Knuth-Bendix order.
In case~3 terms rooted by the same AC symbol $f$ are treated by
comparing their top\hyp flattenings in the multiset extension of
$>_\steinbach$.

\begin{example}
\label{example steinbach}
Consider the signature $\FF = \{ \m{a}, \m{f}, + \}$ with
${+} \in \FF_\AC$, precedence
$\m{f} > \m{a} > +$ 
and admissible weight function $(w,w_0)$ with $w(\m{f}) = w(+) = 0$
and $w_0 = w(\m{a}) = 1$. Let $\RR_1$ be the following ground TRS:
\\[-.5ex]
\begin{minipage}{.45\textwidth}
\begin{align}
\m{f}(\m{a} + \m{a}) &\to \m{f}(\m{a}) + \m{f}(\m{a})
 \tag{1}
\end{align}
\end{minipage}
\hfill
\begin{minipage}{.45\textwidth}
\begin{align}
\m{a} + \m{f}(\m{f}(\m{a})) &\to \m{f}(\m{a}) + \m{f}(\m{a})
 \tag{2}
\end{align}
\end{minipage}

\medskip

\noindent
For $1 \leqslant i \leqslant 2$, let $\ell_i$ and $r_i$ be the
left- and right-hand side of rule $(i)$,
$S_i = \tf{+}(\ell_i)$ and $T_i = \tf{+}(r_i)$.
Both rules vacuously satisfy the variable condition.
We have $w(\ell_1) = 2 = w(r_1)$ and $\m{f} > +$, so
$\ell_1 >_\steinbach r_1$ holds by case~1.
We have $w(\ell_2) = 2 = w(r_2)$,
$S_2 = \{ \m{a}, \m{f}(\m{f}(\m{a})) \}$, 
and $T_2 = \{ \m{f}(\m{a}), \m{f}(\m{a}) \}$. Since
$\m{f}(\m{a}) >_\steinbach \m{a}$ holds by case~1,
$\m{f}(\m{f}(\m{a})) >_\steinbach \m{f}(\m{a})$ holds
by case~2, and therefore $\ell_2 >_\steinbach r_2$ by case~3.
\end{example}

\begin{theorem}[\citeNP{S90}]
\label{thm:Steinbach}
If every symbol in $\FF_\AC$ is minimal with respect to $>$
then $>_\steinbach$ is an AC-compatible simplification order.%
\footnote{In \cite{S90} AC symbols are further required to have
weight $0$ because terms are flattened.
Our version of $>_\steinbach$ does not impose
this restriction due to the use of top\hyp flattening.}
\end{theorem}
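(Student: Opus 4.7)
The plan is to establish, in order, that $(=_\AC,>_\steinbach)$ is an order pair, that $>_\steinbach$ has the subterm property, is closed under contexts and substitutions, and is well-founded. The whole argument proceeds by a simultaneous induction on the combined size of the terms involved, and relies on two structural observations: first, that the variable condition $|s|_x\geqslant|t|_x$ and the weight $w(\cdot)$ are both stable under substitution and additive on AC contexts; second, that if $s=_\AC s'$ then either both have root $f$ or neither does, and in the former case $\tf{f}(s)$ and $\tf{f}(s')$ are in bijection pairwise modulo $=_\AC$, so top-flattening is compatible with $=_\AC$.

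Irreflexivity is immediate from case analysis (each case strictly decreases weight, precedence, or multiset/lex order). For transitivity and AC-compatibility one does a joint induction on term size, splitting on which of the four clauses applies at the top. The cases where weights differ or the precedence decides the matter are routine; the interesting clause is case~3, where one invokes Theorem~\ref{thm:order pair} on the top-flattenings together with the inductive hypothesis that $(=_\AC,>_\steinbach)$ is an order pair on smaller terms. AC-compatibility of case~3 follows because $\tf{f}$ is preserved under $=_\AC$ on $f$-rooted terms.

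The subterm property $f(\seq{t})>_\steinbach t_i$ splits into two subcases according to whether the weight is strictly larger or equal. Equality forces $f$ to be unary with $w(f)=0$ and $t_i$ to be a variable, whence admissibility and case~0 apply. Closure under substitutions is a direct induction, using that $w(s\sigma)-w(t\sigma)=w(s)-w(t)+\sum_x(|s|_x-|t|_x)(w(x\sigma)-w_0)$, which is nonnegative under the variable condition, and that top-flattening commutes with substitution for AC symbols.

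The crux is closure under contexts. For a one-hole non-AC context the lexicographic clause~2 transfers $s>_\steinbach t$ upward. For an AC context $f(\square,u)$ one must argue that $\tf{f}(f(s,u))\,(>_\steinbach)^\mul\,\tf{f}(f(t,u))$; after cancelling the shared contribution $\tf{f}(u)$, this reduces to showing $\tf{f}(s)\,(>_\steinbach)^\mul\,\tf{f}(t)$. This in turn is handled by a case analysis on whether the roots of $s,t$ are $f$, using the induction hypothesis together with Theorem~\ref{thm:order pair}. Here the minimality assumption on AC symbols is essential: it guarantees that whenever $s$ is $f$-rooted and $t$ is not, one has $t>_\steinbach$-comparisons of $t$ against elements of $\tf{f}(s)$ cleanly handled by case~1 via the precedence, preventing mismatches between lexicographic comparisons at the root and the multiset view demanded by case~3. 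I expect this monotonicity proof for AC contexts to be the main obstacle, precisely because it is where the minimality hypothesis is unavoidable. Well-foundedness then follows from the simplification order property, since $\FF$ is finite and Kruskal's tree theorem applies.
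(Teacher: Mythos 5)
Your route is genuinely different from the paper's: the paper never proves Theorem~\ref{thm:Steinbach} directly, but instead shows (Theorem~\ref{S vs ACKBO}) that under the minimality hypothesis $>_\steinbach$ coincides with $>_\ackbo$, and then inherits all properties from the correctness proof of $>_\ackbo$ (Theorem~\ref{thm:ackbo} together with Lemmata~\ref{lem:ackbo order pair}--\ref{lem:ackbo stable}). A direct proof along your lines is possible, and your treatment of the order-pair property, the subterm property and AC-context monotonicity matches, step for step, what the paper does for $>_\ackbo$: your analysis of the context $f(\Box,u)$ is essentially Lemmata~\ref{lem:tf} and~\ref{lem:ackbo monotone} specialised to a minimal $f$. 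One slip in that part: the configuration you describe as ``handled by case~1'', namely $s$ being $f$-rooted and $t$ not, is in fact excluded outright by minimality when $w(s)=w(t)$ (no case of Definition~\ref{def:Steinbach} can then apply); it is the converse configuration, $t$ $f$-rooted and $s$ not, that is discharged via case~1, the subterm property and transitivity.

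The genuine gap is in closure under substitutions, which you dispatch with the claim that ``top-flattening commutes with substitution for AC symbols''. This is false: for $t=f(x,y)$ with $f\in\FF_\AC$ and $\sigma(x)=f(a,b)$ one has $\tf{f}(t)\sigma=\{f(a,b),y\sigma\}$ but $\tf{f}(t\sigma)=\{a,b,y\sigma\}$. This is exactly the point the paper singles out as ``the trickiest part'', and it is the entire content of Lemma~\ref{lem:f-key}. Consequently the case-3 comparison $\tf{f}(s)>_\steinbach^\mul\tf{f}(t)$ does not transfer to $\tf{f}(s\sigma)>_\steinbach^\mul\tf{f}(t\sigma)$ by mere stability of the multiset extension under substitution; you must additionally analyse what happens to the variables occurring in the two top-flattenings. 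The repair is feasible and, for $>_\steinbach$, simpler than Lemma~\ref{lem:f-key}: a variable can never be the dominating element in a $>_\steinbach^\mul$ comparison (every case of Definition~\ref{def:Steinbach}, and the weight case, requires a non-variable on the left), so each variable of $\tf{f}(t)$ is either dominated by a non-variable element of $\tf{f}(s)$ --- which does not flatten further under $\sigma$ and still dominates every element of the instantiated flattening by the induction hypothesis, the subterm property and transitivity --- or is matched via $=_\AC$ with the same variable in $\tf{f}(s)$, in which case both sides gain the identical multiset $\tf{f}(x\sigma)$. Without some such argument the proof is incomplete at its hardest step.
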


In Section~\ref{AC-KBO} we reprove\footnote{%
The counterexample in \cite{KV03b} against the monotonicity of
$>_\steinbach$ is invalid as the condition that AC symbols are
\emph{minimal} in the precedence is not satisfied.}
Theorem~\ref{thm:Steinbach}
by showing that $>_\steinbach$ is a special case of
our new AC-compatible Knuth-Bendix order.

\section{Korovin and Voronkov's Orders}
\label{Korovin and Voronkov}

In this section we recall the orders of~\citeN{KV03b}.
The first one is defined on ground terms.
The difference
with $>_\steinbach$ is that in case~3 of the definition a
further case analysis is performed based on 
terms in $S$ and $T$ whose root symbols are
not smaller than $f$ in the precedence.
Rather than recursively comparing these terms with the order being
defined, a lighter non-recursive version is used in which the weights and
root symbols are considered. This is formally defined below.

Given a multiset $T$ of terms, a function symbol $f$, and a binary relation
$R$ on function symbols, we define the following submultisets of $T$: 
\begin{align*}
T{\restriction}_\VV &= \{ x \in T \mid x \in \VV \} &
\rr{T}{R} &= \{ t \in T \setminus \VV \mid \rt(t) \mathrel{R} f \}
\end{align*}

\begin{definition}
\label{def:KV ground}
Let $>$ be a precedence and $(w,w_0)$ a weight function.%
\footnote{Here we do not impose totality on precedences, cf.\ \cite{KV03b}.
See also Example~\ref{example partiality}.}
First we define the auxiliary relations $=_\Wt$ and $>_\Wt$
on ground terms as follows:
\begin{itemize}
\item
$s =_\Wt t$ if
$w(s) = w(t)$ and $\rt(s) = \rt(t)$,
\item
$s >_\Wt t$ if
either $w(s) > w(t)$ or both $w(s) = w(t)$ and $\rt(s) > \rt(t)$.
\end{itemize}
The order $>_\KV$ is inductively defined on ground terms as follows:
$s >_\KV t$ if either $w(s) > w(t)$, or
$w(s) = w(t)$ and one of the following alternatives holds:
\begin{enumerate}
\item
$s = f(\seq{s})$, $t = g(\seq[m]{t})$, and $f > g$,
\smallskip
\item
$s = f(\seq{s})$, $t = f(\seq{t})$, $f \notin \FF_\AC$,
$(\seq{s}) >_\KV^\lex (\seq{t})$,
\item
\smallskip
$s = f(s_1,s_2)$, $t = f(t_1,t_2)$, $f \in \FF_\AC$,
and for $S = \tf{f}(s)$ and $T = \tf{f}(t)$
\begin{itemize}
\item[(a)]
\smallskip
$\rrs{S}{\nless} >_\Wt^\mul \rrs{T}{\nless}$,
or
\item[(b)]
\smallskip
$\rrs{S}{\nless} =_\Wt^\mul \rrs{T}{\nless}$ and $|S| > |T|$, or
\item[(c)]
\smallskip
$\rrs{S}{\nless} =_\Wt^\mul \rrs{T}{\nless}$, $|S| = |T|$, and
$S >_\KV^\mul T$.
\end{itemize}
\end{enumerate}
Here $=_\AC$ is used as preorder in $>_\KV^\lex$ and $>_\KV^\mul$ whereas
$=_\Wt$ is used in $>_\Wt^\mul$.
\end{definition}

Only in cases~2 and 3(c)
the order $>_\KV$ is used recursively.
In case~3 terms rooted by the same AC symbol $f$ are compared by
extracting from the top\hyp flattenings $S$ and $T$ the multisets
$\rrs{S}{\nless}$ and $\rrs{T}{\nless}$
consisting of all terms rooted by a function symbol not
smaller than $f$ in the precedence.
If $\rrs{S}{\nless}$ is larger than
$\rrs{T}{\nless}$ in the multiset extension of $>_\Wt$,
we conclude in case~3(a). Otherwise the multisets must be equal
(with respect to $=_\Wt^\mul$).
If $S$ has more terms than $T$, we conclude in case~3(b). In the
final case~3(c) $S$ and $T$ have the same number of terms and we
compare $S$ and $T$ in the multiset extension of $>_\KV$.

\begin{theorem}[\citeNP{KV03b}]
\label{thm:KV ground}
The order $>_\KV$ is an AC-compatible simplification order on ground
terms. If $>$ is total then $>_\KV$ is AC-total on ground terms.
\end{theorem}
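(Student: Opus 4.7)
\medskip\noindent\textbf{Proof plan for Theorem~\ref{thm:KV ground}.}
The plan is to establish the defining properties of an AC-compatible simplification order one at a time, bootstrapping from order-pair properties of the auxiliary relations. Throughout the argument I work by induction on $|s|+|t|$ (or on $w(s)+w(t)$ where convenient), because every use of $>_\KV$ in its own definition is on terms that are either proper subterms (case~2) or belong to top-flattenings that strictly decrease the root-context size (case~3(c)).

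First I would verify that $(=_\Wt,>_\Wt)$ is an order pair: irreflexivity and transitivity follow immediately by lexicographic comparison of $(w(t),\rt(t))$, and compatibility $=_\Wt\cdot>_\Wt\cdot=_\Wt\subseteq>_\Wt$ is trivial. Applying Theorem~\ref{thm:order pair} gives that $(=_\Wt^\mul,>_\Wt^\mul)$ is an order pair, which is what cases~3(a) and 3(b) refer to. Next I would prove by induction that $(=_\AC,>_\KV)$ is an order pair on ground terms. Irreflexivity uses that none of cases 1,~2,~3(a),~3(b),~3(c) can hold with $t=s$; in particular for AC-rooted $s$, $\tf{f}(s)=\tf{f}(s)$ kills 3(a) and 3(b) and the IH kills 3(c). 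Transitivity proceeds by case analysis on which of (1)--(3) applies to $s>_\KV t$ and $t>_\KV u$, invoking Theorem~\ref{thm:order pair} for $>_\KV^\lex$ in case~2 and for $>_\Wt^\mul$, $>_\KV^\mul$ in case~3; the cross cases are handled using the monotonicity of weight and precedence. Compatibility with $=_\AC$ relies on the fact that $s=_\AC s'$ implies $w(s)=w(s')$, $\rt(s)=\rt(s')$, and, if $\rt(s)=f\in\FF_\AC$, $\tf{f}(s)=\tf{f}(s')$ as multisets; hence every clause of the definition is insensitive to replacing $s$ or $t$ by an $=_\AC$-equivalent term, which immediately yields AC-compatibility as well.

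For the subterm property $f(\seq{t})>_\KV t_i$ I would argue by induction on the term structure, using admissibility: for a unary $f$ with $w(f)=0$, $w(f(t))=w(t)$ but $f>g=\rt(t)$ (or $t$ is a variable, which does not occur here) puts us in case~1; otherwise $w(f(\seq{t}))>w(t_i)$ outright. The delicate part is the AC case: if $f\in\FF_\AC$, then $\tf{f}(f(\seq{t}))\supseteq\tf{f}(t_i)$ as a multiset (equal if $\rt(t_i)=f$, strictly larger by one element otherwise), so one of 3(a)--(c) applies after a short case distinction. Closure under contexts requires showing $s>_\KV t\Rightarrow C[s]>_\KV C[t]$, and reduces to the one-hole case $f(\ldots,s,\ldots)$ vs.\ $f(\ldots,t,\ldots)$. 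For $f\notin\FF_\AC$, case~2 gives it via $>_\KV^\lex$ from the IH. For $f\in\FF_\AC$, the top-flattening on both sides replaces $s$ by the multiset $\tf{f}(s)$ and $t$ by $\tf{f}(t)$ within a common multiset context, so it suffices to prove that $s>_\KV t$ implies $\tf{f}(s)\mathrel{(\geqslant\cdot>_\KV^\mul)} \tf{f}(t)$ in the appropriate sense; this is proved by a sub-induction on the definition of $\tf{f}$ and by checking that whenever $s,t$ have $f$ at the root the comparisons in case~3 are already stated in terms of the same top-flattenings. This step is where I expect the main obstacle: one must carefully track all three subcases 3(a),~3(b),~3(c) and show that adding the same multiset $U$ of "sibling" subterms to both $\tf{f}(s)$ and $\tf{f}(t)$ preserves each of them; (a) and (b) are stable by definition of multiset extension of $>_\Wt$ and by cardinality, and (c) follows from the fact that $>_\KV^\mul$ on equal-cardinality multisets is preserved under adding a common multiset.

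Once these properties are in place, well-foundedness on ground terms follows from Kruskal's tree theorem in the usual way (the subterm property plus closure under contexts on a finite signature suffice, since any infinite $>_\KV$-descending sequence would contain a homeomorphic embedding, contradicting the subterm property modulo the fact that homeomorphic embedding implies $\geqslant_\KV$). Finally, for AC-totality when $>$ is total, I would show by induction on $|s|+|t|$ that any two ground terms $s,t$ are comparable or $=_\AC$: compare weights first, then roots via the totality of $>$, landing in case~1 if they differ; if $\rt(s)=\rt(t)=f\notin\FF_\AC$, apply the IH componentwise through $>_\KV^\lex$; if $f\in\FF_\AC$, compare $\rrs{S}{\nless}$ and $\rrs{T}{\nless}$ via $>_\Wt^\mul$ (total on such multisets by totality of $>$ and of the numeric order on weights), then cardinality, then $S$ versus $T$ via $>_\KV^\mul$ using the IH --- in the only remaining case $S$ and $T$ coincide as multisets, which means $s=_\AC t$.
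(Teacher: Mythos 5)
First, a remark on the comparison itself: the paper does not prove Theorem~\ref{thm:KV ground} --- it is imported from Korovin and Voronkov, and the paper only proves the analogous statements for $>_\ackbo$ (Theorem~\ref{thm:ackbo}) and $>_\KVC$ (Theorem~\ref{KV' correctness}). Your overall architecture --- order-pair lemmas for $(=_\Wt,>_\Wt)$ and $(=_\AC,>_\KV)$, then subterm property, closure under contexts, well-foundedness via Kruskal, and a separate induction for AC-totality --- matches the architecture the paper uses for those analogues (Lemmata~\ref{lem:ackbo order pair}--\ref{lem:ackbo stable} and the appendix), so in spirit you are on the same track.

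There is, however, a gap in your closure-under-contexts step at exactly the point where this theorem is delicate. You reduce the AC-context case to ``adding the same multiset $U$ of sibling subterms to both $\tf{f}(s)$ and $\tf{f}(t)$ preserves each of 3(a), 3(b), 3(c)''. That only covers the situation where $s >_\KV t$ was itself derived by case~3, i.e.\ where $s$ and $t$ are both rooted by the ambient AC symbol. The dangerous case is $s >_\KV t$ by case~1 or case~2 with $w(s)=w(t)$, wrapped into $f(\Box,u)$ with $f\in\FF_\AC$: then $\tf{f}(f(s,u)) = \{s\}\uplus\tf{f}(u)$ and $\tf{f}(f(t,u)) = \{t\}\uplus\tf{f}(u)$, and one must establish one of 3(a)--(c) for these multisets from scratch. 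The key fact needed is that on \emph{ground} terms $s >_\KV t$ together with $w(s)=w(t)$ forces $\rt(s) \geqslant \rt(t)$, hence $s >_\Wt t$ or $s =_\Wt t$; after a further split on how $\rt(s)$ and $\rt(t)$ compare with $f$, this is what makes 3(a) respectively 3(c) applicable. This is precisely the step that fails for the non-ground order (in Examples~\ref{counterexample} and~\ref{counterexample 2} the terms satisfy neither $s>_\Wt t$ nor $s=_\Wt t$), and the paper's corresponding proofs make the point explicit: see the closing remark in the proof of Lemma~\ref{lem:ackbo monotone} and the case annotated ``this case breaks down for $>_\KV$'' in Lemma~\ref{lem:kvd-mono}. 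Without isolating this fact, your argument does not explain why the ground version survives where the non-ground one does not. A minor additional point: for $f\in\FF_\AC$ the subterm property never reaches cases 3(a)--(c), since AC symbols are binary and every term has weight at least $w_0>0$, so $w(f(t_1,t_2))>w(t_i)$ holds outright; the case distinction you announce there is vacuous.
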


The two orders $>_\KV$ and $>_\steinbach$ are incomparable on ground TRSs.

\begin{example}
\label{KV does not subsume S}
Consider again the ground TRS $\RR_1$ of Example~\ref{example steinbach}.
To orient rule (1) with $>_\KV$, the weight of the unary function
symbol $\m{f}$ must be $0$ and admissibility demands $\m{f} > \m{a}$
and $\m{f} > +$. Hence rule (1) is handled by case~1 of the
definition. For rule (2), the multisets
$S = \{ \m{a}, \m{f}(\m{f}(\m{a})) \}$ and
$T = \{ \m{f}(\m{a}), \m{f}(\m{a}) \}$ are compared in case~3. We have
$\rrs[+]{S}{\nless} = \{ \m{f}(\m{f}(\m{a})) \}$ if $+ > \m{a}$
and $\rrs[+]{S}{\nless} = S$ otherwise.
In both cases we have $\rrs[+]{T}{\nless} = T$.
Note that neither
$\m{a} >_\Wt \m{f}(\m{a})$ nor
$\m{f}(\m{f}(\m{a})) >_\Wt \m{f}(\m{a})$ holds. Hence case~3(a)
does not apply. But also cases~3(b) and~3(c) are not applicable as
$\m{f}(\m{f}(\m{a})) =_\Wt \m{f}(\m{a})$ and
$\m{a} \neq_\Wt \m{f}(\m{a})$. Hence, independent of the choice of
$>$, $\RR_1$ cannot be proved terminating by $>_\KV$.
Conversely, the TRS $\RR_2$ resulting from reversing rule (2)
in $\RR_1$ can be proved terminating by $>_\KV$ but not by
$>_\steinbach$.
\end{example}


Next we present the second order of \citeN{KV03b}, the extension of
$>_\KV$ to non-ground terms. Since it coincides with $>_\KV$
on ground terms, we use the same notation for the order.

In case~3 of the following definition, also variables appearing
in the top\hyp flattenings $S$ and $T$ are taken into account in the
first multiset comparison.
Given a relation $\REL$ on terms, we write 
$S \REL^f T$ for
\[
\rrs{S}{\nless}\,\REL^\mul\,\rrs{T}{\nless}
\uplus T{\restriction}_\VV - S{\restriction}_\VV
\]
Note that $\REL^f$ depends on a precedence $>$. Whenever we
use $\REL^f$, $>$ is defined.

\begin{definition}
\label{def:kv}
Let $>$ be a precedence and $(w,w_0)$ a weight function.
The orders $=_\Wt$ and $>_\Wt$ are extended to non-ground terms
as follows:
\begin{itemize}
\item
$s =_\Wt t$ if $|s|_x = |t|_x$ for all $x \in \VV$,
$w(s) = w(t)$ and $\rt(s) = \rt(t)$,
\item
$s >_\Wt t$ if $|s|_x \geqslant |t|_x$ for all $x \in \VV$ and
either $w(s) > w(t)$ or both $w(s) = w(t)$ and $\rt(s) > \rt(t)$.
\end{itemize}
\end{definition}

Some tricky features of the relations $=_\Wt$ and $>_\Wt$ are
illustrated below.

\begin{example}
\label{counterexample preliminary}
Let $\m{c}$ be a constant and $\m{f}$ a unary symbol.
We have $\m{f}(\m{c}) >_\Wt \m{c}$
whenever admissibility is assumed:
If $w(\m{f}) > 0$ then $w(\m{f}(\m{c})) > w(\m{c})$,
and if $w(\m{f}) = 0$ then admissibility imposes $\m{f} > \m{c}$.
On the other hand, $\m{f}(x) >_\Wt x$ holds only if $w(\m{f}) > 0$,
since $\m{f} \ngtr x$.
Furthermore, $\m{f}(x) =_\Wt x$ does not hold as
$\m{f} \neq x$.
\end{example}

\begin{example}
\label{counterexample2 preliminary}
Let $\m{c}$ be a constant with $w(\m{c}) = w_0$, $\m{f}$ a unary symbol,
and $\m{g}$ a non-AC binary symbol.
We do not have
$\ell = \m{g}(\m{f}(\m{c}),x) >_\Wt \m{g}(\m{c},\m{f}(\m{c})) = r$
since $w(\ell) = w(r)$ and $\rt(\ell) = \rt(r) = \m{g}$.
On the other hand, $\ell =_\Wt r$ also does not hold
since the condition ``$|s|_x = |t|_x$ for all $x \in \VV$\,'' is not
satisfied.
\end{example}

Now the non-ground version of $>_\KV$ is defined as follows.

\begin{definition}
\label{def:KV}
Let $>$ be a precedence and $(w,w_0)$ a weight function.
The order $>_\KV$ is inductively defined as follows: $s >_\KV t$ if
$|s|_x \geqslant |t|_x$ for all $x \in \VV$ and either $w(s) > w(t)$, or
$w(s) = w(t)$ and one of the following alternatives holds:
\begin{enumerate}
\setcounter{enumi}{-1}
\item
$s = f^k(t)$ and $t \in \VV$ for some $k > 0$,
\smallskip
\item
$s = f(\seq{s})$, $t = g(\seq[m]{t})$, and $f > g$,
\smallskip
\item
$s = f(\seq{s})$, $t = f(\seq{t})$, $f \notin \FF_\AC$,
$(\seq{s}) >_\KV^\lex (\seq{t})$,
\item
\smallskip
$s = f(s_1,s_2)$, $t = f(t_1,t_2)$, $f \in \FF_\AC$,
and for $S = \tf{f}(s)$ and $T = \tf{f}(t)$
\begin{itemize}
\item[(a)]
\smallskip
$S >_\Wt^f T$,
or
\item[(b)]
\smallskip
$S =_\Wt^f T$ and $|S| > |T|$, or
\item[(c)]
\smallskip
$S =_\Wt^f T$, $|S| = |T|$, and $S >_\KV^\mul T$.
\end{itemize}
\end{enumerate}
Here $=_\AC$ is used as preorder in $>_\KV^\lex$ and $>_\KV^\mul$
whereas $=_\Wt$ is used in $>_\Wt^\mul$.
\end{definition}

Contrary to what is claimed in \cite{KV03b}, the order $>_\KV$
of Definition~\ref{def:KV}
is not a simplification order because it lacks the monotonicity property
(i.e., $>_\KV$ is not closed under contexts),
as shown in the following examples.

\begin{example}
\label{counterexample}
We continue Example~\ref{counterexample preliminary}
by adding an AC symbol $+$.
We obviously have $\m{f}(x) >_\KV x$. However,
$\m{f}(x) + y >_\KV x + y$ does not hold if $w(\m{f}) = 0$.
Let 
\begin{align*}
S &= \tf{+}(s) = \{ \m{f}(x), y \} &
T &= \tf{+}(t) = \{ x, y \}
\end{align*}
We have
$\rrs[+]{S}{\nless} = \{ \m{f}(x) \}$,
and $\rrs[+]{T}{\nless} \cup T{\restriction}_\VV - S{\restriction}_\VV =
\{ x \}$.
As shown in Example~\ref{counterexample preliminary},
neither $\m{f}(x) >_\Wt x$ nor $\m{f}(x) =_\Wt x$ holds.
Hence none of the cases~3(a,b,c) of Definition~\ref{def:KV} can be applied.
\end{example}

Note that the use of a unary function of weight 0 is not crucial.
The following example illustrates that
the non-ground version of $>_\KV$ need not be closed under
contexts, even if there is no unary symbol of weight zero.

\begin{example}
\label{counterexample 2}
We continue Example~\ref{counterexample2 preliminary}
by adding an AC symbol $+$ with $\m{g} > + > \m{c}$. We have
\[
\ell = \m{g}(\m{f}(\m{c}),x) >_\KV \m{g}(\m{c},\m{f}(\m{c})) = r
\]
by case~2. However,
$s = \ell + \m{c} >_\KV r + \m{c} = t$ does not hold.
Let
\begin{align*}
S &= \tf{+}(s) = \{ \ell, \m{c} \} &
T &= \tf{+}(t) = \{ r, \m{c} \}
\end{align*}
We have
$\rrs[+]{S}{\nless} = \{ \ell \}$, $\rrs[+]{T}{\nless} = \{ r \}$, and
$S{\restriction}_\VV = T{\restriction}_\VV = \varnothing$.
As shown in Example~\ref{counterexample2 preliminary},
$\ell >_\Wt r$ does not hold.
Hence case~3(a) in Definition~\ref{def:KV} does not apply.
But also $\ell =_\Wt r$ does not hold, excluding
3(b) and 3(c).
\end{example}

These examples do not refute the soundness of $>_\KV$ for proving AC
termination; note that e.g.\ in Example \ref{counterexample}
also $x + y >_\KV \m{f}(x) + y$ does not hold. We prove soundness by
extending $>_\KV$ to $>_\KVC$ which has all desired properties.

%

\begin{definition}
\label{def:KV'}
The order $>_\KVC$ is obtained as in Definition~\ref{def:KV} after
replacing
$=_\Wt^f$ by $\geqslant_\kvc^f$ in cases~3(b) and 3(c),
and using $\geqslant_\kvc$ as preorder in $>_\Wt^\mul$ in case~3(a).
Here the relation $\geqslant_\kvc$ is defined as follows:
\begin{itemize}
\item
$s \geqslant_\kvc t$ if $|s|_x \geqslant |t|_x$ for all $x \in \VV$
and either $w(s) > w(t)$, or $w(s) = w(t)$ and
either $\rt(s) \geqslant \rt(t)$ or $t \in \VV$.
\end{itemize}
\end{definition}

Note that $\geqslant_\kvc$ is a preorder that contains $=_\AC$.

\begin{example}
Consider again Example~\ref{counterexample}.
We have $\m{f}(x) \geqslant_\kvc x$ due to the new possibility
``$t \in \VV$\,''.
We have
$\m{f}(x) + y >_\KVC x + y$ because now case~3(c) applies:
$\rrs[+]{S}{\nless} = \{ \m{f}(x) \} \geqslant_\kvc^\mul
\{ x \} = \rrs[+]{T}{\nless} \uplus T{\restriction}_\VV -
S{\restriction}_\VV$, $|S| = 2 = |T|$, and
$S = \{ \m{f}(x), y \} >_\KVC^\mul \{ x, y \} = T$ because
$\m{f}(x) >_\KVC x$.
Analogously, we have $\ell + \m{c} >_\KVC r + \m{c}$ for 
Example~\ref{counterexample 2}.
\end{example}

%

The proof of the following result can be found in the online appendix.

\begin{theorem}
\label{KV' correctness}
The order $>_\KVC$ is an AC-compatible simplification order. 
\end{theorem}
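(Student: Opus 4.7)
The plan is to establish all required properties of $>_\KVC$ through the auxiliary relation $\geqslant_\kvc$, following the order-pair methodology standard for KBO-style proofs. First I verify that $\geqslant_\kvc$ is a preorder containing $=_\AC$: reflexivity is immediate, and transitivity requires a case analysis on the weight alternatives and the ``$t \in \VV$'' clause, exploiting the fact that a variable has weight $w_0$ so that an equal-weight continuation of the chain is forced to remain at that weight. Next I check $>_\KVC \subseteq \geqslant_\kvc$ by inspecting each case of Definition~\ref{def:KV'}: case~0 supplies $t \in \VV$, while cases 1--3 supply $\rt(s) \geqslant \rt(t)$. The central technical step is that $(\geqslant_\kvc, >_\KVC)$ forms an \emph{order pair}, which I prove by induction on the combined size of the terms; the inductive step in case~3 relies on Theorem~\ref{thm:order pair} to ensure that $(\geqslant_\kvc^\mul, >_\KVC^\mul)$ is again an order pair on multisets. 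Transitivity and irreflexivity of $>_\KVC$ drop out as corollaries of the order-pair property.

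Once the order pair is in hand, the remaining simplification-order properties follow in sequence. Closure under substitutions is a structural induction: weights and variable multiplicities behave linearly under substitution, so the weight-based clauses transfer directly, while inside case~3 any $f$-rooted term introduced by $\sigma$ in a variable slot of $\tf{f}(s)$ or $\tf{f}(t)$ is further flattened on both sides uniformly. The subterm property $f(\seq{s}) >_\KVC s_i$ is inherited from the standard KBO argument, with admissibility providing case~0 for the degenerate unary zero-weight case. AC compatibility ${=_\AC} \cdot {>_\KVC} \cdot {=_\AC} \subseteq {>_\KVC}$ is then immediate from the order-pair property combined with ${=_\AC} \subseteq {\geqslant_\kvc}$, since AC-equivalent terms share identical top-flattenings and weights.

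The main obstacle is closure under contexts, the property that fails for $>_\KV$ and motivates the introduction of $\geqslant_\kvc$. It suffices to show $f(\dots,s,\dots) >_\KVC f(\dots,t,\dots)$ whenever $s >_\KVC t$. For $f \notin \FF_\AC$ the lexicographic comparison handles this uniformly, using reflexivity of $\geqslant_\kvc$ on the preceding positions together with Theorem~\ref{thm:order pair}. For $f \in \FF_\AC$ I must show $f(s,u) >_\KVC f(t,u)$, i.e.\ that the augmented top-flattenings $S \uplus U$ and $T \uplus U$ (with $S = \tf{f}(s)$, $T = \tf{f}(t)$, $U = \tf{f}(u)$) satisfy one of the sub-cases 3(a--c). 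Adding $U$ identically on both sides preserves cardinality, and adding $\rrs{U}{\nless}$ identically on both sides preserves $>_\Wt^f$ and $\geqslant_\kvc^f$; the multiset part of Theorem~\ref{thm:order pair} then transfers $>_\KVC^\mul$ from $S, T$ to $S \uplus U, T \uplus U$. The critical point is that even when the derivation of $s >_\KVC t$ did not yield a $>_\Wt$-comparison between $s$ and $t$ (as in Example~\ref{counterexample}, where $\m{f}(x)$ and $x$ are only $\geqslant_\kvc$-related), the relaxation from $=_\Wt^f$ to $\geqslant_\kvc^f$ in cases 3(b) and 3(c) admits the ``$t \in \VV$'' clause and makes the multiset comparison succeed.

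Finally, well-foundedness of $>_\KVC$ follows from Kruskal's tree theorem applied over the finite signature $\FF$, using the subterm property and monotonicity, and AC-well-foundedness of ${=_\AC} \cdot {>_\KVC}$ then follows by absorbing $=_\AC$-steps via the AC compatibility clause proved above. The anticipated difficulty is concentrated in the AC context-closure step, together with the careful bookkeeping of the variable-multiset difference $T{\restriction}_\VV - S{\restriction}_\VV$ under multiset union with $U$, which I expect to be the most error-prone part of the formal verification.
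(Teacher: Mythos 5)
There is a genuine gap at the heart of your argument: the claim that $({\geqslant_\kvc},{>_\KVC})$ is an order pair is false. The relation $\geqslant_\kvc$ inspects only variable counts, weights, and root symbols, so it relates structurally very different terms; compatibility ${\geqslant_\kvc}\cdot{>_\KVC}\cdot{\geqslant_\kvc}\subseteq{>_\KVC}$ therefore fails. Concretely, take constants $\m{a},\m{b},\m{c},\m{d}$ of weight $w_0$ with $\m{a}>\m{b}$ and $\m{c},\m{d}$ unrelated, and a binary non-AC symbol $\m{g}$. Then $\m{g}(\m{c},\m{c})\geqslant_\kvc \m{g}(\m{a},\m{d})$ (equal weight, equal root) and $\m{g}(\m{a},\m{d})>_\KVC \m{g}(\m{b},\m{d})$ by the lexicographic case, but $\m{g}(\m{c},\m{c})>_\KVC \m{g}(\m{b},\m{d})$ does not hold. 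Since you derive transitivity, irreflexivity, and AC compatibility of $>_\KVC$ from this order-pair claim, those derivations are unsound as given. The correct statement is that $({=_\AC},{>_\KVC})$ is an order pair (note that $=_\AC$, not $\geqslant_\kvc$, is the preorder used in $>_\KVC^\lex$ and $>_\KVC^\mul$); the relation $\geqslant_\kvc$ enters only as the preorder paired with $>_\Wt$ inside case~3, where $({\geqslant_\kvc},{>_\Wt})$ \emph{is} an order pair and that is all that is needed. The paper proves $({=_\AC},{>_\KVC})$ is an order pair by decomposing $>_\KVC$ into a weight/precedence part and size-bounded recursive parts and inducting on the size bound.

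Two further sketches are too thin to count as proofs, though they point in the right direction. For closure under contexts with $h\in\FF_\AC$, your argument ``add $U$ to both sides'' only covers the case where $s$ and $t$ are both $h$-rooted and $s>_\KVC t$ was already derived by case~3 for $h$; you must also show that one of 3(a,b,c) applies to the singletons $\{s\}$ and $\{t\}$ when $s>_\KVC t$ comes from cases 0, 1, or 2, or from case~3 for a different AC symbol. The delicate subcase is $s=f^k(t)$ with $t\in\VV$, where admissibility forces $f>h$ and case~3(c) goes through precisely because of the ``$t\in\VV$'' clause in $\geqslant_\kvc$ --- this is where $>_\KV$ breaks and $>_\KVC$ does not, so it deserves an explicit argument. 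For closure under substitutions in case~3(c), ``flattened uniformly on both sides'' does not suffice because $S$ and $T$ need not contain the same variables; the paper needs the equivalence $M>^\mul N$ iff $M-N>^\mul N-M$ together with inclusions relating $S'-T'$ to $S\sigma-T\sigma$ to transfer the recursive multiset comparison.
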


Since the inclusion ${>_\KV} \subseteq {>_\KVC}$ obviously holds,
it follows that $>_\KV$ is a sound method for establishing AC termination,
despite the lack of monotonicity.

\section{AC-KBO}
\label{AC-KBO}

In this section we present another AC-compatible simplification order. In
contrast to $>_\KVC$,
our new order $>_\ackbo$ contains $>_\steinbach$. Moreover, its
definition is simpler than $>_\KVC$ since we avoid the use of an auxiliary
order in case~3.
In the next section we show that
$>_\ackbo$ is decidable in polynomial-time, whereas the membership
decision problem for $>_\KVC$ is NP-complete.
Hence it will be used as the basis for the extension discussed in
Section~\ref{subterm coefficients}.

\begin{definition}
\label{def:ackbo}
Let $>$ be a precedence and $(w,w_0)$ a weight function.
We define $>_\ackbo$ inductively
as follows: $s >_\ackbo t$ if
$|s|_x \geqslant |t|_x$ for all $x \in \VV$ and either $w(s) > w(t)$, or
$w(s) = w(t)$ and one of the following alternatives holds:
\begin{enumerate}
\setcounter{enumi}{-1}
\item
$s = f^k(t)$ and $t \in \VV$ for some $k > 0$,
\smallskip
\item
$s = f(\seq{s})$, $t = g(\seq[m]{t})$, and $f > g$,
\smallskip
\item
$s = f(\seq{s})$, $t = f(\seq{t})$, $f \notin \FF_\AC$,
$(\seq{s}) >_\ackbo^\lex (\seq{t})$,
\item
\smallskip
$s = f(s_1,s_2)$, $t = f(t_1,t_2)$, $f \in \FF_\AC$,
and for $S = \tf{f}(s)$ and $T = \tf{f}(t)$
\begin{itemize}
\item[(a)]
\smallskip
$S >_\ackbo^f T$,
or
\item[(b)]
\smallskip
$S =_\AC^f T$, and $|S| > |T|$, or
\item[(c)]
\smallskip
$S =_\AC^f T$, $|S| = |T|$, and
$\rrs{S}{<} >_\ackbo^\mul \rrs{T}{<}$.
\end{itemize}
\end{enumerate}
\smallskip
The relation $=_\AC$ is used as preorder in
$>_\ackbo^\lex$ and $>_\ackbo^\mul$.
\end{definition}

Note that, in contrast to $>_\KV$, in case~3(c) we compare the multisets
$\rrs{S}{<}$ and $\rrs{T}{<}$ rather than
$S$ and $T$ in the multiset extension of $>_\ackbo$.

Steinbach's order is a special case of the order defined above.

\begin{theorem}
\label{S vs ACKBO}
If every AC symbol has minimal precedence then
${>_\steinbach} = {>_\ackbo}$.
\end{theorem}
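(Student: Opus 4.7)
I would prove $s >_\steinbach t \iff s >_\ackbo t$ by induction on $|s| + |t|$. The variable and weight preconditions in Definitions~\ref{def:Steinbach} and~\ref{def:ackbo} are identical, and cases 0, 1, and 2 have the same premises, with the recursive references $>_\steinbach^\lex$ and $>_\ackbo^\lex$ coinciding on proper subterms by the induction hypothesis. Hence only case 3 requires work: both definitions agree that it applies when $s = f(s_1,s_2)$ and $t = f(t_1,t_2)$ with $f \in \FF_\AC$.

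Next I would exploit the minimality of $f$. Since no function symbol lies strictly below a minimal $f$, for $S = \tf{f}(s)$ and $V_S = S{\restriction}_\VV$ one obtains $\rrs{S}{<} = \varnothing$ and $\rrs{S}{\nless} = S \setminus V_S$; the analogous equalities hold for $T = \tf{f}(t)$ and $V_T$. Consequently the condition in case 3(c) of Definition~\ref{def:ackbo} becomes $\varnothing >_\ackbo^\mul \varnothing$, which is vacuously false, so case 3(c) can be discarded outright. What remains is to show that Steinbach's condition $S >_\steinbach^\mul T$ (equivalently $S >_\ackbo^\mul T$ by the induction hypothesis, applied elementwise, since elements of $S, T$ are strictly smaller than $s, t$) is equivalent to the disjunction of case 3(a), namely $(S \setminus V_S) >_\ackbo^\mul ((T \setminus V_T) \uplus (V_T - V_S))$, and case 3(b), the corresponding $=_\AC^\mul$-equality together with $|S| > |T|$, where $-$ denotes multiset difference.

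The guiding observation for this equivalence is that in a multiset extension with $=_\AC$ as preorder, a variable can only be matched with itself, since no term distinct from $x \in \VV$ is $=_\AC$-equivalent to $x$. Thus in any decomposition $S = S_0 \uplus S_1$, $T = T_0 \uplus T_1$ with $S_0 =_\AC^\mul T_0$, $S_1 \neq \varnothing$, and each element of $T_1$ strictly dominated by some element of $S_1$, one may assume without loss of generality that the common variables $V_S \cap V_T$ are matched inside $S_0, T_0$. Writing $V_{S'} = V_S - V_T$ and $V_{T'} = V_T - V_S$, which are multiset-disjoint, the variables in $V_{S'}$ can only sit inertly in $S_1$---no variable strictly dominates any term in $>_\ackbo$---while every element of $T_1$, including any variable in $V_{T'}$, must be strictly dominated by a non-variable member of $S \setminus V_S$. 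A case split on whether $T_1$ is empty then yields exactly cases 3(a) and 3(b): if $T_1 \neq \varnothing$, the non-variable part of $S$ strictly dominates $(T \setminus V_T) \uplus V_{T'}$, giving case 3(a); if $T_1 = \varnothing$, the equality modulo AC forces $V_{T'} = \varnothing$ (non-variables are never $=_\AC$-equivalent to variables), and $S_1 \neq \varnothing$ reduces to $V_{S'} \neq \varnothing$, hence $|S| > |T|$, giving case 3(b). The converses reconstruct such a decomposition directly.

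The main obstacle is the variable bookkeeping in the multiset argument---in particular, justifying that common variables may be matched first without loss of generality and that unmatched variables behave inertly on the $S$-side while being strictly dominated on the $T$-side. Once this is handled, the equivalence boils down to a routine case analysis, and the theorem follows by induction.
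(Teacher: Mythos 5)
Your proposal is correct and follows essentially the same route as the paper's proof: induction reducing everything to case~3, using minimality of $f$ to render case~3(c) vacuous and to identify $\rrs{S}{\nless}$ with the non-variable part of $S$, and then establishing the equivalence of $S >^\mul T$ with the disjunction of 3(a) and 3(b) via the key observation that a variable is never $>_\ackbo$-greater than, nor $=_\AC$-equivalent to, a non-variable. The only cosmetic wrinkle is that your split on $T_1 = \varnothing$ versus $T_1 \neq \varnothing$ does not line up exactly with 3(b) versus 3(a) (a non-variable left over in $S_1$ with $T_1 = \varnothing$ still yields 3(a)), but the disjunction you actually need survives this.
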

\begin{proof*}
Suppose that
every function symbol in $\FF_\AC$ is minimal with respect to $>$.
We show that $s >_\steinbach t$ if and only if $s >_\ackbo t$
by induction on $s$. It is clearly sufficient to consider 
case~3 in Definition~\ref{def:Steinbach}
and cases~3(a,b,c) in Definition~\ref{def:ackbo}.
So let $s = f(s_1,s_2)$ and $t = f(t_1,t_2)$ such that
$w(s) = w(t)$ and $f \in \FF_\AC$.
Let $S = \tf{f}(s)$ and $T = \tf{f}(t)$.
\begin{itemize}
\item
Let $s >_\steinbach t$ by case~3.
We have $S >_\steinbach^\mul T$.
Since $S >_\steinbach^\mul T$ involves only comparisons
$s' >_\steinbach t'$ for subterms $s'$ of $s$,
the induction hypothesis yields $S >_\ackbo^\mul T$.
Because $f$ is minimal in $>$,
$S = \rrs{S}{\nless} \uplus S{\restriction_\VV}$
and $T = \rrs{T}{\nless} \uplus T{\restriction_\VV}$. 
For no elements
$u \in S{\restriction_\VV}$ and $v \in \rrs{T}{\nless}$,
$u >_\ackbo v$ or $u =_\AC v$ holds. Hence $S >_\ackbo^\mul T$
implies $S >_\ackbo^f T$ or both $S =_\AC^f T$
and
$S{\restriction_\VV} \supsetneq T{\restriction_\VV}$.
In the former case $s >_\ackbo t$
is due to case~3(a) in Definition~\ref{def:ackbo}. In the latter case
we have $|S| > |T|$ and $s >_\ackbo t$ follows by case 3(b).
\smallskip
\item
Let $s >_\ackbo t$ by applying one of the cases 3(a,b,c) in
Definition~\ref{def:ackbo}.
\begin{itemize}
\item
Suppose 3(a) applies. Then we have $S >_\ackbo^f T$.
Since $f$ is minimal in $>$,
$\rrs{S}{\nless} = S - S{\restriction}_\VV$ and
$\rrs{T}{\nless} \uplus T{\restriction}_\VV = T$.
Hence $S >_\ackbo^\mul (T - S{\restriction}_\VV) \uplus
S{\restriction}_\VV \supseteq T$. We obtain
$S >_\steinbach^\mul T$ from the induction hypothesis and thus
case~3 in Definition~\ref{def:Steinbach} applies.
\item
Suppose 3(b) applies. Analogous to the previous case, the inclusion
$S =_\AC^\mul (T - S{\restriction}_\VV) \uplus S{\restriction}_\VV
\supseteq T$ holds.
Since $|S| > |T|$, $S =_\AC^\mul T$ is not possible. Thus
$(T - S{\restriction}_\VV) \uplus S{\restriction}_\VV \supsetneq T$ and
hence $S >_\steinbach^\mul T$.
\item
If case~3(c) applies then $\rrs{S}{<} >_\ackbo^\mul \rrs{T}{<}$.
This is impossible since both sides are empty as
$f$ is minimal in $>$.
\qed
\end{itemize}
\end{itemize}
\end{proof*}

The following example shows that $>_\ackbo$ is a proper extension of 
$>_\steinbach$ and incomparable with $>_\KVC$.

\begin{example}
\label{AC-KBO vs KVC}
Consider the TRS $\RR_3$ consisting of the rules
\begin{xalignat*}{3}
\m{f}(x+y) &\to \m{f}(x)+y
&
\m{h}(\m{a},\m{b}) &\to \m{h}(\m{b},\m{a})
&
\m{h}(\m{g}(\m{a}),\m{a}) &\to \m{h}(\m{a},\m{g}(\m{b}))
\\
\m{g}(x)+y &\to \m{g}(x+y)
&
\m{h}(\m{a},\m{g}(\m{g}(\m{a}))) &\to \m{h}(\m{g}(\m{a}),\m{f}(\m{a}))
&
\m{h}(\m{g}(\m{a}),\m{b}) &\to \m{h}(\m{a},\m{g}(\m{a}))
\\
\m{f}(\m{a})+\m{g}(\m{b}) &\to \m{f}(\m{b})+\m{g}(\m{a})
\end{xalignat*}
over the signature $\{ {+}, \m{f}, \m{g}, \m{h}, \m{a}, \m{b} \}$ with
${+} \in \FF_\AC$. Consider the precedence
\[
\m{f} > {+} > \m{g} > \m{a} > \m{b} > \m{h}
\]
together with the admissible weight function $(w,w_0)$ with 
\begin{align*}
w({+}) &= w(\m{h}) = 0&
w(\m{f}) &= w(\m{a}) = w(\m{b}) = w_0 = 1&
w(\m{g}) &= 2
\end{align*}
The interesting rule is
$\m{f}(\m{a})+\m{g}(\m{b}) \to \m{f}(\m{b})+\m{g}(\m{a})$.
For $S = \tf{\,+}(\m{f}(\m{a})+\m{g}(\m{b}))$ and
$T = \tf{\,+}(\m{f}(\m{b})+\m{g}(\m{a}))$
the multisets
$S' = \rrs[+]{S}{\nless} = \{ \m{f}(\m{a}) \}$ and
$T' = \rrs[+]{T}{\nless} \uplus T{\restriction}_\VV - 
S{\restriction}_\VV = \{ \m{f}(\m{b}) \}$ satisfy
$S' >_\ackbo^\mul T'$ as $\m{f}(\m{a}) >_\ackbo \m{f}(\m{b})$,
so that case~3(a) of Definition~\ref{def:ackbo} applies.
All other rules are oriented from left to right by both $>_\KVC$ and
$>_\ackbo$, and they enforce a precedence
and weight function which are identical (or very similar) to the one
given above.  Since $>_\KVC$ orients the rule
$\m{f}(\m{a})+\m{g}(\m{b}) \to \m{f}(\m{b})+\m{g}(\m{a})$ from right
to left, $\RR_3$ cannot be compatible with $>_\KVC$.
It is easy to see that the rule $\m{g}(x)+y \to \m{g}(x+y)$ requires $+ >
\m{g}$, and hence $>_\steinbach$ cannot be applied.
\end{example}

\begin{figure}[tb]
\centering
\begin{tikzpicture}[baseline=(A),xscale=1.4,yscale=0.8]
\node (A) at (6,4.5) {};
\draw[semithick,rounded corners] (1,2)
 node[anchor=south west]{$>_\KVC$} rectangle (6,6);
\draw[semithick,rounded corners] (3.5,0)
 node[anchor=south west]{$>_\ackbo$} rectangle (8.5,4);
\draw[semithick] (6,2) ellipse (1.5cm and 1.25cm);
\draw (6,.75) node[anchor=south]{$>_\steinbach$};
\draw (6.7,1.75) node[anchor=south]{$\stackrel{\raisebox{1mm}{$\RR_1$}}%
 {\scriptstyle\bullet}$};
\draw (2.25,3.7) node[anchor=south]{$\stackrel{\raisebox{1mm}{$\RR_2$}}%
 {\scriptstyle\bullet}$};
\draw (8,1.75) node[anchor=south]{$\stackrel{\raisebox{1mm}{$\RR_3$}}%
 {\scriptstyle\bullet}$};
\draw (6.2,4.1) node[anchor=south west]{
\begin{minipage}{5cm}
\begin{tabular}{ll}
$\RR_1$ &
Example~\ref{example steinbach} (and \ref{KV does not subsume S})
\\[.5ex]
$\RR_2$ &
Example~\ref{KV does not subsume S}
\\[.5ex]
$\RR_3$ &
Example~\ref{AC-KBO vs KVC}
\end{tabular}
\end{minipage}
};
\end{tikzpicture}
\caption{Comparison.}
\label{comparison}
\end{figure}
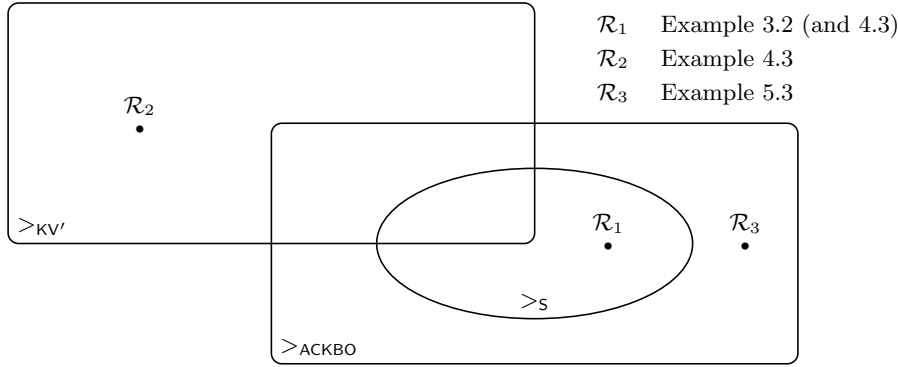
Fig.~\ref{comparison} summarizes the relationships between the orders
introduced so far.
In the following, we show that $>_\ackbo$ is an AC-compatible 
simplification order.
As a consequence, correctness of $>_\steinbach$
(i.e., Theorem~\ref{thm:Steinbach}) is concluded by
Theorem~\ref{S vs ACKBO}.

In the online appendix we prove the following property.
\begin{lemma}
\label{lem:ackbo order pair}
The pair $({=_\AC},{>_\ackbo})$ is an order pair.
\end{lemma}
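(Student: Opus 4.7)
The plan is to verify in turn the three defining properties of an order pair: that $=_\AC$ is a preorder (immediate, since $=_\AC$ is a congruence), that $>_\ackbo$ is irreflexive and transitive, and that ${=_\AC} \cdot {>_\ackbo} \cdot {=_\AC} \subseteq {>_\ackbo}$. The latter two are to be proved by a simultaneous induction on term size, after establishing a short preparatory preservation lemma.

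The preparatory lemma states that $=_\AC$ preserves every quantity inspected by Definition~\ref{def:ackbo}. A direct structural induction on the derivation of $s =_\AC s'$ yields: $w(s) = w(s')$, $|s|_x = |s'|_x$ for all $x \in \VV$, and either $s,s'$ share a non-AC root with pairwise $=_\AC$-equivalent arguments, or they share an AC root $f$ and $\tf{f}(s) = \tf{f}(s')$ as multisets. In particular, the partitions of a top-flattening into its variable, $\nless$-rooted, and $<$-rooted parts depend only on the $=_\AC$-class, so every ingredient of Definition~\ref{def:ackbo} is $=_\AC$-invariant.

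With this preservation lemma in hand, compatibility becomes essentially mechanical. Assuming $s =_\AC s' >_\ackbo t' =_\AC t$, I would inspect whichever clause of Definition~\ref{def:ackbo} witnesses $s' >_\ackbo t'$ and show that the same clause witnesses $s >_\ackbo t$: the variable and weight conditions transfer by the preservation lemma; cases~1 and~2 match root symbols and argument tuples (the latter via the induction hypothesis and Theorem~\ref{thm:order pair} lifted to the lexicographic extension, using the simultaneous compatibility statement to thread intermediate $=_\AC$-steps through the $\lex$-comparison); and in case~3 the top-flattenings used on each side are literally equal, so each of 3(a), 3(b), 3(c) remains valid verbatim. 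Irreflexivity runs in parallel by induction: the variable and weight conditions forbid $s >_\ackbo s$ unless the weights coincide, and each of clauses 0--3 is then excluded using Theorem~\ref{thm:order pair} applied to the multiset and lexicographic extensions.

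For transitivity the variable conditions compose by summation and the weights by transitivity on $\Nat$, reducing the task to the equal-weight subcase, and then to a case analysis on the clauses applied to $s >_\ackbo t$ and $t >_\ackbo u$. Clauses~0--2 compose by transitivity of $>$, the induction hypothesis, and Theorem~\ref{thm:order pair} applied to $\lex$. The main obstacle, and the bulk of the work, is case~3, which splits into nine mixed subcases over 3(a), 3(b), 3(c). Each subcase is to be closed by a finite multiset calculation on the top-flattenings $\tf{f}(s)$, $\tf{f}(t)$, $\tf{f}(u)$, relying on (i) Theorem~\ref{thm:order pair} lifted to $\mul$ to compose $>_\ackbo^\mul$ with itself and with $=_\AC^\mul$, (ii) the $=_\AC$-invariance of top-flattenings from the preservation lemma, and (iii) the simultaneous compatibility statement, which lets us insert $=_\AC$-steps inside multiset comparisons. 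The cardinality bookkeeping in clauses 3(b) and 3(c) is what keeps these nine combinations routine rather than delicate: whenever an intermediate step strictly enlarges the multiset, the conclusion lands in 3(a) or 3(b); the only case chaining clause 3(c) with itself requires the inductive hypothesis on the strictly smaller $\rrs{\cdot}{<}$-projections.
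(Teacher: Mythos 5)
Your proposal is correct in substance but follows a genuinely different route from the paper's. You run a single simultaneous induction on term size, establishing irreflexivity, transitivity and AC-compatibility of $>_\ackbo$ together, supported by a preservation lemma for $=_\AC$ and an explicit nine-way case analysis for chaining two applications of case~3. The paper instead decomposes $>_\ackbo$ into a weight/precedence-decreasing part $>_{01}$ (weight decrease or cases~0--1) and size-stratified parts $>_{23,k}$ (cases~2--3 on terms of size at most $k$); it shows $({=_\AC},{>_{01}})$ is an order pair, that $>_{01}$ absorbs compositions with the other parts, that the $\cdot^f$-construction maps order pairs to order pairs (Lemma~\ref{lem:f-extension}), and then glues the strata with a generic chain lemma. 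The modular route pays off because Lemma~\ref{lem:f-extension} and the suborder machinery are reused verbatim for $>_\KVC$, and the stratification makes the appeals to Theorem~\ref{thm:order pair} clean; your route is self-contained but must carry the compatibility statement through the induction and check the mixed weight/clause compositions (e.g.\ case~1 followed by case~3, or the case-0 interactions forced by admissibility) by hand. Two points need tightening. First, for $s =_\AC s'$ with AC root $f$ the top-flattenings are \emph{not} literally equal as multisets --- AC steps may occur inside the elements, or at the root of an element headed by a different AC symbol --- they agree only up to $=_\AC$ on elements, so cases~3(a,b,c) transfer not ``verbatim'' but via compatibility of the multiset extensions with $=_\AC^\mul$ (Theorem~\ref{thm:order pair}), which your simultaneous induction hypothesis does supply. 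Second, transitivity and compatibility of the relations $\GT^f$ themselves, with the bookkeeping term $\rrs{T}{\nless} \uplus T{\restriction}_\VV - S{\restriction}_\VV$, is a genuine (if short) calculation that the paper isolates as Lemma~\ref{lem:f-extension}; it should not be left implicit inside ``a finite multiset calculation''.
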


The subterm property is an easy consequence of transitivity and
admissibility.

\begin{lemma}
\label{lem:subterm}
The order $>_\ackbo$ has the subterm property.
\qed
\end{lemma}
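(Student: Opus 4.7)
The plan is to prove the stronger statement $s = f(s_1,\dots,s_n) >_\ackbo s_i$ for every $i$, by induction on the size of $s$. The variable condition $|s|_x \geqslant |s_i|_x$ holds trivially, so the work lies in the weight clause and, when weights coincide, in pointing to an applicable clause of Definition~\ref{def:ackbo}.

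The computation
\[
w(s) \;=\; w(f) + \sum_{j=1}^{n} w(s_j) \;\geqslant\; w(s_i) + w(f) + (n-1)\,w_0,
\]
combined with $w_0 > 0$, shows that whenever $n \geqslant 2$, or $n = 1$ with $w(f) > 0$, we already have $w(s) > w(s_i)$, so $s >_\ackbo s_i$ holds by the strict weight alternative. The only remaining configuration is $s = f(s_1)$ with $w(f) = 0$, where $w(s) = w(s_1)$; here admissibility does the crucial work, forcing $f > g$ for every $g \neq f$, and in particular $f \notin \FF_\AC$ since $f$ is unary and AC symbols are binary.

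I would then split on the shape of $s_1$. If $s_1 \in \VV$, clause~0 applies with $k = 1$. If $s_1 = g(\vec{u})$ with $g \neq f$, clause~1 applies via $f > g$. If $s_1 = f(u)$, clause~2 is used: the required lexicographic comparison $(s_1) >_\ackbo^\lex (u)$ reduces to $s_1 >_\ackbo u$, which the induction hypothesis supplies since $s_1$ is strictly smaller than $s$.

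The argument is essentially mechanical; the only delicate observation is that admissibility is precisely the hypothesis needed to handle a unary weight-zero symbol, while transitivity from Lemma~\ref{lem:ackbo order pair} enters only implicitly through the induction hypothesis feeding into clause~2.
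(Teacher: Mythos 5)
Your proof is correct and is precisely the argument the paper has in mind: the paper omits the proof entirely, remarking only that the subterm property is ``an easy consequence of transitivity and admissibility,'' and your case split (strict weight decrease unless $f$ is unary of weight zero, in which case admissibility forces $f > g$ for all $g \neq f$ and clauses~0, 1, 2 of Definition~\ref{def:ackbo} dispose of the three possible shapes of $s_1$) is the standard KBO subterm argument carried over verbatim, with the induction hypothesis feeding clause~2 in place of an explicit appeal to transitivity. One cosmetic remark: reducing $(s_1) >_\ackbo^\lex (u)$ to $s_1 >_\ackbo u$ requires reading the bound in Definition~\ref{lex and mul} as $0 \leqslant k < n$; as literally printed ($1 \leqslant k < n$) the strict lexicographic extension would be empty on $1$-tuples, which is evidently a typo in the paper rather than a gap in your proof.
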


Next we prove that $>_\ackbo$ is closed under contexts.
The following lemma is an auxiliary result needed for its proof.
In order to reuse this lemma for the correctness proof of $>_\KVC$
in the online appendix,
we prove it in an abstract setting.

\begin{lemma}
\label{lem:tf}
Let $({\GS},{\GT})$ be an order pair and $f \in \FF_\AC$ with
$f(u,v) \GT u, v$ for all terms $u$ and $v$. If $s \GS t$ then
$\{ s \} \GS^\mul \tf{f}(t)$ or $\{ s \} \GT^\mul \tf{f}(t)$.
If $s \GT t$ then $\{ s \} \GT^\mul \tf{f}(t)$.
\end{lemma}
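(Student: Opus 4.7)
The plan is to proceed by structural induction on $t$, proving both statements simultaneously. Before starting the induction, I would record a simple observation that makes the multiset extensions manageable: since $|\{s\}| = 1$, the only admissible parameter for $\{s\} \GT^\mul M$ is $k = 0$, so this relation reduces to ``$s \GT u$ for every $u \in M$''. For $\{s\} \GS^\mul M$ one additionally allows $k = 1$, which only fires when $M = \{u\}$ is itself a singleton with $s \GS u$.

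In the base case $\rt(t) \neq f$ (which subsumes $t \in \VV$) we have $\tf{f}(t) = \{t\}$, so both claims reduce immediately to the respective hypothesis $s \GS t$ or $s \GT t$.

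In the inductive step $t = f(t_1,t_2)$ and $\tf{f}(t) = \tf{f}(t_1) \uplus \tf{f}(t_2)$. The hypothesis $f(u,v) \GT u,v$ gives $t \GT t_i$ for $i \in \{1,2\}$. Combining this with either $s \GS t$ or $s \GT t$, one obtains $s \GT t_i$ in both cases: in the former via the order pair compatibility ${\GS} \cdot {\GT} \cdot {\GS} \subseteq {\GT}$ together with reflexivity of $\GS$, and in the latter via transitivity of $\GT$. The strict half of the induction hypothesis, applied to $s \GT t_i$, then gives $\{s\} \GT^\mul \tf{f}(t_i)$, which by the opening observation is just ``$s \GT u$ for every $u \in \tf{f}(t_i)$''. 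Taking the union over $i = 1,2$ yields the same condition for every $u \in \tf{f}(t)$, hence $\{s\} \GT^\mul \tf{f}(t)$.

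Notice that in the inductive case we always produce the strict conclusion $\{s\} \GT^\mul \tf{f}(t)$, even from the weaker hypothesis $s \GS t$, which is exactly what lets the two halves of the lemma be handled in a single induction; the $\GS^\mul$ disjunct is only ever needed to handle the base case. I do not expect a genuine obstacle; the only point worth double-checking is the passage from $s \GS t$ and $t \GT t_i$ to $s \GT t_i$, where reflexivity of $\GS$ and the compatibility axiom of the order pair $(\GS,\GT)$ are both essential.
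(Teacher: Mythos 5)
Your proof is correct and follows essentially the same route as the paper's: both reduce to the observation that $t \GT t_j$ for every member $t_j$ of $\tf{f}(t)$ (you via explicit structural induction, the paper by ``recursively applying the assumption''), then pass to $s \GT t_j$ by transitivity of $\GT$ or compatibility of the order pair, and conclude $\{s\} \GT^\mul \tf{f}(t)$ from the singleton characterization of the multiset extension. Your explicit unpacking of what $\{s\} \GT^\mul M$ and $\{s\} \GS^\mul M$ mean for a singleton left-hand side is a correct reading of Definition~\ref{lex and mul} that the paper leaves implicit.
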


\begin{proof}
Let $\tf{f}(t) = \{ \seq[m]{t} \}$.
If $m = 1$ then $\tf{f}(t) = \{ t \}$ and the lemma holds trivially.
Otherwise we get $t \GT t_j$ for all $1 \leqslant j \leqslant m$ by
recursively applying the assumption. Hence $s \GT t_j$
by the transitivity of $\GT$ or the compatibility of $\GT$ and $\GS$.
We conclude that $\{ s \} \GT^\mul \tf{f}(t)$.
\end{proof}

In the following proof of closure under contexts, admissibility is
essential. This is in contrast to the corresponding result for
standard KBO.

\begin{lemma}
\label{lem:ackbo monotone}
If $(w,w_0)$ is admissible for $>$ then $>_\ackbo$ is closed under
contexts.
\end{lemma}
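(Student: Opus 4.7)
The plan is to reduce closure under contexts to the one-hole case and then to perform a case analysis on the derivation of $s >_\ackbo t$. By induction on contexts it suffices to show that $s >_\ackbo t$ implies $s' >_\ackbo t'$ for $s' = f(u_1,\ldots,u_{i-1},s,u_{i+1},\ldots,u_n)$ and $t' = f(u_1,\ldots,u_{i-1},t,u_{i+1},\ldots,u_n)$. The variable condition transfers immediately, and $w(s') - w(t') = w(s) - w(t) \geqslant 0$, so we may assume $w(s') = w(t')$ (otherwise the strict weight inequality closes the case). If $f \notin \FF_\AC$, case~2 of Definition~\ref{def:ackbo} applies: the argument tuples agree except at position $i$, where $s >_\ackbo t$, which supplies the lexicographic step.

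Suppose now $f \in \FF_\AC$; write $s' = f(s, u)$ and $t' = f(t, u)$ (the other orientation being symmetric). Set $K = \tf{f}(u)$, $S_0 = \tf{f}(s)$, $T_0 = \tf{f}(t)$, so that $S = \tf{f}(s') = K \uplus S_0$ and $T = \tf{f}(t') = K \uplus T_0$. The four ingredients $\rrs{\cdot}{\nless}$, $\cdot{\restriction}_\VV$, $|\cdot|$, and $\rrs{\cdot}{<}$ all split additively over $\uplus$, and multiset order as well as $=_\AC^\mul$ are cancellative under disjoint union; a short verification shows that conditions (a), (b), (c) of case~3 of Definition~\ref{def:ackbo} hold for $S$ versus $T$ iff they hold for $S_0$ versus $T_0$. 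It therefore suffices to prove the sub-claim: if $s >_\ackbo t$ and $w(s) = w(t)$, then the pair $(S_0, T_0)$ satisfies one of (a), (b), (c).

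The sub-claim is verified by case analysis on the rule of Definition~\ref{def:ackbo} underlying $s >_\ackbo t$. If case~3 is applied with $\rt(s) = \rt(t) = f$, the sub-claim holds by construction, since $S_0$ and $T_0$ are precisely the multisets compared in the derivation. Otherwise $\tf{f}(s) = \{s\}$ or $\tf{f}(t) = \{t\}$ (or both), and we split further according to whether the relevant roots are $<f$ in the precedence. When $\rt(s) \nless f$, so $s \in \rrs{S_0}{\nless}$, condition (a) follows from $s >_\ackbo t$; if in addition $\rt(t) = f$, then Lemma~\ref{lem:tf} (applied via the subterm property from Lemma~\ref{lem:subterm}) together with transitivity shows that $s$ dominates every element of $\tf{f}(t)$. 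When all relevant roots lie below $f$, the $\rrs{\cdot}{\nless}$ multisets vanish and condition (c) applies through $\rrs{S_0}{<} >_\ackbo^\mul \rrs{T_0}{<}$. The remaining mixed cases, in which $\rrs{S_0}{\nless}$ happens to be empty while $|\tf{f}(s)| > |\tf{f}(t)|$, are caught by (b).

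The hard step is case~0 of the derivation, where $s = g^k(t)$ for a unary symbol $g$ with $w(g) = 0$ and $t \in \VV$. Then $t$ contributes to $T_0{\restriction}_\VV$ but not to $\rrs{T_0}{\nless}$, so the only way to satisfy (a) is to place $s$ into $\rrs{S_0}{\nless}$, which requires $g \nless f$. Admissibility supplies this, forcing $g > h$ for every $h \neq g$ and in particular $g > f$, since $f$ is binary and hence distinct from the unary $g$. This is the single step where admissibility is indispensable, and it explains precisely the hypothesis of the lemma (in contrast to standard KBO, where closure under contexts holds without admissibility).
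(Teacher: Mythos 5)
Your proof is correct and follows essentially the same route as the paper's: reduce to a one-hole AC context, use closure of the multiset extension under disjoint union to cancel the common part $K$, show that one of cases~3(a,b,c) already holds for $\tf{f}(s)$ versus $\tf{f}(t)$ by a case analysis (yours organized by the deriving rule, the paper's by comparing the root symbols with the context symbol — the same cases in a different order), and invoke admissibility precisely for the $t \in \VV$ situation of case~0. The only cosmetic difference is that the paper additionally routes the "dominant root" case through Lemma~\ref{lem:tf} exactly as you do, so nothing of substance is added or missing.
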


\begin{proof}
Suppose $s >_\ackbo t$. We consider the context $h(\Box,u)$ with
$h \in \FF_\AC$ and $u$ an arbitrary term, and prove that
$s' = h(s,u) >_\ackbo h(t,u) = t'$. Closure under contexts of
$>_\ackbo$ follows
then by induction; contexts rooted by a non-AC symbol are handled as in
the proof for standard KBO.

If $w(s) > w(t)$ then obviously $w(s') > w(t')$. So we assume $w(s) =
w(t)$. Let $S = \tf{h}(s)$, $T = \tf{h}(t)$, and $U = \tf{h}(u)$.  Note
that $\tf{h}(s') = S \uplus U$ and $\tf{h}(t') = T \uplus U$.  
Because $>_\ackbo^\mul$ is closed under multiset sum, it suffices
to show that one of the cases~3(a,b,c) of
Definition~\ref{def:ackbo} holds for $S$ and $T$. Let $f = \rt(s)$ and
$g = \rt(t)$. We distinguish the following cases.
\begin{itemize}
\item
Suppose $f \nleqslant h$. We have
$S = \rrs[h]{S}{\nless} = \{ s \}$,
and from Lemmata~\ref{lem:subterm} and~\ref{lem:tf} we obtain
$S >_\ackbo^\mul T$. Since $T$ is a superset of
$\rrs[h]{T}{\nless} \uplus T{\restriction}_\VV - S{\restriction}_\VV$,
3(a) applies.
\item\smallskip
Suppose $f = h > g$. We have
$\rrs[h]{T}{\nless} \uplus T{\restriction}_\VV = \varnothing$.
If $\rrs[h]{S}{\nless} \neq \varnothing$, then
3(a) applies. Otherwise,
since AC symbols are binary and $T = \{ t \}$,
$|S| \geqslant 2 > 1 = |T|$. Hence 3(b) applies.
\item\smallskip
If $f = g = h$ then $s >_\ackbo t$ must be derived by one of the
cases~3(a,b,c) for $S$ and $T$.
\item\smallskip
Suppose $f, g < h$. We have 
$\rrs[h]{S}{\nless} = \rrs[h]{T}{\nless} \uplus T{\restriction}_\VV =
\varnothing$, $|S| = |T| = 1$, and
$\rrs[h]{S}{<} = \{ s \} >_\ackbo^\mul \{ t \} = \rrs[h]{T}{<}$.
Hence~3(c) holds.
\end{itemize}
Note that $f \geqslant g$ since $w(s) = w(t)$ and $s >_\ackbo t$.
Moreover, if $t \in \VV$ then $s = f^k(t)$ for some $k > 0$ with
$w(f) = 0$, which entails $f > h$ due to
the admissibility assumption.
\end{proof}

Closure under substitutions is the trickiest part since
by substituting AC-rooted terms for variables that appear in the
top\hyp flattening of a term, the structure of the term changes.
In the proof, the multisets $\{ t \in T \mid t \notin \VV \}$,
$\{ t\sigma \mid t \in T \}$, and $\{ \tf{f}(t) \mid t \in T \}$ are
denoted by $T{\restriction}_\FF$, $T\sigma$, and $\tf{f}(T)$,
respectively.

\begin{lemma}
\label{lem:f-key}
Let $>$ be a precedence, $f \in \FF_\AC$, and $({\GS},{\GT})$ an order pair
on terms such that $\GS$ and $\GT$ are closed under substitutions and
$f(x,y) \GT x, y$.
Consider terms $s$ and $t$ such that $S = \tf{f}(s)$,
$T = \tf{f}(t)$, $S' = \tf{f}(s\sigma)$, and $T' = \tf{f}(t\sigma)$.
\begin{enumerate}
\item
If $S \GT^f T$ then $S' \GT^f T'$.
\item
If $S \GS^f T$ then $S' \GT^f T'$ or $S' \GS^f T'$. In the
latter case
$|S| - |T| \leqslant |S'| - |T'|$ and
$\rrs{S'}{<} \GT^\mul \rrs{T'}{<}$ whenever
$\rrs{S}{<} \GT^\mul \rrs{T}{<}$.
\end{enumerate}
\end{lemma}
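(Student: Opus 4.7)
The plan is to track how substitution and top-flattening transform the relevant multisets. A non-variable $t$ with $\rt(t) \neq f$ contributes the single element $t\sigma$ of unchanged root to $\tf{f}(s\sigma)$, while a variable $x \in S{\restriction}_\VV$ contributes $\tf{f}(x\sigma)$, whose elements split into variables, non-variables with root $\not< f$, and non-variables with root $< f$. Lemma~\ref{lem:tf} is the workhorse: from $t\sigma \GT x\sigma$ it gives $t\sigma \GT u$ for every $u \in \tf{f}(x\sigma)$, and from $t\sigma \GS x\sigma$ it yields $\{t\sigma\} \GS^\mul \tf{f}(x\sigma)$ or $\{t\sigma\} \GT^\mul \tf{f}(x\sigma)$. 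Throughout, common contributions from $S{\restriction}_\VV \cap T{\restriction}_\VV$ cancel on both sides of the multiset comparisons.

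For~(1), I take a $\sqsupset_k^\mul$ witness for $\rrs{S}{\nless} \GT^\mul (\rrs{T}{\nless} \uplus T{\restriction}_\VV - S{\restriction}_\VV)$: an ordering providing $k < |\rrs{S}{\nless}|$ matched pairs with $\GS$ together with dominators supplying $\GT$ for the remainder. Applying $\sigma$, matched non-variable pairs $t \GS t'$ lift to $t\sigma \GS t'\sigma$ by closure; matches $t \GS x$ with $x \in T{\restriction}_\VV - S{\restriction}_\VV$ lift via Lemma~\ref{lem:tf} to $\{t\sigma\} \GS^\mul \tf{f}(x\sigma)$ or the stronger $\GT^\mul$; and $\GT$-dominators $t \GT y$ lift to $\{t\sigma\} \GT^\mul \tf{f}(y\sigma)$. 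After cancelling the common $S{\restriction}_\VV \cap T{\restriction}_\VV$ contributions from both sides and absorbing the new non-variable root-$\not<f$ pieces arising from $S{\restriction}_\VV - T{\restriction}_\VV$ on the left (which only helps), one assembles a $\sqsupset_k^\mul$ witness for $\rrs{S'}{\nless} \GT^\mul (\rrs{T'}{\nless} \uplus T'{\restriction}_\VV - S'{\restriction}_\VV)$, that is $S' \GT^f T'$.

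For~(2), if $S \GT^f T$ the conclusion follows from~(1). Otherwise $S \GS^f T$ holds non-strictly, which by Definition~\ref{lex and mul} forces a bijective $\GS$-matching $\phi$ between $\rrs{S}{\nless}$ and $\rrs{T}{\nless} \uplus (T{\restriction}_\VV - S{\restriction}_\VV)$. I then distinguish two subcases. A strict obstruction arises if any of the following holds: (i) some $x \in T{\restriction}_\VV - S{\restriction}_\VV$ satisfies $|\tf{f}(x\sigma)| > 1$ or $\rt(x\sigma) < f$; (ii) some $x \in S{\restriction}_\VV - T{\restriction}_\VV$ produces a non-variable piece of $\tf{f}(x\sigma)$ with root $\not< f$; or (iii) $\tf{f}(x\sigma)$ for some $x \in T{\restriction}_\VV - S{\restriction}_\VV$ and $\tf{f}(y\sigma)$ for some $y \in S{\restriction}_\VV - T{\restriction}_\VV$ share a variable. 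In the first alternative of (i), Lemma~\ref{lem:tf} upgrades the matched $\GS$-pair to a $\GT^\mul$-domination; in the other cases the substituted left-hand side carries a cardinality surplus with no right-hand counterpart, and the $\sqsupset_k^\mul$ witness can be chosen with $k < |\rrs{S'}{\nless}|$. In either event $S' \GT^f T'$. In the remaining subcase (no obstruction), the bijective $\GS$-matching transfers piecewise under substitution, delivering $S' \GS^f T'$ non-strictly.

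The two extra properties are verified in that non-strict subcase by direct bookkeeping. A count yields $(|S'|-|T'|) - (|S|-|T|) = \sum_{x \in S{\restriction}_\VV - T{\restriction}_\VV}(|\tf{f}(x\sigma)|-1) - \sum_{x \in T{\restriction}_\VV - S{\restriction}_\VV}(|\tf{f}(x\sigma)|-1)$, and the second sum vanishes by absence of obstruction~(i), so the difference is nonnegative. For $\rrs{S'}{<} \GT^\mul \rrs{T'}{<}$ given $\rrs{S}{<} \GT^\mul \rrs{T}{<}$: closure of $\GT^\mul$ under $\sigma$ gives $\{t\sigma \mid t \in \rrs{S}{<}\} \GT^\mul \{t\sigma \mid t \in \rrs{T}{<}\}$; the common $S{\restriction}_\VV \cap T{\restriction}_\VV$ contributions cancel on both sides, the right-hand contribution from $T{\restriction}_\VV - S{\restriction}_\VV$ to $\rrs{T'}{<}$ is empty by absence of obstruction~(i), and augmenting the left with the remaining root-$<f$ pieces from $S{\restriction}_\VV - T{\restriction}_\VV$ preserves $\GT^\mul$ via the monotonicity property $A \GT^\mul B \Rightarrow A \uplus C \GT^\mul B$. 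I expect the main obstacle to be the strict-subcase analysis: confirming for each of the three obstructions that the resulting comparison is genuinely $\GT^\mul$ rather than merely $\GS^\mul$ requires a careful combination of Lemma~\ref{lem:tf} with the cardinality arithmetic in the $\sqsupset_k^\mul$ witness.
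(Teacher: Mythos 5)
Your proof is correct and follows essentially the same route as the paper's: both decompose $S'$ and $T'$ into the substituted non-variable parts and the flattened images of the variable parts, use Lemma~\ref{lem:tf} to control the latter, and isolate the same critical condition (no variable in $T{\restriction}_\VV - S{\restriction}_\VV$ instantiated to an $f$-rooted or ${<}f$-rooted term) for the cardinality and $\rrs{\cdot}{<}$ bookkeeping. The only difference is presentational: you assemble explicit $\sqsupset_k^\mul$ witnesses and enumerate obstructions forward (your obstructions (ii) and (iii) are harmless but not actually needed, since part~2 only requires the conditions on $T{\restriction}_\VV - S{\restriction}_\VV$), whereas the paper runs a single chain of multiset (in)equalities parameterized by $\REL$ and extracts those conditions contrapositively from $S' \NGT^f T'$.
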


\begin{proof*}
Let $v$ be an arbitrary term. By the assumption on $\GT$ we have
either $\{ v \} = \tf{f}(v)$ or both $\{ v \} \GT^\mul \tf{f}(v)$ and 
$1 < |\tf{f}(v)|$. Hence, for any set $V$ of terms,
either $V = \tf{f}(V)$ or both $V \GT^\mul \tf{f}(V)$ and
$|V| < |\tf{f}(V)|$. Moreover, for $V = \tf{f}(v)$,
the following equalities hold:
\begin{xalignat*}{2}
\rrs{\tf{f}(v\sigma)}{\nless} & = 
\rrs{V}{\nless}\sigma \uplus
\rrs{\tf{f}(V{\restriction}_\VV\sigma)}{\nless}
&
\tf{f}(v\sigma){\restriction}_\VV & = 
\tf{f}(V{\restriction}_\VV\sigma){\restriction}_\VV
\end{xalignat*}
To prove the lemma, assume $S \REL^f T$
for ${\REL} \in \{ {\GS}, {\GT} \}$. We have
\(
\rrs{S}{\nless} \REL^\mul \rrs{T}{\nless} \uplus U
\)
where $U = (T - S){\restriction}_\VV$.
Since multiset extensions preserve closure under substitutions,
\(
\rrs{S}{\nless}\sigma\, \REL^\mul\, 
\rrs{T}{\nless}\sigma \uplus U\sigma
\)
follows. Using the above (in)equalities, we obtain
\begin{xalignat*}{2}
\rrs{S'}{\nless}
&=^{\phantom{\mul}}
\rrs{S}{\nless}\sigma
\uplus \rrs{\tf{f}(S{\restriction}_\VV\sigma)}{\nless}
\\
&\REL^\mul 
\rrs{T}{\nless}\sigma
\uplus \rrs{\tf{f}(S{\restriction}_\VV\sigma)}{\nless}
\uplus U\sigma
\\
&
\mathrel{O}^{\phantom{\mul}}
\rrs{T}{\nless}\sigma
\uplus \rrs{\tf{f}(S{\restriction}_\VV\sigma)}{\nless}
\uplus \tf{f}(U\sigma)
\\
&=^{\phantom{\mul}}
\rrs{T}{\nless}\sigma
\uplus \rrs{\tf{f}(S{\restriction}_\VV\sigma)}{\nless}
\uplus \tf{f}(U\sigma){\restriction}_\VV
\uplus \rrs{\tf{f}(U\sigma)}{\nless}
\uplus \rrs{\tf{f}(U\sigma)}{<}
\\
&\mathrel{P}^{\phantom{\mul}}
\rrs{T}{\nless}\sigma
\uplus \rrs{\tf{f}(T{\restriction}_\VV\sigma)}{\nless}
\uplus \tf{f}(U\sigma){\restriction}_\VV
\\
&=^{\phantom{\mul}}
\rrs{T}{\nless}\sigma
\uplus \rrs{\tf{f}(T{\restriction}_\VV\sigma)}{\nless}
\uplus \tf{f}(T{\restriction}_\VV\sigma){\restriction}_\VV
- \tf{f}(S{\restriction}_\VV\sigma){\restriction}_\VV
\\
&=^{\phantom{\mul}}
\rrs{T'}{\nless} \uplus T'{\restriction}_\VV - S'{\restriction}_\VV 
\end{xalignat*}
Here $O$ denotes $=$ if $U\sigma = \tf{f}(U\sigma)$ and $\GT^\mul$ if
$|U\sigma| < |\tf{f}(U\sigma)|$, while $P$ denotes $=$ if
$\rrs{U\sigma}{<} = \varnothing$ and $\supsetneq$ otherwise.
Since $({\GS^\mul},{\GT^\mul})$ is an order pair with
${\supseteq} \subseteq {\GS^\mul}$ and
${\supsetneq} \subseteq {\GT^\mul}$, we obtain $S' \REL^f T'$.

It remains to show 2. If $S' \NGT^f T'$ then $O$ and $P$ are both $=$
and thus $U\sigma = \tf{f}(U\sigma)$ and
$\rrs{U\sigma}{<} = \varnothing$.
Let $X = S{\restriction}_\VV \cap T{\restriction}_\VV$.
We have $U = T{\restriction}_\VV - X$.
\begin{itemize}
\item
Since $|W{\restriction}_\FF\sigma| = |W{\restriction}_\FF|$
and $|W| \leqslant |\tf{f}(W)|$ for an arbitrary set $W$ of terms,
we have $|S'| \geqslant |S| - |X| + |\tf{f}(X \sigma)|$.
From $|U\sigma| = |U| = |T{\restriction}_\VV| - |X|$ we obtain
\[
|T'| = |T{\restriction}_\FF \sigma| + |\tf{f}(U\sigma)| +
|\tf{f}(X \sigma)| = |T| - |X| + |\tf{f}(X \sigma)|
\]
Hence $|S| - |T| \leqslant |S'| - |T'|$ as desired.
\smallskip
\item
Suppose
$\rrs{S}{<} \GT^\mul \rrs{T}{<}$.
From $\rrs{U\sigma}{<} = \varnothing$ we infer
$\rrs{T{\restriction}_\VV\sigma}{<} \subseteq 
\rrs{S{\restriction}_\VV\sigma}{<}$.
Because 
\(
\rrs{S'}{<} = \rrs{S}{<}\sigma \uplus \rrs{S{\restriction}_\VV\sigma}{<}
\)
and 
\(
\rrs{T'}{<} = \rrs{T}{<}\sigma \uplus \rrs{T{\restriction}_\VV\sigma}{<}
\),
closure under substitutions of $\GT^\mul$ (which it inherits from 
$\GT$ and $\GS$) yields the desired
$\rrs{S'}{<} \GT^\mul \rrs{T'}{<}$.
\qed
\end{itemize}
\end{proof*}

\begin{lemma}
\label{lem:ackbo stable}
$>_\ackbo$ is closed under substitutions.
\end{lemma}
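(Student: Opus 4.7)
The plan is to prove the lemma by induction on the size of $s$, showing $s\sigma >_\ackbo t\sigma$ whenever $s >_\ackbo t$. Two preliminary observations come first. The variable condition $|s\sigma|_x \geqslant |t\sigma|_x$ for every $x \in \VV$ follows from $|s|_y \geqslant |t|_y$ by expanding $|u\sigma|_x = \sum_y |u|_y \cdot |y\sigma|_x$. For weights, admissibility together with the fact that $w(u) \geqslant w_0$ for every term $u$ yields the identity
\[
w(s\sigma) - w(t\sigma) \;=\; \bigl(w(s)-w(t)\bigr) + \sum_{x\in\VV} (|s|_x - |t|_x)\bigl(w(x\sigma) - w_0\bigr) \;\geqslant\; w(s) - w(t).
\]
If $w(s\sigma) > w(t\sigma)$ we are done by the weight clause of Definition~\ref{def:ackbo}, so we may assume $w(s\sigma) = w(t\sigma)$, which forces $w(s) = w(t)$.

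Next I would case-split on which clause witnesses $s >_\ackbo t$. Case~0 is immediate from the subterm property (Lemma~\ref{lem:subterm}), since $t\sigma$ is a proper subterm of $s\sigma = f^k(t\sigma)$. Case~1 uses that the roots of $s$ and $t$ are preserved under substitution (they are non-variable terms), so $f > g$ still witnesses the comparison. Case~2 lifts directly via the induction hypothesis: the lexicographic witness index carries over because each $s_i >_\ackbo t_i$ becomes $s_i\sigma >_\ackbo t_i\sigma$ by IH, and each $s_i =_\AC t_i$ becomes $s_i\sigma =_\AC t_i\sigma$ by closure of $=_\AC$ under substitutions.

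The main work is Case~3, for which Lemma~\ref{lem:f-key} was tailored. I would invoke it with $\GS = {=_\AC}$ and $\GT = {>_\ackbo}$: the order-pair hypothesis is supplied by Lemma~\ref{lem:ackbo order pair}, the property $f(x,y) \GT x,y$ is Lemma~\ref{lem:subterm}, closure of $=_\AC$ under substitutions is immediate, and closure of $>_\ackbo$ under substitutions on the terms actually used comes from the induction hypothesis (only comparisons between proper subterms of $s$ and $t$ arise in the proof of Lemma~\ref{lem:f-key}). Subcase~3(a) is then handled by part~(1) of the lemma, giving $S' >_\ackbo^f T'$. Subcase~3(b) is handled by part~(2): either $S' >_\ackbo^f T'$ (reducing to~3(a) for $s\sigma, t\sigma$), or $S' =_\AC^f T'$ together with $|S'|-|T'| \geqslant |S|-|T| > 0$, yielding~3(b). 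Subcase~3(c) is handled likewise by part~(2): either $S' >_\ackbo^f T'$, or $S' =_\AC^f T'$ with $\rrs{S'}{<} >_\ackbo^\mul \rrs{T'}{<}$ and $|S'| \geqslant |T'|$, giving either~3(b) (if $|S'|>|T'|$) or~3(c) (if $|S'|=|T'|$).

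The apparent obstacle is a circularity in Case~3: Lemma~\ref{lem:f-key} assumes $\GT$ closed under substitutions, which is exactly what we are proving. This dissolves upon inspection of the lemma's proof, where every appeal to substitution closure concerns comparisons between proper subterms of $s$ and $t$; these are covered by the outer induction hypothesis. The remaining subtlety is bookkeeping in~3(c), ensuring that the cardinality inequality $|S'| \geqslant |T'|$ provided by Lemma~\ref{lem:f-key}(2) is correctly dispatched either to subcase~3(b) or to~3(c) of Definition~\ref{def:ackbo}.
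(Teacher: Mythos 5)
Your proof is correct and follows essentially the same route as the paper: induction on the term, dispatching cases 0--2 as for standard KBO and resolving case~3 via the two parts of Lemma~\ref{lem:f-key}, with the cardinality bound $|S'|-|T'|\geqslant|S|-|T|$ deciding between subcases 3(b) and 3(c) after substitution. Your explicit resolution of the apparent circularity (Lemma~\ref{lem:f-key} assumes closure under substitutions, but only ever applies it to members of the top-flattenings, i.e.\ proper subterms covered by the induction hypothesis) is a point the paper leaves implicit, and is handled correctly.
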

\begin{proof}
If $s >_\ackbo t$ is obtained by cases~0 or 1
in
Definition~\ref{def:ackbo}, the proof for standard KBO goes through.
If 3(a) or 3(b)
is used to obtain $s >_\ackbo t$,
according to Lemma~\ref{lem:f-key} one of these cases also applies to
$s\sigma >_\ackbo t\sigma$.
The final case is 3(c). So $\tf{f}(s){\restriction}_f^<
>_\ackbo^\mul \tf{f}(t){\restriction}_f^<$. Suppose 
$s\sigma >_\ackbo t\sigma$ cannot be obtained by
3(a) or 3(b). Lemma~\ref{lem:f-key}(2) yields
$|{\tf{f}(s\sigma)}| = |{\tf{f}(t\sigma)}|$ and
$\tf{f}(s\sigma){\restriction}_f^< >_\ackbo^\mul 
\tf{f}(t\sigma){\restriction}_f^<$.
Hence case~3(c) is applicable to obtain $s\sigma >_\ackbo t\sigma$.
\end{proof}

We arrive at the main theorem of this section.

\begin{theorem}
\label{thm:ackbo}
The order $>_\ackbo$ is an AC-compatible simplification order.
\qed
\end{theorem}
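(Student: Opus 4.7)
The plan is simply to assemble the four preceding lemmas, since all the substantive work has already been carried out. An AC-compatible simplification order must be (i) a strict order, (ii) closed under contexts, (iii) closed under substitutions, (iv) satisfy the subterm property, and (v) satisfy ${=_\AC} \cdot {\GT} \cdot {=_\AC} \subseteq {\GT}$. Each of these is furnished by a preceding result.

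First I would invoke Lemma~\ref{lem:ackbo order pair}, which states that $({=_\AC},{>_\ackbo})$ is an order pair. Unpacking this gives both irreflexivity and transitivity of $>_\ackbo$ (so it is indeed a strict order) and, because $=_\AC$ is reflexive, the compatibility condition ${=_\AC} \cdot {>_\ackbo} \cdot {=_\AC} \subseteq {>_\ackbo}$ required for AC-compatibility. Next, Lemma~\ref{lem:subterm} delivers the subterm property directly. Lemma~\ref{lem:ackbo monotone} (which crucially uses the admissibility assumption on the weight function) gives closure under contexts, and Lemma~\ref{lem:ackbo stable} gives closure under substitutions. Combining these yields all the defining properties of an AC-compatible simplification order.

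Since every condition is covered by an earlier lemma, there is no real obstacle at this stage; the theorem is essentially a bookkeeping corollary. Any difficulty has already been absorbed into the proofs of the individual lemmas, in particular Lemma~\ref{lem:ackbo stable}, whose proof relies on the abstract auxiliary Lemma~\ref{lem:f-key} to handle the way substitutions interact with top\hyp flattening at AC-rooted positions.
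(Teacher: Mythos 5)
Your proposal is correct and matches the paper exactly: the theorem carries no separate argument there either, being the immediate combination of Lemma~\ref{lem:ackbo order pair} (strictness and AC-compatibility), Lemma~\ref{lem:subterm} (subterm property), Lemma~\ref{lem:ackbo monotone} (closure under contexts), and Lemma~\ref{lem:ackbo stable} (closure under substitutions). Your reading of the order-pair compatibility condition as directly yielding ${=_\AC} \cdot {>_\ackbo} \cdot {=_\AC} \subseteq {>_\ackbo}$ is also the intended one.
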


Since we deal with finite non-variadic signatures, simplification orders
are well-founded.
%
%
The following example shows that AC-KBO is not \emph{incremental}, i.e.,
orientability is not necessarily preserved when the precedence is
extended.
This is in contrast to the
AC-RPO of \citeN{R02}.
However, this is not necessarily a disadvantage;
actually, the example shows that by allowing partial precedences more
TRSs can be proved to be AC terminating using AC-KBO.


\begin{example}
\label{example partiality}
Consider the TRS $\RR$ consisting of the rules
\begin{xalignat*}{2}
\m{a} \circ (\m{b} \bullet \m{c})
&\to \m{b} \circ \m{f}(\m{a} \bullet \m{c}) &
\m{a} \bullet (\m{b} \circ \m{c})
&\to \m{b} \bullet \m{f}(\m{a} \circ \m{c})
\end{xalignat*}
over the signature
$\FF = \{ \m{a}, \m{b}, \m{c}, \m{f}, {\circ}, {\bullet} \}$
with ${\circ}, {\bullet} \in \FF_\AC$. By taking
the precedence $\m{f} > \m{a}, \m{b}, \m{c}, {\circ}, {\bullet}$
and admissible weight function $(w,w_0)$ with
\begin{align*}
w(\m{f}) &= w({\circ}) = w({\bullet}) = 0 &
w_0 &= w(\m{a}) = w(\m{c}) = 1 &
w(\m{b}) &= 2
\end{align*}
the resulting $>_\ackbo$ orients both rules from left to right.
It is essential that $\circ$ and $\bullet$ are incomparable in the
precedence:
We must have $w(\m{f}) = 0$, so
$\m{f} > \m{a}, \m{b}, \m{c}, {\circ}, {\bullet}$ is enforced by
admissibility. If ${\circ} > {\bullet}$ then the first rule can only
be oriented from left to right if
$\m{a} >_\ackbo \m{f}(\m{a} \bullet \m{c})$ holds, which contradicts
the subterm property. If ${\bullet} > {\circ}$ then we use the
second rule to obtain the impossible
$\m{a} >_\ackbo \m{f}(\m{a} \circ \m{c})$.
Similarly, $\RR$ is also orientable by $>_\KVC$ but we must
adopt a non-total precedence.
\end{example}

The easy proof of the final theorem in this section can be found
in the online appendix.

\begin{theorem}\label{thm:total}
If $>$ is total then $>_\ackbo$ is AC-total on ground terms.
\end{theorem}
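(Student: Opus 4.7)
The plan is to prove the theorem by strong induction on $|s| + |t|$, showing that for all ground terms $s, t$ one of $s >_\ackbo t$, $t >_\ackbo s$, or $s =_\AC t$ holds. Since the terms are ground, the variable conditions in Definition~\ref{def:ackbo} are trivially satisfied. If $w(s) \neq w(t)$ then one of the strict comparisons is immediate. Otherwise $w(s) = w(t)$ and we compare the roots $f = \rt(s)$ and $g = \rt(t)$ using totality of $>$: if $f \neq g$, case~1 of Definition~\ref{def:ackbo} decides in the appropriate direction, and if $f = g \notin \FF_\AC$, the induction hypothesis applied at each argument position yields the trichotomy required by the lexicographic comparison of case~2, producing $s >_\ackbo t$, $t >_\ackbo s$, or $s =_\AC t$.

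The principal case, and the only real obstacle, is $f = g \in \FF_\AC$. Let $S = \tf{f}(s)$ and $T = \tf{f}(t)$. Groundness gives $S{\restriction}_\VV = T{\restriction}_\VV = \varnothing$, so the conditions $S >_\ackbo^f T$ and $S =_\AC^f T$ reduce to plain multiset comparisons of $\rrs{S}{\nless}$ and $\rrs{T}{\nless}$. Since top-flattening strips $f$-rooted terms and $>$ is total, we obtain the disjoint decompositions $S = \rrs{S}{\nless} \uplus \rrs{S}{<}$ and $T = \rrs{T}{\nless} \uplus \rrs{T}{<}$. Every element of $S$ or $T$ is a proper subterm of $s$ or $t$ respectively, so the induction hypothesis applies to any such pair; combining this with the standard fact that multiset extensions preserve AC-totality (via Theorem~\ref{thm:order pair} and Lemma~\ref{lem:ackbo order pair}) yields the trichotomy $\rrs{S}{\nless} >_\ackbo^\mul \rrs{T}{\nless}$, $\rrs{T}{\nless} >_\ackbo^\mul \rrs{S}{\nless}$, or $\rrs{S}{\nless} =_\AC^\mul \rrs{T}{\nless}$.

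In the first two cases, case~3(a) of Definition~\ref{def:ackbo} delivers the desired strict comparison in the appropriate direction. In the third case $S =_\AC^f T$ holds; if $|S| \neq |T|$ then case~3(b) decides in favour of the longer flattening. If additionally $|S| = |T|$, the same induction-plus-multiset-AC-totality argument applied to $\rrs{S}{<}$ and $\rrs{T}{<}$ either triggers case~3(c) in one direction, or yields $\rrs{S}{<} =_\AC^\mul \rrs{T}{<}$. In the latter subcase the two component equalities combine to $S =_\AC^\mul T$, and since two $f$-rooted ground terms whose top-flattenings agree modulo componentwise $=_\AC$ are themselves AC-equivalent (by associativity and commutativity of $f$), we conclude $s =_\AC t$, completing the case analysis.
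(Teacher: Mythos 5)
Your proof is correct and follows essentially the same route as the paper's: induction on term size, reduction of the AC case to the trichotomy for $\rrs{S}{\nless}$ versus $\rrs{T}{\nless}$ (using that groundness removes the variable multisets), then cases 3(a), 3(b), 3(c) in turn, and finally recovering $s =_\AC t$ from $S =_\AC^\mul T$ via associativity and commutativity of $f$. The only cosmetic difference is the induction measure ($|s|+|t|$ instead of a common size bound on both terms), which changes nothing.
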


\section{Complexity}
\label{complexity}

In this section we discuss complexity issues for the orders
defined in the preceding sections. We start with the membership
problem: Given two terms $s$ and $t$, a weight function, and a precedence,
does $s > t$ hold? For plain KBO this problem is known to be
decidable in linear time~\cite{L06b}. For
$>_{\steinbach}$, $>_\KV$, and $>_\ackbo$ we show the problem to be 
decidable in polynomial time, but we start with the unexpected result that
$>_\KVC$ membership is NP-complete. For NP-hardness we use the
reduction technique of \citeN[Theorem~4.2]{TAN12}.



\begin{theorem}
\label{thm:KVC NP-hard}
The decision problem for $>_\KVC$ is NP-complete.
\end{theorem}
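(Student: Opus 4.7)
The plan is to establish NP-completeness by proving NP-membership and NP-hardness separately.

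For NP-membership, I will describe a polynomial-size nondeterministic certificate. Because every recursive invocation of $>_\KVC$ in Definition~\ref{def:KV'} is made on pairs of strict subterms (the elements of $\tf{f}(t)$ are strict subterms of $t$), there are only polynomially many distinct subproblems of the form $s' >_\KVC t'$ in terms of $|s|+|t|$. The certificate records, for each such subproblem, (i) which of the cases~0, 1, 2, or 3(a,b,c) of Definition~\ref{def:KV'} is invoked, and (ii) when case~3 applies, specific matchings witnessing the required multiset comparisons: $>_\Wt^f$ in 3(a); $\geqslant_\kvc^f$ in 3(b); and both $\geqslant_\kvc^f$ and $>_\KVC^\mul$ in 3(c). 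Weight computations, precedence queries, and verification of individual matchings are all polynomial, so a nondeterministic verifier runs in polynomial time.

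For NP-hardness, I will adapt the reduction technique from \cite[Theorem~4.2]{TAN12}, which shows NP-hardness of multiset-based comparisons in the presence of nontrivial underlying preorders. Concretely, one reduces from a classical NP-complete problem by constructing two terms $s$ and $t$ over a signature equipped with a binary AC symbol. The precedence and weight function are chosen so that $w(s) = w(t)$ and cases~0--2 of Definition~\ref{def:KV'} are inapplicable, forcing the comparison into case~3. The top-flattenings $\tf{f}(s)$ and $\tf{f}(t)$ are then engineered to encode the items of the source instance, and the freedom in selecting matchings for $>_\Wt^f$ or $\geqslant_\kvc^f$---in particular the permissive clause ``$t \in \VV$'' in the definition of $\geqslant_\kvc$---is used to encode the combinatorial choice of the source problem, so that a yes-witness to the source instance corresponds to a valid matching witnessing $s >_\KVC t$, and conversely.

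The main obstacle is the NP-hardness reduction: the encoding must guarantee that the only essential freedom in deciding $s >_\KVC t$ resides in the multiset matching choices, so that no alternative route through cases 3(a), 3(b), or 3(c) can accidentally collapse the decision back to polynomial time. The NP-membership direction is comparatively routine, but one must be careful that the certificate at each recursive level succinctly records all matching choices needed to verify the various multiset extension conditions without blowing up.
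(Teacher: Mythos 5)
Your NP-membership argument is essentially the paper's (which also only sketches this half: guess, for each comparison, the applicable case together with the matchings witnessing the multiset conditions), and your observation that all recursive calls are on pairs of subterms, so the certificate stays polynomial, is sound. The NP-hardness half, however, is a plan rather than a proof: you invoke the technique of \cite{TAN12} and assert that a reduction from ``a classical NP-complete problem'' can be engineered, but you name no source problem, give no gadget, and verify no correspondence --- indeed you flag the reduction yourself as ``the main obstacle.'' That obstacle is precisely the content of the theorem, so the hardness direction has a genuine gap.

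For comparison, the paper first reduces the term problem to the multiset problem $S >_\kvc^{\mul} T$ (the multiset extension of $>_\Wt$ with $\geqslant_\kvc$ as preorder) by joining the elements of $S$ and $T$ under an AC symbol with padding constants chosen so that cases~3(b) and~3(c) cannot apply, and then reduces CNF-SAT to that multiset comparison: for $\phi$ over $\seq{x}$ with clauses $\seq[m]{C}$ it takes $S_\phi = \{ t_1^+, t_1^-, \dots, t_n^+, t_n^- \}$ and $T_\phi = \{ \seq{t}, \seq[m]{y} \}$, where $t_i^+$ (resp.\ $t_i^-$) contains the fresh variable $y_j$ exactly when $x_i$ (resp.\ $\neg x_i$) occurs in $C_j$. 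The variable conditions built into $\geqslant_\kvc$ and $>_\Wt$ force each $t_i$ to be matched by $t_i^+$ or $t_i^-$ (encoding the truth value of $x_i$) and force the leftover term to actually contain $y_j$ in order to dominate it, which encodes satisfaction of $C_j$. Note that the hardness therefore rests on these variable-occurrence constraints rather than on the ``$t \in \VV$'' clause of $\geqslant_\kvc$ that you single out; the structural reason the polynomial multiset algorithm of Lemma~\ref{thm:P} is unavailable is that $\geqslant_\kvc \setminus {>_\Wt}$ fails to be symmetric. A completed proof would have to supply a gadget of this kind and check both directions of the equivalence.
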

\begin{proof}
We start with NP-hardness.
It is sufficient to show NP-hardness of deciding
$S >_\kvc^{\mul} T$
since we can easily construct terms $s$ and $t$
such that $S >_\kvc^{\mul} T$ if and only if $s >_\KVC t$.
To wit, for $S = \{ \seq{s} \}$ and $T = \{ \seq[m]{t} \}$
we introduce an AC symbol $\circ$ and constants $\m{c}$ and
$\m{d}$ such that $\circ > \m{c}, \m{d}$ and define
\begin{align*}
s &= s_1 \circ \dots \circ s_n \circ \m{c} &
t &= t_1 \circ \dots \circ t_m \circ \m{d} \circ \m{d}
\end{align*}
The weights of $\m{c}$ and $\m{d}$ should be chosen so that $w(s) = w(t)$.
If $S >_\kvc^\mul T$ then case~3(a) applies for $s >_\KVC t$.
Otherwise, $S \geqslant_\kvc^\mul T$ implies $n = m$ and thus
$|\tf{\circ}(s)| < |\tf{\circ}(t)|$.
Hence neither case~3(b) nor 3(c) applies.

We reduce a non-empty CNF SAT problem $\phi = \{ C_1, \dots, C_m \}$ over
propositional variables $\seq{x}$ to the decision problem
$S_\phi >_\kvc^\mul T_\phi$. The multisets
$S_\phi$ and $T_\phi$ will consist of terms in
$\TT(\{ \m{a}, \m{f} \},\{ \seq{x},\seq[m]{y} \} )$,
where $\m{a}$ is a constant with $w(\m{a}) = w_0$ and $\m{f}$ has arity
$m+1$. For each $1 \leqslant j \leqslant m$ and literal $l$, we define
\[
s_j(l) = \begin{cases}
y_j & \text{if $l \in C_j$} \\
\m{a} & \text{otherwise}
\end{cases}
\]
Moreover, for each $1 \leqslant i \leqslant n$ we define
\begin{align*}
t_i^+ &= \m{f}(x_i, s_1(x_i), \dots, s_m(x_i)) &
t_i^- &= \m{f}(x_i, s_1(\neg x_i), \dots, s_m(\neg x_i))
\end{align*}
and $t_i = \m{f}(x_i, \m{a}, \dots, \m{a})$.
Note that $w(t_i^+) = w(t_i^-) = w(t_i) > w(y_j)$ for all
$1 \leqslant i \leqslant n$ and $1 \leqslant j \leqslant m$.
Finally, we define
\begin{xalignat*}{2}
S_\phi &= \{ t_1^+, t_1^-, \dots, t_n^+, t_n^- \} &
T_\phi &= \{ \seq{t}, \seq[m]{y} \}
\end{xalignat*}
Note that for every $1 \leqslant i \leqslant n$ there is no
$s \in S_\phi$ such that $s >_\Wt t_i$.
Hence $S_\phi >_\kvc^\mul T_\phi$ if and only if
$S_\phi$ can be written as $\{ \seq{s}, \seq{s'} \}$
such that $s_i \geqslant_\kvc t_i$ for all
$1 \leqslant i \leqslant n$,
and for all $1 \leqslant j \leqslant m$ there exists
an $1 \leqslant i \leqslant n$ such that $s'_i >_\Wt y_j$.
It is easy to see that the only candidates for $s_i$ are $t_i^+$ and
$t_i^-$.

Now suppose $S_\phi >_\kvc^\mul T_\phi$
with $S_\phi$ written as above.
Consider the assignment $\alpha$ defined as follows:
$\alpha(x_i)$ is true if and only if $s_i = t_i^-$. We claim
that $\alpha$ satisfies every $C_j \in \phi$.
We know that there exists $1 \leqslant i \leqslant n$ such that
$s'_i >_\Wt y_j$ and thus also
$y_j \in \Var(s'_i)$. This is only
possible if $x_i \in C_j$ (when $s'_i = t_i^+$)
or $\neg x_i \in C_j$ (when $s'_i = t_i^-$). Hence, by construction
of $\alpha$, $\alpha$ satisfies $C_j$.

Conversely, suppose $\alpha$ satisfies $\phi$. Let $s'_i = t_i^+$ and
$s_i = t_i^-$ if $\alpha(x_i)$ is true and $s'_i = t_i^-$ and
$s_i = t_i^+$ if $\alpha(x_i)$ is false. We trivially have
$s_i \geqslant_\kvc t_i$ for all
$1 \leqslant i \leqslant n$.
Moreover, for each $1 \leqslant j \leqslant m$, $C_j$ contains a literal
$l = (\neg) x_i$ such that $\alpha(l)$ is true.
By construction, $y_j \in \Var(s'_i)$ and thus $s'_i >_\Wt y_j$.
Since $\phi$ is non-empty, $m > 0$ and hence $S_\phi >_\kvc^\mul T_\phi$
as desired.

To obtain NP-completeness we need to show membership in NP, which is
easy; one just guesses how the terms in
the various multisets relate to each other in order to satisfy
the multiset comparisons in the definition of $>_\KVC$.
\end{proof}

Next we show that the complexity of deciding $>_\KV$
and $>_\ackbo$ for given weights and precedence is
decidable in polynomial time.
Given a sequence $S = \seq{s}$ and an index $1 \leqslant i \leqslant n$, we
denote by $S[t]_i$ the sequence
obtained by replacing $s_i$ with $t$ in $S$, and
by $S[\,]_i$ the sequence obtained by removing $s_i$ from $S$.
Moreover, we write $\{ S \}$ as a shorthand for the multiset 
$\{ \seq{s} \}$.

\begin{lemma}
\label{lem:equivalence}
Let $(\GS,\GT)$ be an order pair 
such that
${\sim} := {\GS} \setminus {\GT}$ is symmetric. 
If $s \sim t$ then
$M \mathrel{\uplus} \{ s \} \GT^\mul N \mathrel{\uplus} \{ t \}$
and $M \GT^\mul N$ are equivalent.
\end{lemma}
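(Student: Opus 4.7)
My plan is to prove the two implications separately, using the symmetry of ${\sim} = {\GS} \setminus {\GT}$ to obtain both $s \GS t$ and $t \GS s$ from the assumption $s \sim t$. The backward direction is immediate: if $M \sqsupset_k^\mul N$ is witnessed by some sequences, simply enlarge the weak (i.e.\ $\GS$-matched) block by the pair $(s,t)$, using $s \GS t$. The new index bound $k+1 \leqslant \min(m, n+1)$ follows directly from $k \leqslant \min(m-1,n)$, and the strict block is untouched.

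For the forward direction, I would fix a witness $M \uplus \{s\} \sqsupset_k^\mul N \uplus \{t\}$ and do a case analysis on where $s$ sits in the left sequence and where $t$ sits in the right sequence --- each can be in the weak block of size $k$ or in the strict block. The main tools are transitivity of $\GS$ and the compatibility law $\GS \cdot \GT \cdot \GS \subseteq \GT$, which together with reflexivity of $\GS$ also gives $\GT \cdot \GS \subseteq \GT$ and $\GS \cdot \GT \subseteq \GT$. The five subcases I would handle are: (i) $s$ and $t$ form the same weak pair, so just delete them; (ii) both lie in the weak block but as different pairs $(s, y_j)$ and $(x_{j'}, t)$, and the chain $x_{j'} \GS t \GS s \GS y_j$ creates a new pair $(x_{j'}, y_j)$; (iii) $s$ is weak and $t$ is strict, dominated by some $x_{i_0}$; promote the orphaned $y_j$ (former partner of $s$) to the strict right, noting $x_{i_0} \GT t \GS s \GS y_j$ yields $x_{i_0} \GT y_j$; (iv) symmetric to (iii); (v) both $s = x_i$ and $t = y_{j'_0}$ are strict, and we reassign everything $s$ used to dominate to some other dominator $x_{i_0}$ of $t$, using $x_{i_0} \GT t \GS s \GT y$.

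The main obstacle is book-keeping the size constraint $k' \leqslant \min(m'-1, n')$ for the reduced multisets, and this is also the point where irreflexivity of $\GT$ becomes essential. The delicate case is (v): after removing $s$ the strict left shrinks to $m-k-1$, which must still be at least $1$, i.e.\ a dominator $x_{i_0}$ of $t$ distinct from $x_i = s$ must exist. If the only candidate were $x_i$ itself, then $s \GT t$ combined with $t \GS s$ gives $s \GT s$ by compatibility, contradicting irreflexivity of $\GT$; hence $i_0 \neq i$ is forced. In cases (ii)--(iv) the new weak block has size $k-1$, and the original bound $k \leqslant \min(m-1,n)$ immediately yields $k-1 \leqslant \min(m-2,n-1)$, so the resulting decomposition witnesses $M \sqsupset_{k'}^\mul N$ in all cases.
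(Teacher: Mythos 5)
Your proof is correct and follows essentially the same route as the paper: the backward direction by enlarging the weak block, and the forward direction by a case analysis on whether $s$ and $t$ lie in the weak or strict parts of the witnessing decomposition, re-pairing or re-assigning dominators via compatibility. Your case (v) in fact makes explicit the irreflexivity argument (a dominator of $t$ other than $s$ must exist) that the paper compresses into ``analogous to the previous case.''
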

\begin{proof*}
We only show that
$M \mathrel{\uplus} \{ s \} \GT^\mul N \mathrel{\uplus} \{ t \}$
implies $M \GT^\mul N$, since the other direction is trivial.
So suppose
$M \uplus \{ s \} \sqsupset_k^\mul N \uplus \{ t \}$,
where sequences $S = \seq[m]{s}$ and $T = \seq{t}$ 
satisfy the conditions for $\sqsupset_k^\mul$
in Definition~\ref{lex and mul}.
Because we have $\{S\} = M \uplus \{s\}$ and
$\{T\} = N \uplus \{t\}$, there are indices $i$ and $j$ such that
$s = s_i$ and $t = t_j$.
In order to establish $M \GT^\mul N$ we distinguish four cases.
\begin{itemize}
\item
If $i, j \leqslant k$ then
$s_j \GS t_j = t \sim s = s_i \GS t_i$ and thus
\( \{ S[s_j]_i[\,]_j \} \sqsupset_{k-1}^\mul 
\{ T[\,]_j \}
\).
\smallskip
\item
If $i \leqslant k < j$ then
there exists some $l > k$ such that
$s_l \GT t_j = t \sim s = s_i \GS t_i$.
Therefore,
$\{ S[\,]_i \} \sqsupset_{k-1}^\mul \{ T[t_i]_j[\,]_i \}$.
\smallskip
\item
If $j \leqslant k < i$ then
$s_j \GS t_j = t \sim s = s_i$ and thus
$s_j \GT t_l$ for every $l > k$ such that $s_i \GT t_l$.
Hence $\{ S[s_j]_i[\,]_j \} \sqsupset_{k-1}^\mul \{ T[\,]_j \}$.
\smallskip
\item
The remaining case $k < i, j$ is analogous to the previous
case, and we obtain
$\{ S[\,]_i \} \sqsupset_k^\mul \{ T[\,]_j \}$.
\end{itemize}
Because $\{ S[s_j]_i[\,]_j \} = \{ S[\,]_i \} = M$ and
$\{ T[t_i]_j[\,]_i \} = \{ T[\,]_j \} = N$ hold,
in all cases $M \GT^\mul N$ is concluded.
\qed
\end{proof*}

\begin{lemma}
\label{thm:P}
Let $(\GS,\GT)$ be an order pair such that 
${\sim} := {\GS} \setminus {\GT}$ is symmetric
and the decision problems for $\GS$ and $\GT$ are in P.
Then the decision problem for $\GT^\mul$ is in P.
\end{lemma}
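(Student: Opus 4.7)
My strategy is to reduce $\GT^\mul$-checking to a canonical form by cancelling $\sim$-equivalent cross-pairs, and then to exhibit an explicit polynomial-time check on the reduced multisets. Given $M$ and $N$, I would greedily search for a pair $s \in M$, $t \in N$ with $s \sim t$. By the assumed symmetry of $\sim$, Lemma~\ref{lem:equivalence} applies and guarantees that cancelling such a pair preserves the truth value of $M \GT^\mul N$. Each search inspects $|M| \cdot |N|$ pairs and each test $s \sim t$ amounts to one $\GS$- and one $\GT$-query, both in P by hypothesis; since the total multiset size strictly decreases with each cancellation, the procedure terminates in polynomially many steps, producing reduced multisets $M'$ and $N'$ containing no $\sim$-cross-pair and satisfying $M \GT^\mul N$ iff $M' \GT^\mul N'$.

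\emph{Characterizing the reduced case.} The key claim, which I expect to be the main obstacle, is
\[
M' \GT^\mul N' \quad\Longleftrightarrow\quad M' \neq \varnothing \text{ and every } y \in N' \text{ admits some } x \in M' \text{ with } x \GT y.
\]
The right-hand side is evidently checkable in polynomial time (at most $|M'| \cdot |N'|$ invocations of the $\GT$-oracle). For the ``$\Leftarrow$'' direction, choose $k = 0$ in Definition~\ref{lex and mul}: the constraint $k \leqslant m' - 1$ is met because $M' \neq \varnothing$, and the domination clause for $j > k$ is exactly the hypothesis. The ``$\Rightarrow$'' direction is the subtle one: suppose some $y \in N'$ has no $\GT$-dominator in $M'$ and, for contradiction, that $M' \GT^\mul N'$ holds via orderings $\seq[m']{x}$, $\seq[n']{y}$ and index $k$. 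Letting $j$ be the position at which $y$ occurs in the $N'$-sequence, if $j > k$ the definition demands some $x_i \GT y$ with $i > k$, contradicting the assumption on $y$; if $j \leqslant k$ it demands $x_j \GS y$, which by reducedness (no $\sim$-pair involves $y$) collapses to $x_j \GT y$, again a contradiction.

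Combining the polynomial-time cancellation with the polynomial-time check on the reduced problem gives an overall polynomial-time algorithm for $\GT^\mul$.
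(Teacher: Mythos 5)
Your proposal is correct and follows essentially the same route as the paper's proof: repeatedly cancel $\sim$-related cross-pairs using Lemma~\ref{lem:equivalence}, then decide the reduced instance by checking that every element of the right-hand multiset has a $\GT$-dominator on the left. You are in fact slightly more careful than the paper in spelling out why the domination check is complete on reduced instances (and in noting the $M' \neq \varnothing$ edge case), but the algorithm and its justification are the same.
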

\begin{proof}
Suppose we want to decide whether two multisets $S$ and $T$ satisfy
$S \GT^\mul T$.
We first check if there exists a pair $(s,t) \in S \times T$ such that
$s \sim t$, which can be done by testing $s \GS t$ and $s \NGT t$ at most
$|S| \times |T|$ times. If such a pair is found then according to
Lemma~\ref{lem:equivalence}, the problem is reduced to
$S - \{ s \} \GT^\mul T - \{ t \}$.  
Otherwise, we check for each $t \in T$ whether there exists $s \in S$ such
that $s \GT t$, which can be done by testing $s \GT t$ at most
$|S| \times |T|$ times.
\end{proof}

Using the above lemma, we obtain the following result by a straightforward
induction argument.

\begin{corollary}
The decision problems for $>_\ackbo$, $>_\KV$, and $>_\steinbach$
belong to P.
\qed
\end{corollary}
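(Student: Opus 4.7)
The plan is to proceed by induction on $|s| + |t|$ and show that the decision procedure suggested by the inductive definition of each order runs in polynomial time. Computing the weights $w(s)$ and $w(t)$ and checking the variable condition $|s|_x \geqslant |t|_x$ for all $x \in \VV$ is linear, and the case analysis on $w(s)$ versus $w(t)$ and on the roots is syntactic. Cases~0 and~1 use no recursive call; case~2 unfolds into a lexicographic comparison of the argument tuples, each entry of which is a recursive invocation on strictly smaller inputs, so the inductive hypothesis applies directly.

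The heart of the argument is case~3. After computing $\tf{f}(s)$ and $\tf{f}(t)$ (in time polynomial in $|s|+|t|$) and splitting them along the precedence into the relevant submultisets, every remaining test has the shape of a multiset comparison $\cdot >_\ackbo^\mul \cdot$, $\cdot >_\steinbach^\mul \cdot$, $\cdot >_\KV^\mul \cdot$, or $\cdot >_\Wt^\mul \cdot$. I would apply Lemma~\ref{thm:P} to each such test. Its preconditions require, for the underlying order pair $(\GS,\GT)$, that $\GS\setminus\GT$ be symmetric and that both $\GS$ and $\GT$ be decidable in polynomial time. In every instance the preorder $\GS$ is either $=_\AC$ or $=_\Wt$, both of which are equivalences disjoint from the corresponding strict order; consequently $\GS\setminus\GT = \GS$, which is symmetric. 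The preorders themselves are directly polynomial (AC\hyp equivalence of flattened multisets for $=_\AC$, and a comparison of weights, roots, and variable multiplicities for $=_\Wt$), while the strict components are polynomial by the inductive hypothesis, since every element of $\tf{f}(s)$ and $\tf{f}(t)$ is a proper subterm of $s$ or $t$ and hence of strictly smaller size.

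The main obstacle is the bookkeeping that confirms Lemma~\ref{thm:P} actually applies, namely checking that the relevant pair $(\GS,\GT)$ really is an order pair in the sense of the compatibility condition $\GS \cdot \GT \cdot \GS \subseteq \GT$. For $(=_\AC,>_\ackbo)$ and $(=_\AC,>_\steinbach)$ this is part of AC\hyp compatibility (Theorems~\ref{thm:ackbo} and~\ref{thm:Steinbach}); for $(=_\AC,>_\KV)$ it follows by a direct inspection of Definition~\ref{def:KV}, using that flattening is $=_\AC$\hyp invariant; and for $(=_\Wt,>_\Wt)$ it is immediate from the fact that $s =_\Wt s'$ preserves weight, root, and variable multiplicities. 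Once these verifications are in place, a straightforward induction shows that each top\hyp level call triggers only polynomially many subcalls, each on strictly smaller input or on a preorder instance already shown to be in~P, whence all three decision problems lie in~P.
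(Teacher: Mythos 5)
Your proposal is correct and takes essentially the same route as the paper: the paper's entire proof is the remark that the result follows from Lemma~\ref{thm:P} by a straightforward induction, and your argument is precisely that induction spelled out (with the verification of the preconditions of Lemma~\ref{thm:P} made explicit, which the paper leaves implicit).
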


Next we address the complexity of the important orientability problem:
Given a TRS $\RR$, do there exist a weight function and a precedence such
that the rules of $\RR$ are oriented from left to right with respect
to the order under consideration?
It is well-known~\cite{KV03} that KBO orientability is decidable in
polynomial time. We show that $>_\KV$ and $>_\ackbo$
orientability are
NP-complete even for ground TRSs.
First we show NP-hardness of $>_\KV$ orientability by a
reduction from SAT.

Let $\phi = \{ \seq{C} \}$ be a CNF SAT problem over propositional
variables $\seq[m]{p}$. We consider the signature $\FF_\phi$ consisting of
an AC symbol $+$, constants $\Bot$ and $\seq{\Top}$, and unary function
symbols $\seq[m]{p}$, $\High$, $\Low$, and $\ToP_i^j$ for all
$i \in \{ 1, \dots, n \}$ and $j \in \{ 0, \dots, m \}$.  We define a
ground TRS $\RR_\phi$ on $\TT(\FF_\phi)$ such that $>_\KV$ orients
$\RR_\phi$ if and only if $\phi$ is satisfiable.  The TRS $\RR_\phi$ will
contain the following base system $\RR_0$ that enforces certain constraints
on the precedence and the weight function:
\begin{gather*}
\High(\Bot + \Bot) \to \High(\Bot) + \Bot \qquad
\Low(\Bot) + \Bot \to \Low(\Bot + \Bot) \qquad
\High(\Low(\Low(\Bot))) \to \Low(\High(\High(\Bot))) \\
\High(p_1(\Bot)) \to \Low(p_2(\Bot)) \qquad
\cdots \qquad
\High(p_m(\Bot)) \to \Low(\High(\Bot)) \qquad
\High(\High(\Bot)) \to \Low(p_1(\Bot))
\end{gather*}

\begin{lemma}
\label{lem:base}
The order $>_\KV$ is compatible with $\RR_0$ if and only if
$\High > + > \Low$ and $w(\High) = w(\Low) = w(p_j)$ for all
$1 \leqslant j \leqslant m$.
\qed
\end{lemma}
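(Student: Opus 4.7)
\textbf{Proof plan for Lemma~\ref{lem:base}.} The plan is to analyse each rule of $\RR_0$, extract the necessary constraints on the precedence and the weight function from the $(\Rightarrow)$ direction, and then verify in the $(\Leftarrow)$ direction that those constraints are already sufficient. Since all rules are ground, the variable condition $|s|_x \geqslant |t|_x$ is vacuous throughout, so only the weight/precedence conditions matter and we only ever appeal to case~1 of Definition~\ref{def:KV} (same weight, different roots).

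For the forward direction, first observe that rules~1 and~2 have $w(\text{LHS}) = w(\text{RHS})$ by inspection (both sides are built from two occurrences of $\Bot$, one unary symbol and the AC symbol $+$). Because the roots of the two sides differ and only case~1 is applicable when roots differ, compatibility forces $\High > +$ (from rule~1) and $+ > \Low$ (from rule~2). In particular $\High > \Low$ by transitivity. Next, rule~3 has weight condition $w(\High) + 2w(\Low) \geqslant w(\Low) + 2w(\High)$, which gives $w(\Low) \geqslant w(\High)$.

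The crucial step, which I expect to be the main obstacle and which deserves care, is the global cyclic argument for the remaining $m+1$ rules $\High(f(\Bot)) \to \Low(g(\Bot))$ with $(f,g)$ ranging over $(p_1,p_2), (p_2,p_3), \ldots, (p_{m-1},p_m), (p_m,\High), (\High,p_1)$. Each of these rules requires $w(\High) + w(f) \geqslant w(\Low) + w(g)$. Summing all $m+1$ inequalities, the contributions $\sum w(f)$ and $\sum w(g)$ cancel, since the projections onto the unary-slot symbols on each side are permutations of $\{p_1,\dots,p_m,\High\}$. What remains is $(m+1)\,w(\High) \geqslant (m+1)\,w(\Low)$, hence $w(\High) \geqslant w(\Low)$. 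Combined with the bound from rule~3, we conclude $w(\High) = w(\Low)$. But then the summed inequality above is an equality, so each of the $m+1$ local inequalities must be an equality, giving $w(f) = w(g)$ for each pair in the cycle. Since the pairs link $p_1, p_2, \ldots, p_m, \High$ in a single cycle, all of these symbols share a common weight, and that common weight equals $w(\High) = w(\Low)$.

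For the converse direction, assume $\High > + > \Low$ and $w(\High) = w(\Low) = w(p_j)$ for all $j$. A direct weight computation then shows that every rule of $\RR_0$ satisfies $w(\text{LHS}) = w(\text{RHS})$: rules~1 and~2 by symmetry of the construction, rule~3 because $w(\High) = w(\Low)$, and the cycle rules because all unary symbols involved share the same weight. In each case the root of the LHS is greater than the root of the RHS in the precedence ($\High > +$ for rule~1, $+ > \Low$ for rule~2, and $\High > \Low$ for rule~3 and the cycle rules, the last relation obtained from $\High > + > \Low$ by transitivity of $>$), so case~1 of Definition~\ref{def:KV} applies and the rule is oriented from left to right.
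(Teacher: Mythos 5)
Your proof is correct. The paper states Lemma~\ref{lem:base} without proof (it is marked as immediate), and your argument supplies exactly the intended reasoning: the first two rules force $\High > + > \Low$ via case~1 because their two sides always have equal weight and distinct roots, the third rule yields $w(\Low) \geqslant w(\High)$, and the cyclic family of rules $\High(f(\Bot)) \to \Low(g(\Bot))$ — whose left and right unary-slot symbols each enumerate $\{p_1,\dots,p_m,\High\}$ — sums to $w(\High) \geqslant w(\Low)$ and then collapses to the claimed weight equalities; the converse is a routine application of case~1.
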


Consider the clause $C_i$ of the form
$\{ \seq[k]{p'}, \seq[l]{\neg p''} \}$. Let
$U$, $U'$, $V$, and $W$ denote the following multisets:
\label{encoding}
\begin{xalignat*}{2}
U &= \{ p_1'(\Low(\Top_i)), \dots, p_k'(\Low(\Top_i)) \} &
V &= \{ p_0''(\ToP_i^{0,1}), \dots, p_{l-1}''(\ToP_i^{l-1,l}),
p_l''(\ToP_i^{l,0}) \} \\
U' &= \{ \Low(p_1'(\Top_i)), \dots, \Low(p_k'(\Top_i)) \} &
W  &= \{ p_0''(\ToP_i^{0,0}), \dots, p_l''(\ToP_i^{l,l}) \}
\end{xalignat*}
where we write $p''_0$ for $\High$ and $\ToP_i^{j,k}$ for
$\ToP_i^j(\ToP_i^k(\Bot))$.  
The TRS $\RR_\phi$ is defined as
the union of $\RR_0$ and
$\{ \ell_i \to r_i \mid 1 \leqslant i \leqslant n \}$
with
\begin{xalignat*}{2}
\ell_i & = \Low(\Low(\Bot + \Bot)) +
\textstyle \sum U + \sum V
&
r_i & = \Low(\Bot) + \Low(\Bot) +
\textstyle \sum U' + \sum W
\end{xalignat*}
Note that the symbols $\Top_i$ and $\ToP_i^0, \dots, \ToP_i^l$
are specific to the rule $\ell_i \to r_i$.

\begin{example}
Consider a clause $C_1 = \{ x, \neg y, \neg z \}$.
We have
\begin{align*}
\ell_1 &=
  \Low(\Low(\Bot + \Bot)) +
  x(\Low(\Top_i)) +
  \High(\ToP_1^0(\ToP_1^1(\Bot))) +
  y(\ToP_1^1(\ToP_1^2(\Bot))) +
  z(\ToP_1^2(\ToP_1^0(\Bot))) \\
r_1 &=
  \Low(\Bot) + \Low(\Bot) +
  \Low(x(\Top_i)) +
  \High(\ToP_1^0(\ToP_1^0(\Bot))) +
  y(\ToP_1^1(\ToP_1^1(\Bot))) +
  z(\ToP_1^2(\ToP_1^2(\Bot)))
\end{align*}
Note that $x$, $y$, and $z$ are unary function symbols. We have
$w(\ell_1) = w(r_1)$ for any weight function $w$.
Suppose $\High > + > \Low$ and $w(\High) = w(\Low) = w(x) = w(y) = w(z)$.

We consider a number of cases, depending on the order of
$x$, $y$, $z$, and $+$ in the precedence.
If $x, y, z > +$ (i.e., $x$, $y$, and $z$ are assigned true)
then $\ell_1 >_\KV r_1$ can be satisfied by choosing
$w(\Top_1)$ large enough such that
$w(x(\Low(\Top_1))) > w(t)$ for all $t \in \rrs[+]{\tf{+}(r_1)}{>}$, where
\begin{align*}
\rrs[+]{\tf{+}(\ell_1)}{>} &~=~ \{ x(\Low(\Top_1)),
 \High(\ToP_1^0(\ToP_1^1(\Bot))),
  y(\ToP_1^1(\ToP_1^2(\Bot))),
  z(\ToP_1^2(\ToP_1^0(\Bot))) \} \\
\rrs[+]{\tf{+}(r_1)}{>} &~=~ \{ \phantom{x(\Low(\Top_1)),{}}
 \High(\ToP_1^0(\ToP_1^0(\Bot))),
  y(\ToP_1^1(\ToP_1^1(\Bot))),
  z(\ToP_1^2(\ToP_1^2(\Bot))) \}
\intertext{%
On the other hand, if $y, z > + > x$ (i.e., $x$ is falsified)
then $\ell_1 >_\KV r_1$ is not satisfiable;
no matter how we assign weights to $\ToP_1^0$, $\ToP_1^1$, and $\ToP_1^2$,
a term in $\tf{+}(r_1)$ has the maximum weight, where
}
\rrs[+]{\tf{+}(\ell_1)}{>} &~=~ \{
  \High(\ToP_1^0(\ToP_1^1(\Bot))),
  y(\ToP_1^1(\ToP_1^2(\Bot))),
  z(\ToP_1^2(\ToP_1^0(\Bot))) \} \\
\rrs[+]{\tf{+}(r_1)}{>} &~=~ \{
  \High(\ToP_1^0(\ToP_1^0(\Bot))),
  y(\ToP_1^1(\ToP_1^1(\Bot))),
  z(\ToP_1^2(\ToP_1^2(\Bot))) \}
\intertext{%
However, if $y > + > x, z$ (i.e.\ $z$ is falsified) then
$\ell_1 >_\KV r_1$ can be satisfied by choosing $w(\ToP_1^2)$ large
enough, where
}
\rrs[+]{\tf{+}(\ell_1)}{>} &~=~ \{
  \High(\ToP_1^0(\ToP_1^1(\Bot))),
  y(\ToP_1^1(\ToP_1^2(\Bot))) \} \\
\rrs[+]{\tf{+}(r_1)}{>} &~=~ \{
  \High(\ToP_1^0(\ToP_1^0(\Bot))),
  y(\ToP_1^1(\ToP_1^1(\Bot))) \}
\intertext{%
Similarly, if $+ > x, y, z$ then $\ell_1 >_\KV r_1$ can be satisfied by
choosing $w(\ToP_1^1)$ large enough, where
}
\rrs[+]{\tf{+}(\ell_1)}{>} &~=~ \{ \High(\ToP_1^0(\ToP_1^1(\Bot))) \} \\
\rrs[+]{\tf{+}(r_1)}{>} &~=~ \{ \High(\ToP_1^0(\ToP_1^0(\Bot))) \}
\end{align*}
\end{example}

\begin{lemma}
\label{lem:clause}
Let $\High > + > \Low$. Then, $\RR_\phi \subseteq {>_\KV}$
for some $(w, w_0)$ if and only if for every $i$
there is some $p$ such that $p \in C_i$ with $p \nless +$ or
$\neg p \in C_i$ with $+ > p$.
\end{lemma}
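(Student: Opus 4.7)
The plan is to characterize orientability of each rule $\ell_i \to r_i$ in terms of a multiset comparison on top-flattenings with respect to $+$, and then determine which precedences admit a suitable weight function.

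A symbol count yields $w(\ell_i) = w(r_i)$ for every weight function, and $|\tf{+}(\ell_i)| = k{+}l{+}2 < k{+}l{+}3 = |\tf{+}(r_i)|$, so cases~3(b) and~3(c) of Definition~\ref{def:KV ground} cannot apply and $\ell_i >_\KV r_i$ collapses to case~3(a): $M >_\Wt^\mul N$, where $M = \rrs[+]{\tf{+}(\ell_i)}{\nless}$ and $N = \rrs[+]{\tf{+}(r_i)}{\nless}$. Since $+ > \Low$, every $\Low$-rooted term is discarded, and setting $P = \{0\} \cup \{j \in \{1,\dots,l\} \mid p''_j \nless +\}$ one computes
\begin{align*}
M &= \{p'(\Low(\Top_i)) \mid p' \in C_i,\, p' \nless +\} \uplus \{p''_j(\ToP_i^{j,(j+1)\bmod(l+1)}) \mid j \in P\}, \\
N &= \{p''_j(\ToP_i^{j,j}) \mid j \in P\}.
\end{align*}

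For ``$\Leftarrow$'' the weight function would be built clause by clause. If some $p' \in C_i$ satisfies $p' \nless +$, setting all $w(\ToP_i^j)$ to one common value makes each $p''_j(\ToP_i^{j,(j+1)\bmod}) =_\Wt p''_j(\ToP_i^{j,j})$, so $N$ is matched by a same-sized subset of $M$ via $=_\Wt$, and the excess term $p'(\Low(\Top_i))$ witnesses $M >_\Wt^\mul N$ because $|M| > |N|$. If instead some $\neg p''_{j_0} \in C_i$ satisfies $+ > p''_{j_0}$, then $j_0 \in \{1,\dots,l\} \setminus P$, and omitting the edge $j_0 \to (j_0{+}1)\bmod(l{+}1)$ turns the cyclic chain of indices into an acyclic path containing every edge $j \to (j+1)\bmod$ with $j \in P$. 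Assigning $w(\ToP_i^j)$ strictly increasing along that path secures $w(\ToP_i^{(j+1)\bmod}) \geq w(\ToP_i^j)$ for all $j \in P$ with at least one strict step, producing pointwise $=_\Wt$/$>_\Wt$ dominations from $M$ to $N$.

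The hard part is ``$\Rightarrow$''. Assuming for contradiction that some $C_i$ offers neither witness, $M$ loses its $U$-part and $P = \{0,\dots,l\}$; now $M$ and $N$ each consist of $l+1$ terms with the same multiset of distinct roots $\{\High, p''_1, \dots, p''_l\}$ and equal total weight, yet $M >_\Wt^\mul N$ must be excluded for \emph{every} admissible weight assignment. Because $=_\Wt$ forces matched pairs to share roots, each matched index $j \in J$ imposes $w(\ToP_i^{(j+1)\bmod}) = w(\ToP_i^j)$; the leftover set $\bar J = \{0,\ldots,l\} \setminus J$ must be nonempty, since otherwise no strict pair exists. A short chase along the equal-weight chain produced by $J$ shows that $A^* \cap \bar J \neq \varnothing$, where $A^* = \{j \mid w(\ToP_i^j) \text{ is maximal}\}$; I then select $j^* \in A^* \cap \bar J$ with $p''_{j^*}$ maximal in the precedence among $\{p''_j \mid j \in A^* \cap \bar J\}$. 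The bound $w(\ToP_i^k) + w(\ToP_i^{(k+1)\bmod}) \leq 2\, w(\ToP_i^{j^*})$ forces every candidate dominator $m_k$ of $n_{j^*}$ to satisfy $m_k \leq n_{j^*}$, with equality only if $k,(k{+}1)\bmod \in A^*$; in that case $>_\Wt$ reduces to $p''_k > p''_{j^*}$, contradicting maximality of $j^*$. Hence $n_{j^*}$ is undominated and $M >_\Wt^\mul N$ fails, completing the proof.
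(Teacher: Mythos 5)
Your proof has the same skeleton as the paper's: both reduce orientation of $\ell_i \to r_i$ to case~3(a) (since $|\tf{+}(\ell_i)| < |\tf{+}(r_i)|$ excludes 3(b) and 3(c)), split the ``if'' direction according to which kind of witness the clause provides, and derive the ``only if'' direction by contradiction from the structure of $V$ and $W$. The weight assignments in your ``if'' direction differ in detail: the paper inflates $w(\Top_i)$ (resp.\ $w(\ToP_i^m)$ for the least index $m$ with $+ > p''_m$) so that a single term of $\rrs[+]{L}{\nless}$ dominates all of $\rrs[+]{R}{\nless}$ by weight, whereas you use a uniform assignment plus the cardinality surplus in the first case and strictly increasing weights along the cut cycle in the second; both constructions are valid (in the second case you should say that \emph{every} step along the path is strict, which your increasing assignment does give, rather than ``at least one''). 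In the ``only if'' direction your argument is actually more detailed than the paper's: the paper infers from ``a maximal-weight term of $V$ outweighs every term of $W$'' that all $w(\ToP_i^j)$ coincide, which for $l \geqslant 2$ does not literally follow from that one inequality (weights $5,5,3$ give equal maxima on both sides); your selection of an unmatched index in $A^*\cap\bar J$ whose root is precedence-maximal supplies the missing case analysis and reaches the same contradiction.

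One concrete omission: the lemma asserts $\RR_\phi \subseteq {>_\KV}$, and $\RR_\phi$ contains the base system $\RR_0$, so in the ``if'' direction you must also exhibit a weight function orienting $\RR_0$. By Lemma~\ref{lem:base} this forces $w(\High) = w(\Low) = w(p_j)$ for all $j$; these constraints are compatible with your clause-wise choices (which only touch the rule-specific symbols $\Top_i$ and $\ToP_i^j$), but the step has to be stated, as the paper does at the start of its ``if'' direction.
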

\begin{proof}
For the ``if'' direction we reason as follows.
Consider a (partial) weight function $w$ such that
$w(\High) = w(\Low) = w(p_j)$ for all $1 \leqslant j \leqslant m$.
We obtain $\RR_0 \subseteq {>_\KV}$ from Lemma~\ref{lem:base}.
Furthermore, consider 
$C_i = \{ p'_1,\dots,p'_k, \neg p''_1,\dots, \neg p''_l \}$ and
$\ell_i$, $r_i$, $U$, $V$ and $W$ defined above.
Let $L = \tf{+}(\ell_i)$ and $R = \tf{+}(r_i)$.
We clearly have
$\rrs[+]{L}{\nless} = \rrs[+]{U}{\nless} \cup \rrs[+]{V}{\nless}$ and
$\rrs[+]{R}{\nless} = \rrs[+]{W}{\nless}$.
It is easy to show that $w(\ell_i) = w(r_i)$.
We show $\ell_i >_\KV r_i$ by distinguishing two cases.
\begin{enumerate}
\item
First suppose that
$p'_j \nless +$ for some $1 \leqslant j \leqslant k$.
We have $p'_j(\Low(\Top_i)) \in \rrs[+]{U}{\nless}$.
Extend the weight function $w$ such that
\[
w(\Top_i) = 1 + 2 \cdot \max\,\{ w(\ToP_i^0), \dots, w(\ToP_i^l) \}
\]
Then $p'_j(\Low(\Top_i)) >_\Wt t$ for all terms $t \in W$ and hence
$\rrs[+]{L}{\nless} >_\Wt^\mul \rrs[+]{R}{\nless}$.
Therefore $\ell_i >_\KV r_i$ by case~3(a).
\smallskip
\item
Otherwise, $\rrs[+]{U}{\nless} = \varnothing$ holds. By assumption
$+ > p''_j$ for some $1 \leqslant j \leqslant l$.
Consider the smallest $m$ such that $+ > p''_m$.
Extend the weight function $w$ such that
\[
w(\ToP_i^m) = 1 + 2 \cdot \max\,\{ w(\ToP_i^j) \mid j \neq m \}
\]
Then
$w(p''_{m-1}(\ToP_i^{m-1,m})) > w(p''_j(\ToP_i^{j,j}))$
for all $j \neq m$. From $p''_{m-1} > +$ we infer
$p''_{m-1}(\ToP_i^{m-1,m}) \in \rrs[+]{V}{\nless}$.
(Note that $p_{m-1}'' = \High > +$ if $m = 1$.)
By definition of $m$,
$p''_m(\ToP_i^{m,m}) \notin \rrs[+]{W}{\nless}$.
It follows that
$\rrs[+]{L}{\nless} >_\Wt^\mul \rrs[+]{R}{\nless}$ and thus
$\ell_i >_\KV r_i$ by case~3(a).
\end{enumerate}
Next we prove the ``only if'' direction. So suppose
there exists a weight function $w$ such that
$\RR_\phi \subseteq {>_\KV}$.
We obtain $w(\High) = w(\Low) = w(p_j)$ for all $1 \leqslant j \leqslant m$
from Lemma~\ref{lem:base}. It follows that $w(\ell_i) = w(r_i)$
for every $C_i \in \phi$.
Suppose for a proof by contradiction that
there exists $C_i \in \phi$ such that
$+ > p$ for all $p \in C_i$ and
$p \nless +$ whenever $\neg p \in C_i$.
So $\rrs[+]{L}{\nless} = V$ and $\rrs[+]{R}{\nless} = W$.
Since $|R| = |L| + 1$, we must have $\ell_i >_\KV r_i$ by
case~3(a) and thus $V >_\Wt W$.
Let $s$ be a term in $V$ of maximal weight.
We must have $w(s) \geqslant w(t)$ for all terms $t \in W$. By
construction of the terms in $V$ and $W$, this is only possible if all
symbols $\ToP_i^j$ have the same weight. It follows that all terms
in $V$ and $W$ have the same weight. Since $|V| = |W|$ and
for every term $s' \in V$ there exists a unique term
$t' \in W$ with $\rt(s') = \rt(t')$, we conclude
$V =_\Wt W$, which provides the desired contradiction.
\end{proof}

After these preliminaries we are ready to prove NP-hardness.

\begin{theorem}
\label{KV orientation NP-hard}
The (ground) orientability problem for $>_\KV$ is NP-hard.
\end{theorem}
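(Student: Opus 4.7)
My plan is to finish the reduction from CNF-SAT that has been set up by the preceding lemmas. Given a CNF formula $\phi = \{C_1,\dots,C_n\}$ over propositional variables $p_1,\dots,p_m$, I construct the ground TRS $\RR_\phi = \RR_0 \cup \{\ell_i \to r_i \mid 1 \leqslant i \leqslant n\}$ exactly as on page~\pageref{encoding}. Since the number of terms and symbols needed is linear in the total size of $\phi$, the reduction is clearly polynomial.

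The correctness argument is then a direct composition of the preceding two lemmas. Suppose first that $\RR_\phi \subseteq {>_\KV}$ for some admissible weight function $(w,w_0)$ and precedence $>$. By Lemma~\ref{lem:base}, compatibility with the base system $\RR_0$ forces $\High > + > \Low$ and $w(\High) = w(\Low) = w(p_j)$ for all $j$. With this precedence fixed, Lemma~\ref{lem:clause} applies and yields that for every clause $C_i$ there is either some $p \in C_i$ with $p \nless +$ or some $\neg p \in C_i$ with $+ > p$. I then define the assignment $\alpha$ by $\alpha(p_j) = \mathsf{true}$ iff $p_j \nless +$. By construction $\alpha$ satisfies every clause $C_i$, hence $\phi$ is satisfiable.

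Conversely, if $\alpha$ satisfies $\phi$, I choose a precedence extending $\High > + > \Low$ by putting $p_j > +$ whenever $\alpha(p_j)$ is true and $+ > p_j$ otherwise (and placing the constants and the $\Top_i$, $\ToP_i^j$ symbols appropriately, e.g.\ below $+$ except where required). Weights are set so that $w(\High) = w(\Low) = w(p_j)$ for all $j$ (Lemma~\ref{lem:base} is then satisfied), and for each rule $\ell_i \to r_i$ the weights of the rule-specific symbols $\Top_i$ and $\ToP_i^j$ are chosen to tip the relevant multiset comparison in the ``if'' direction of Lemma~\ref{lem:clause}. Since the literal witnessing satisfiability of $C_i$ supplies exactly the ingredient required by one of the two cases of that lemma, we obtain $\ell_i >_\KV r_i$ for every $i$, and hence $\RR_\phi \subseteq {>_\KV}$.

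The one technical subtlety, which I expect to be the main obstacle if one tried to write this out in painstaking detail, is that the two rule-specific weight adjustments used in the ``if'' direction of Lemma~\ref{lem:clause} must be made \emph{simultaneously} for all rules. This is harmless because the symbols $\Top_i$ and $\ToP_i^0,\dots,\ToP_i^l$ are disjoint across different rules, so weight choices for distinct $i$ do not interfere; thus the per-clause constructions extend to a single global weight function. Combining both directions gives $\phi \in \mathrm{SAT}$ iff $\RR_\phi$ is orientable by $>_\KV$, and since $\RR_\phi$ is ground, the ground orientability problem for $>_\KV$ is NP-hard.
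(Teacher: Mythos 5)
Your proposal is correct and follows essentially the same route as the paper: a polynomial reduction from CNF-SAT via $\RR_\phi$, with Lemma~\ref{lem:base} fixing the precedence and weight constraints and Lemma~\ref{lem:clause} providing the clause-by-clause correspondence in both directions. The extra remark about combining the rule-specific weight choices into one global weight function is a point the paper leaves implicit (it is already absorbed into the statement of Lemma~\ref{lem:clause}), and your justification via disjointness of the $\Top_i$, $\ToP_i^j$ symbols is the right one.
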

\begin{proof}
It is sufficient to prove that a CNF formula $\phi = \{ \seq{C} \}$ is
satisfiable if and only if the corresponding $\RR_\phi$ is orientable by
$>_\KV$.  Note that the size of $\RR_\phi$ is linear in the size of $\phi$.
First suppose that $\phi$ is satisfiable. Let $\alpha$ be a satisfying
assignment for the atoms $\seq[m]{p}$. Define the precedence $>$ as
follows: $\High > + > \Low$ and $p_j > +$ if $\alpha(p_j)$ is true and
$+ > p_j$ if $\alpha(p_j)$ is false.
Then $\RR_\phi \subseteq {>_\KV}$ follows from Lemma~\ref{lem:clause}.
Conversely, if $\RR_\phi$ is compatible with $>_\KV$ then we define
an assignment $\alpha$ for the atoms in $\phi$ as follows:
$\alpha(p)$ is true if $p \nless +$ and
$\alpha(p)$ is false if $+ > p$.
We claim that
$\alpha$ satisfies $\phi$. Let $C_i$ be a clause in $\phi$.
According to Lemma~\ref{lem:clause}, $p \nless +$ for one of
the atoms $p$ in $C_i$ or $+ > p$ for one of the negative
literals $\neg p$ in $C_i$. Hence $\alpha$ satisfies $C_i$ by
definition.
\end{proof}

We can show NP-hardness of $>_\ackbo$ by adapting the above
construction accordingly, as shown in
Appendix~\ref{sec:ACKBO NP-hard}.

\begin{theorem}
\label{ACKBO orientation NP-hard}
The (ground) orientability problem for $>_\ackbo$ is NP-hard.
\qed
\end{theorem}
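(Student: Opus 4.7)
The plan is to mimic the SAT reduction used for Theorem~\ref{KV orientation NP-hard}, adapting both the base TRS $\RR_0$ and the per-clause rules $\ell_i\to r_i$ to the subtle differences between cases~3(a,b,c) of Definitions~\ref{def:KV} and~\ref{def:ackbo}. So given a CNF formula $\phi=\{C_1,\dots,C_n\}$ over propositional variables $p_1,\dots,p_m$, I will build a ground TRS $\RR_\phi'$ over a signature with one AC symbol $+$, constants, and unary symbols $p_1,\dots,p_m,\High,\Low$ together with ``weight-boosting'' auxiliary symbols per clause, such that $\RR_\phi'\subseteq {>_\ackbo}$ for some admissible weight function and precedence iff $\phi$ is satisfiable. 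The polynomial size of $\RR_\phi'$ in $|\phi|$ will be clear from the construction.

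The first step is to build a base subsystem $\RR_0'$, analogous to the $\RR_0$ on p.~\pageref{lem:base}, whose sole purpose is to force $\High>+>\Low$ and $w(\High)=w(\Low)=w(p_j)$ for every $j$. The rules of $\RR_0$ only use standard KBO cases (case~1 or case~3(a) with strict weight inequality at the top), and for all of them the $>_\KV$ and $>_\ackbo$ analyses coincide because comparisons reduce either to weight or to precedence on root symbols; I expect essentially the same rules to do the job for $>_\ackbo$, giving an analogue of Lemma~\ref{lem:base}.

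The second, more delicate step is the clause encoding. The original $\ell_i\to r_i$ were designed so that after top-flattening under $+$, case~3(a) of Definition~\ref{def:KV} discharges $\ell_i >_\KV r_i$ by a comparison $V>_\Wt^\mul W$ whose feasibility depends on choosing one $\ToP_i^j$ heavy. For $>_\ackbo$ the corresponding case~3(a) invokes the recursive multiset order $>_\ackbo^\mul$ rather than $>_\Wt^\mul$, so I must engineer terms in $\rrs[+]{L}{\nless}$ and $\rrs[+]{R}{\nless}$ so that the relevant pairwise comparisons force $\ell_i>_\ackbo r_i$ precisely when the clause $C_i$ admits a satisfying literal under the precedence-induced assignment $\alpha(p):=\mathit{true}\iff p\nless +$. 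Crucially, whenever two candidate terms have equal weight and equal root symbol, the comparison descends inside, so it suffices to let heavy auxiliary symbols appear exactly in the left-hand side at positions deeper than in the right-hand side; then case~3(a) decides by a strict weight inequality one level down, just as in the $>_\KV$ encoding. Cases~3(b,c) can be ruled out by arranging $|\ell_i|\ne|r_i|$ in the flattening (or by inspection of the multisets), analogously to the $|R|=|L|+1$ argument in the proof of Lemma~\ref{lem:clause}.

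The technical obstacle is the clause lemma: I need the precise analogue of Lemma~\ref{lem:clause} stating that, under $\High>+>\Low$ and the weight constraints from $\RR_0'$, $\ell_i>_\ackbo r_i$ holds for some weight extension iff $C_i$ contains a literal $p$ with $p\nless +$ or $\neg p$ with $+>p$. The ``if'' direction proceeds as before, boosting either $w(\Top_i)$ or the appropriate $w(\ToP_i^m)$ so that the distinguished term in $\rrs[+]{L}{\nless}$ strictly outweighs every term in $\rrs[+]{R}{\nless}$, yielding case~3(a). The ``only if'' direction is the delicate part: one must rule out that some pairing equates weights and root symbols across $\rrs[+]{L}{\nless}$ and $\rrs[+]{R}{\nless}$ via case~3(c) recursion. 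This is handled by the same argument as in the $>_\KV$ proof (showing $V=_\AC W$ is the only remaining option, contradicting $|L|\neq |R|$ or the structural shape of the terms). Once this lemma is in place, the reduction from SAT goes through verbatim as in Theorem~\ref{KV orientation NP-hard}, with the assignment extracted from the precedence on $p_j$ versus $+$.
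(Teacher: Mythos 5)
Your overall strategy---reduce from SAT by reusing the $>_\KV$ construction and patch the clause lemma---is the right one, and it matches the paper's approach in outline. But there is a genuine gap exactly at the point you flag as ``the delicate part'' and then dismiss: you claim the ``only if'' direction of the clause lemma ``is handled by the same argument as in the $>_\KV$ proof.'' It is not. The $>_\KV$ argument derives a contradiction from the fact that $>_\Wt$ sees only weights and root symbols: once all terms in $V$ and $W$ have equal weight and matching roots, $V >_\Wt^\mul W$ is impossible and $V =_\Wt^\mul W$ forces failure. For $>_\ackbo$, case~3(a) uses the \emph{recursive} order $>_\ackbo^\mul$, so even when all terms in $V$ and $W$ have equal weight and pairwise matching root symbols, $V >_\ackbo^\mul W$ can still succeed by descending into the arguments and exploiting the precedence among the auxiliary symbols $\ToP_i^j$ (e.g.\ $\High(\ToP_i^{0,1})$ can beat $\High(\ToP_i^{0,0})$ if $\ToP_i^1 > \ToP_i^0$). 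A falsified clause could then still be oriented, and your reduction would not be sound. This is not a failure of cases~3(b,c) ruled out by $|L| \neq |R|$; it is a failure \emph{inside} case~3(a).

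The paper closes this hole not by re-engineering the clause rules but by \emph{adding rewrite rules} to the instance: a system $\RR_0'$ with rules $\High(p_j(\Bot)) \to p_j(\High(\Bot))$ for each $j$ and $\ToP_i^0(\ToP_i^1(\Bot)) \to \ToP_i^1(\ToP_i^0(\Bot))$ for each clause with a negative literal. Orienting these forces $\High > p_j$ and $\ToP_i^0 > \ToP_i^1$, and under these constraints the specific term $\High(\ToP_i^0(\ToP_i^0(\Bot))) \in W$ becomes $>_\ackbo$-greater than every other term in $V \uplus W$, which contradicts $V >_\ackbo^\mul W$ and restores the ``only if'' direction. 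Your proposal never identifies the need for these extra precedence-forcing rules, and your suggested alternative (placing heavy symbols ``deeper'' in the left-hand side) addresses only the ``if'' direction, which was never the problem. Without a concrete mechanism to kill the spurious recursive orientations, the reduction is incomplete.
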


The NP-hardness results of Theorems~\ref{KV orientation NP-hard}
and~\ref{ACKBO orientation NP-hard} can be strengthened to
NP-completeness. This is not entirely trivial because there are
infinitely many different weight functions to consider.

\begin{lemma}
\label{ACKBO in NP}
The orientability problems for $>_\ackbo$ and $>_\KV$ belong to NP.
\end{lemma}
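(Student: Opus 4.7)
The plan is to exhibit, for each TRS $\RR$, a polynomial-size nondeterministic certificate for orientability by $>_\ackbo$; the case of $>_\KV$ is analogous (case~3(c) of Definition~\ref{def:KV} invokes an extra multiset comparison that must be witnessed in the same style). The certificate has two ingredients: a precedence $>$ on $\FF$ and, for each rule $\ell \to r \in \RR$, a \emph{derivation skeleton} $\pi_{\ell \to r}$ recording, for every recursive subcomparison $s >_\ackbo t$ made during the orientation of the rule, (i) whether the step succeeds by strict weight ($w(s) > w(t)$) or by equal weight followed by one of cases~0--3 of Definition~\ref{def:ackbo}, (ii) which of cases~0, 1, 2, 3(a), 3(b), 3(c) is invoked together with the lexicographic index in case~2, and (iii) a matching of the multisets in cases~3(a)--3(c). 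Since each recursive call descends to proper subterms of $\ell$ or $r$ or to terms of their top\hyp flattenings, $\pi_{\ell \to r}$ has at most polynomially many nodes in $|\ell| + |r|$, and each piece of auxiliary data is likewise polynomially bounded.

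The verifier first checks the combinatorial conditions of $\pi_{\ell \to r}$ against $>$: precedence queries ($f > g$, $f \in \FF_\AC$), variable-count inequalities ($|s|_x \geqslant |t|_x$), and the root-symbol parts of $=_\Wt$ and $>_\Wt$. It then translates the remaining numerical conditions into a system $\Sigma$ of linear constraints over the unknown tuple $(w_0, (w(f))_{f \in \FF})$, consisting of (in)equalities of the forms $w(s) > w(t)$, $w(s) = w(t)$, and $w(s) \geqslant w(t)$, together with the positivity and admissibility constraints $w_0 \geqslant 1$, $w(c) \geqslant w_0$ for each constant $c$, and $w(f) \geqslant 1$ for every unary $f$ such that $f \not> g$ for some $g \ne f$. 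Since $w(t)$ is a non-negative integer linear combination of the unknowns whose coefficients are bounded by $|t|$, and the number of constraints is polynomial in $|\RR|$, the system $\Sigma$ has polynomial size.

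It remains to decide feasibility of $\Sigma$ over $\Nat$. After rewriting each strict inequality $w(s) > w(t)$ as $w(s) - w(t) \geqslant 1$, $\Sigma$ becomes a polynomial-size system of weak linear inequalities with integer coefficients. Feasibility over $\mathbb{Q}_{\geqslant 0}$ is decidable in polynomial time by linear programming, and any rational solution can be scaled by the common denominator of its entries to a natural-number solution without violating any of the inequalities, because all coefficients are non-negative. Hence the entire guess-and-check procedure runs in nondeterministic polynomial time. The main obstacle is the design of the derivation skeleton: $\pi_{\ell \to r}$ must commit in advance, independently of the weight function, to the strict-vs-equal weight split at every recursive step and to a specific matching witnessing each multiset extension, so that the disjunctions inherent in Definition~\ref{def:ackbo} and the existential quantifiers hidden in $>_\ackbo^\mul$ collapse into a single conjunction of linear constraints over the weights.
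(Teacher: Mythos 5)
Your proposal is correct and follows essentially the same route as the paper's proof: nondeterministically guess the precedence together with the case choices and the strict-versus-equal weight comparisons at every recursive step, reduce the remaining numerical conditions to a polynomial-size linear program over the unknown weights, decide feasibility over the rationals in polynomial time, and scale a rational solution to an integer one. The only (harmless) difference is presentational — you package the guesses as an explicit derivation skeleton and fold admissibility into the constraint system, whereas the paper checks admissibility and well-foundedness of the precedence after solving.
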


\begin{proof}[Proof (sketch)]
We sketch the proof for $>_\ackbo$.
With minor modifications the result for $>_\KV$ is obtained.
\\\indent
For each rule $\ell \to r$ of a given TRS $\RR$ we guess which choices are
made in the definition of $>_\ackbo$ when evaluating $\ell >_\ackbo r$.
In particular, we do not guess the weight function, but rather the
comparison ($=$ or $>$) of the weights of certain subterms of $\ell$ and
$r$. These comparisons are transformed into constraints on the weight
function by symbolically evaluating the weight expressions.
We add the constraints stemming from the definition of the weight 
function. The resulting
problem is a conjunction of linear constraints over unknowns (the weights
of the function symbols and $w_0$) over the integers.
It is well-known \cite[Section~10.3]{S86} that solving such a
\emph{linear program} over the rationals can be done in polynomial time.
If there is a solution we check
%
%
%
%
the admissibility condition and well-foundedness of the precedence.
(If an integer valued weight function is desired, one can simply multiply
the weights by the least common multiple of their denominators. This
induces the same weight order on terms and does not affect the
admissibility condition.)
\\\indent
Since there are polynomially (in the size of the compared terms) many
choices in the definition of $>_\ackbo$
and each choice can be checked for correctness in polynomial time,
membership in NP follows.
\end{proof}

\begin{corollary}
The orientability problems for $>_\ackbo$ and $>_\KV$ are NP-complete.
\qed
\end{corollary}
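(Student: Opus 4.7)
The plan is to assemble the corollary from the two preceding results. NP-hardness of the ground orientability problem for $>_\KV$ is provided by Theorem~\ref{KV orientation NP-hard}, whose proof constructs, in polynomial time, a ground TRS $\RR_\phi$ from any CNF formula $\phi$ such that $\phi$ is satisfiable if and only if $\RR_\phi$ is orientable by $>_\KV$. The analogous statement for $>_\ackbo$ is Theorem~\ref{ACKBO orientation NP-hard}. Since the ground orientability problem is a restriction of the general one, the same reductions also witness NP-hardness of the unrestricted orientability problems. Importantly, the TRSs produced by the reductions have size polynomial (in fact linear) in $|\phi|$: the signature $\FF_\phi$ contains $O(nm)$ symbols, each rule has size $O(n+m)$, and there are only $O(n)$ rules beyond the fixed base system $\RR_0$.

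For the matching upper bound I would appeal to Lemma~\ref{ACKBO in NP}, which places both problems in NP. Recall the idea: for each rule $\ell \to r$ of a given TRS one nondeterministically guesses polynomially many structural choices in the inductive definition of the order (which subterms to compare, whether weights are equal or strictly greater, which cases of clause~3 apply), translates those guesses into linear (in)equalities over the unknown weights $w_0$ and $w(f)$, and checks the resulting linear program together with the admissibility condition and the well-foundedness of the guessed precedence. Solving linear programs over the rationals lies in P, so the whole verification runs in polynomial time, yielding NP-membership for both $>_\ackbo$ and $>_\KV$.

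Combining NP-hardness and NP-membership for each of the two orders immediately yields NP-completeness, establishing the corollary. No genuine obstacle arises, since both ingredients are already proven; the only point worth briefly remarking on is that the hardness reductions from SAT must be polynomial-time computable, which is a direct consequence of the explicit sizes of $\FF_\phi$ and $\RR_\phi$ noted above, and that ground orientability embeds into general orientability so that NP-hardness transfers upwards without additional work.
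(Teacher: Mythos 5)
Your proof is correct and matches the paper's intent exactly: the corollary is stated with an immediate \qed because it follows directly from combining Theorems~\ref{KV orientation NP-hard} and~\ref{ACKBO orientation NP-hard} (NP-hardness via the SAT reductions) with Lemma~\ref{ACKBO in NP} (membership in NP). Your additional remarks on the polynomial size of $\RR_\phi$ and the transfer of hardness from the ground to the general problem are accurate but not needed beyond what the cited results already provide.
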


The NP-hardness proofs of $>_\KV$ and $>_\ackbo$ orientability given
earlier do not extend to $>_\steinbach$ since the latter requires that
AC symbols are minimal in the precedence.

We conjecture that the orientability problem for $>_\steinbach$
belongs to P.

\section{AC-RPO}
\label{AC-RPO}

In this section we compare AC-KBO with AC-RPO~\cite{R02}.
Since the latter is incremental \cite[Lemma~22]{R02},
we restrict the discussion to total precedences.

\begin{definition}
Let $>$ be a precedence and $t = f(u,v)$ such that $f \in \FF_\AC$ and 
$\tf{f}(t) = \{ \seq{t} \}$. We write
$t \embsm{f} u$ for all terms $u$ such that 
$\tf{f}(u) = \{ t_1,\ldots,t_{i-1},s_j,t_{i+1},\ldots,t_n \}$ for some 
$t_i = g(\seq[m]{s})$ with $f > g$ and $1 \leqslant j \leqslant m$.
\end{definition}

Using previously introduced notations, AC-RPO can be defined as follows.

\begin{definition}
\label{def:acrpo}
Let $>$ be a precedence
and let $\FF \setminus \FF_\AC = \FF_{\mul} \uplus \FF_{\lex}$. 
We define $>_\acrpo$ inductively as follows:
$s >_\acrpo t$ if one of the following conditions holds:
\begin{enumerate}
\setcounter{enumi}{-1}
\item
$s = f(\seq s)$ and $s_i \geqslant_\acrpo t$ for some 
$1 \leqslant i \leqslant n$,
\smallskip
\item
$s = f(\seq{s})$, $t = g(\seq[m]{t})$, $f > g$, and $s >_\acrpo t_j$ for 
all $1 \leqslant j \leqslant m$,
\smallskip
\item
$s = f(\seq{s})$, $t = f(\seq{t})$, $f \notin \FF_\AC$,
$s >_\acrpo t_j$ for all $1 \leqslant j \leqslant n$, and
either
\begin{itemize}
\item[(a)]
$f \in \FF_{\textsf{lex}}$ and $(\seq{s}) >_\acrpo^\lex (\seq{t})$, or
\smallskip
\item[(b)]
$f \in \FF_{\textsf{mul}}$ and
$\{ \seq{s} \} >_\acrpo^\mul \{ \seq{t} \}$,
\end{itemize}
\item
$s = f(s_1,s_2)$, $t = f(t_1,t_2)$, $f \in \FF_\AC$, and 
$s' \geqslant_\acrpo t$ for some $s'$ such that $s \embsm{f} s'$,
\smallskip
\item
$s = f(s_1,s_2)$, $t = f(t_1,t_2)$, $f \in \FF_\AC$, 
$s >_\acrpo t'$ for all $t'$ such that $t \embsm{f} t'$,
and for $S = \tf{f}(s)$ and $T = \tf{f}(t)$
\begin{itemize}
\item[(a)]
$S >_\acrpo^f T$,
\smallskip
\item[(b)]
$S =_\AC^f T$ and $|S| > |T|$, or
\smallskip
\item[(c)]
$S =_\AC^f T$, $|S| = |T|$, and
$\rrs{S}{<} >_\acrpo^{\mul} \rrs{T}{<}$.
\end{itemize}
\end{enumerate}
The relation $=_\AC$ is used as preorder in
$>_\acrpo^\lex$ and $>_\acrpo^\mul$, and as equivalence relation in 
$\geqslant_\acrpo$.
\end{definition}

\begin{example}
Consider the TRS $\RR$ consisting of the rules
\begin{xalignat*}{3}
\m{f}(x) + \m{g}(x) &\to \m{g}(x) + (\m{g}(x) + \m{g}(x)) &
\m{f}(x) &\to \m{g}(x) + \m{a}
\end{xalignat*}
over the signature $\FF = \{ \m{f}, \m{g}, +, \m{a} \}$
with $+ \in \FF_\AC$. Let $\RR'$ be the TRS obtained from $\RR$
by reverting the first rule.
When using AC-RPO with precedence $\m{f} > + > \m{g} > \m{a}$,
both rules in $\RR$ can be oriented from left to right. Since the 
second rule requires $\m{f} > +$ and $\m{f} > \m{g}$, termination of
$\RR'$ cannot be shown with AC-RPO.

In contrast, AC-KBO cannot orient $\RR$ due to the variable condition.
But the precedence $\m{g} > + > \m{f} > \m{a}$ 
and admissible weight function $(w,w_0)$ with
$w(+) = 0$, $w_0 = w(\m{g}) = w(\m{a}) = 1$ and $w(\m{f}) = 3$ allows
the resulting $>_\ackbo$ to orient both rules of $\RR'$.
\end{example}

Case~4 in Definition~\ref{def:acrpo} differs from the original version 
in~\cite{R02} in that we used notions introduced for AC-KBO.
We now recall the original definition and prove the two versions
equivalent in Lemma~\ref{lem:acrpo}.

\begin{definition}
For $S = \{ \seq{s} \}$ let $\#(S) = \#(s_1) + \cdots + \#(s_n)$ where
$\#(s_i) = s_i$ for $s_i \in \VV$ and $\#(s_i) = 1$ otherwise. Then 
$\#(S) > \#(T)$ ($\#(S) \geqslant \#(T)$) is defined via comparison of 
linear polynomials over the positive integers.

Let $>$ be a total precedence.
The order $>_\acrpoo$ is inductively defined as in
Definition~\ref{def:acrpo}, but with case~4 as follows:
\begin{itemize}
\item[4$'$.]
$s = f(s_1,s_2)$, $t = f(t_1,t_2)$, $f \in \FF_\AC$,
$s >_{\acrpoo} t'$ for all $t'$ such that $t \embsm{f} t'$,
$\rrs{S}{>} \uplus S{\restriction}_\VV \geqslant_{\acrpoo}^\mul 
\rrs{T}{>} \uplus T{\restriction}_\VV$ for $S = \tf{f}(s)$ and
$T = \tf{f}(t)$, and
\begin{itemize}
\item[(a)]
\smallskip
$\rrs{S}{>} >_{\acrpoo}^\mul \rrs{T}{>}$, or
\item[(b)]
\smallskip
$\#(S) > \#(T)$, or
\item[(c)]
\smallskip
$\#(S) \geqslant \#(T)$, and $S >_{\acrpoo}^\mul T$.
\end{itemize}
\end{itemize}
\end{definition}

The proof of the following correspondence can be found in the online appendix.
\begin{lemma}
\label{lem:acrpo}
Let $>$ be a total precedence.
We have $s >_\acrpo t$ if and only if $s >_\acrpoo t$. 
\end{lemma}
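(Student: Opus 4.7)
The plan is to proceed by well-founded induction on $|s| + |t|$. Cases 0, 1, 2, and 3 of the two definitions are syntactically identical, and the recursive calls therein occur on strict subterms; hence the induction hypothesis supplies the equivalence in those cases without further work. The only content is to verify that case~4 and case~4' describe the same relation on $(s,t)$ when both are AC-rooted with root $f$, and share the precondition ``$s >_\acrpo t'$ for all $t \embsm{f} t'$'', which by the induction hypothesis is equivalent to the corresponding precondition for $>_\acrpoo$. Throughout I write $S = \tf{f}(s)$ and $T = \tf{f}(t)$.

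Two preparatory observations drive the matching. First, totality of $>$ forces $\rrs{S}{\nless} = \rrs{S}{>}$, since top\hyp flattening with respect to $f$ produces no $f$-rooted terms and totality makes $\nless f$ coincide with $> f$ on other roots. Second, unfolding the notation of Section~\ref{Korovin and Voronkov},
\[
S \REL^f T \iff \rrs{S}{>} \REL^\mul \rrs{T}{>} \uplus (T{\restriction}_\VV - S{\restriction}_\VV),
\]
so each case-4 comparison factors into a comparison on non-variable parts plus a handling of extra variables on the $T$-side. Because $=_\AC$ cannot pair a variable with a non-variable, the equality $S =_\AC^f T$ reduces to the conjunction of $\rrs{S}{>} =_\AC^\mul \rrs{T}{>}$ and $T{\restriction}_\VV \subseteq S{\restriction}_\VV$ as multisets.

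Using these observations I match the three disjuncts. Case~4(b) becomes $\rrs{S}{>} =_\AC^\mul \rrs{T}{>}$, $T{\restriction}_\VV \subseteq S{\restriction}_\VV$, and $|S| > |T|$, which together amount to $|S{\restriction}_\FF| = |T{\restriction}_\FF|$ and $|S|_x \geqslant |T|_x$ with at least one strict inequality, i.e., precisely $\#(S) > \#(T)$; conversely, $\#(S) > \#(T)$ combined with the covering condition of case~4' is handled either by~4(b) or, when the $\rrs{\cdot}{>}$-parts are not $=_\AC$-equal, by~4(a). Case~4(c) transfers directly: the equalities on the $\rrs{\cdot}{>}$-parts and on variable multisets force $\#(S) = \#(T)$ and $|S| = |T|$, and monotonicity of $>_\acrpo^\mul$ under multiset sum lifts $\rrs{S}{<} >_\acrpo^\mul \rrs{T}{<}$ to $S >_\acrpoo^\mul T$ (adding back the matched non-variable and variable parts). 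Case~4(a) is matched with~4'(a) via the identity $(T{\restriction}_\VV - S{\restriction}_\VV) \uplus S{\restriction}_\VV = T{\restriction}_\VV \uplus (S{\restriction}_\VV - T{\restriction}_\VV)$, which rewrites the strict $\REL^\mul$ comparison into the covering condition plus the strict $\rrs{S}{>} >_\acrpoo^\mul \rrs{T}{>}$, after a case split on whether the surplus variables of $S$ are matched strictly or equally.

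The principal obstacle is this variable bookkeeping: the $\REL^f$ abbreviation bundles the comparison of non-variable parts together with implicit handling of unmatched variables on the $T$-side, whereas case~4' makes the distinction explicit through the covering condition and the $\#$-polynomial count. Once the variable shuffling is carried out the remaining verifications are routine applications of the induction hypothesis and the monotonicity of multiset extensions under disjoint unions.
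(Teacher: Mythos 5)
Your proof follows essentially the same route as the paper's: induction, the observation that totality identifies $\rrs{S}{\nless}$ with $\rrs{S}{>}$, unfolding of the $\REL^f$ notation, and a case-by-case matching of 4(a,b,c) with 4$'$(a,b,c) driven by the covering condition and the correspondence between $\#(S) > \#(T)$ and the pair $T{\restriction}_\VV \subseteq S{\restriction}_\VV$, $|S| > |T|$. One parenthetical is off --- the conditions of case~4(b) neither force $|S{\restriction}_\FF| = |T{\restriction}_\FF|$ nor are equivalent to $\#(S) > \#(T)$ --- but this does not affect the argument, since $\#(S) > \#(T)$ still follows from the conditions you actually derived, and your backward direction correctly falls back on case~4(a) when the $\rrs{\cdot}{>}$-parts are not $=_\AC$-equal.
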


It is known that both orientability and membership are NP-hard for the
multiset path order~\cite{KN85}. It is not hard to adapt these proofs to
LPO, and NP-hardness for the case of RPO is an easy consequence.

In contrast to AC-KBO, a straightforward application of the definition 
of AC-RPO (in particular case~4 of Definition~\ref{def:acrpo}) may 
generate an exponential number of subproblems, as illustrated by the 
following example.

\begin{example}
Consider the signature
$\FF = \{ \m{f}, \m{g}, \m{h}, \circ \}$ with $\circ \in \FF_\AC$ and
precedence $\m{f} > {\circ} > \m{g} > \m{h}$.
Let $t = x \circ y$ and $t_n = t\sigma^n$ for the substitution
$\sigma = \{ x \mapsto \m{g}(x) \circ \m{h}(y), y \mapsto \m{h}(y) \}$.
The size of $t_n$ is quadratic in $n$ but the number of terms $u$
that satisfy $t_n \mathrel{(\embsm{\circ})^+} u$ is exponential 
in $n$. Now suppose one wants to decide whether 
$\m{f}(x) \circ \m{f}(y) >_\acrpo t_n$ holds.
Only case~4(a) is applicable but in order to conclude orientability,
case~4(a) needs to be applied recursively in order to verify 
$\m{f}(x) \circ \m{f}(x) >_\acrpo  u$ for the exponentially many 
terms $u$ such that $t_n \mathrel{(\embsm{\circ})^+} u$.
\end{example}

\section{Subterm Coefficients}
\label{subterm coefficients}

Subterm coefficients were introduced in \cite{LW07} in order to
cope with rewrite rules like $\m{f}(x) \to \m{g}(x,x)$ which violate
the variable condition.
A \emph{subterm coefficient function} is a partial mapping
$\scoeff \colon \FF \times \Nat \to \Nat$ 
such that for a function symbol $f$ of arity $n$ we have
$\scoeff(f,i) > 0$ for all $1 \leqslant i \leqslant n$.
Given a weight function $(w,w_0)$ and a subterm coefficient 
function $\scoeff$, the weight of a term is inductively defined as follows:
\[
w(t) =
\begin{cases}
w_0 & \text{if $t \in \VV$} \\
\displaystyle w(f) + \smash[b]{\sum_{1 \leqslant i \leqslant n}}
\scoeff(f,i) \cdot w(t_i) & \text{if $t = f(\seq{t})$}
\end{cases}
\]

\smallskip
\noindent
The \emph{variable coefficient} $\vcoeff(x,t)$ of a variable $x$
in a term $t$ is inductively defined as follows:
\[
\vcoeff(x,t) = \begin{cases}
1 & \text{if $t = x$} \\
0 & \text{if $t \in \VV \setminus \{ x \}$} \\
\displaystyle \smash[b]{\sum_{1 \leqslant i \leqslant n}}
\scoeff(f,i) \cdot \vcoeff(x,t_i) & \text{if $t = f(\seq{t})$}
\end{cases}
\]

\begin{definition}
\label{def:s-ackbo}
The order $>_\actkbo$ is obtained from Definition~\ref{def:ackbo}
by replacing the condition
``\,$|s|_x \geqslant |t|_x$ for all $x \in \VV$\,'' with
``\,$\vcoeff(x,s) \geqslant \vcoeff(x,t)$ for all $x \in \VV$\,''
and using the modified weight function introduced above.
\end{definition}

In order to guarantee AC compatibility of $>_\actkbo$, the
subterm coefficient function $\scoeff$ has to assign the value $1$ to
arguments of AC symbols. This follows by considering the terms
$t \circ (u \circ v)$ and $(t \circ u) \circ v$ for an AC symbol
$\circ$ with $\scoeff({\circ},1) = m$ and $\scoeff({\circ},2) = n$.
We have
\begin{align*}
w(t \circ (u \circ v))
&= 2 \cdot w(\circ) + m \cdot w(t) + mn \cdot w(u) + n^2 \cdot w(v)
\\
w((t \circ u) \circ v)
&= 2 \cdot w(\circ) + m^2 \cdot w(t) + mn \cdot w(u) + n \cdot w(v)
\end{align*}
Since $w(t \circ (u \circ v)) = w((t \circ u) \circ v)$ must hold for
all possible terms
$t$, $u$, and $v$, it follows that $m = m^2$ and $n^2 = n$, implying
$m = n = 1$.%
\footnote{This condition is also obtained by restricting
\cite[Proposition 4]{BL87} to linear polynomials.}
The proof of the following theorem is very similar to the one of
Theorem~\ref{thm:ackbo} and hence omitted.

\begin{theorem}
If $\scoeff(f,1) = \scoeff(f,2) = 1$ for every function symbol
$f \in \FF_\AC$
then $>_\actkbo$ is an AC-compatible simplification order.
\qed
\end{theorem}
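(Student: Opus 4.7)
The plan is to mirror the proof of Theorem~\ref{thm:ackbo} for $>_\ackbo$, adapting each step to the weight function with subterm coefficients. The starting observation is that the hypothesis $\scoeff(f,1) = \scoeff(f,2) = 1$ for $f \in \FF_\AC$ ensures that both the weight and every variable coefficient are invariant under $=_\AC$: a direct induction on $=_\AC$-derivations gives $w(s) = w(t)$ and $\vcoeff(x,s) = \vcoeff(x,t)$ for all $x$ whenever $s =_\AC t$, and in particular
\[
w(f(s_1,s_2)) = w(f) + \sum_{u \in \tf{f}(s)} w(u), \qquad \vcoeff(x, f(s_1,s_2)) = \sum_{u \in \tf{f}(s)} \vcoeff(x,u)
\]
for $f \in \FF_\AC$. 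With this in hand, $(=_\AC, >_\actkbo)$ is immediately seen to be an order pair (the analogue of Lemma~\ref{lem:ackbo order pair}), and the subterm property follows exactly as in Lemma~\ref{lem:subterm}, using admissibility together with $\scoeff(f,i) \geqslant 1$ and $w_0 > 0$ to handle the weight-zero unary case.

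Closure under contexts proceeds as in Lemma~\ref{lem:ackbo monotone}. For a non-AC context $f(\ldots,\Box,\ldots)$ with argument position $i$, the two weights differ by exactly $\scoeff(f,i) \cdot (w(s) - w(t))$ and the variable condition is multiplied componentwise by $\scoeff(f,i) > 0$, so the case distinction on $f$ and the root of the inserted term carries over verbatim. For an AC context $h(\Box, u)$, the equality $\tf{h}(h(s,u)) = \tf{h}(s) \uplus \tf{h}(u)$ together with closure of $>_\actkbo^\mul$ under multiset sum reduces the proof to the case analysis of Lemma~\ref{lem:ackbo monotone} for cases~3(a,b,c).

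The main obstacle, as for $>_\ackbo$, is closure under substitutions (the analogues of Lemmata~\ref{lem:f-key} and~\ref{lem:ackbo stable}). Here the key identity
\[
w(s\sigma) - w(t\sigma) = (w(s) - w(t)) + \sum_{x \in \VV} (\vcoeff(x,s) - \vcoeff(x,t)) \cdot (w(x\sigma) - w_0)
\]
combined with the multiplicative composition law $\vcoeff(x, s\sigma) = \sum_{y \in \VV} \vcoeff(y,s) \cdot \vcoeff(x, y\sigma)$ and the easy bound $w(x\sigma) \geqslant w_0$ (a straightforward induction using admissibility, $\scoeff \geqslant 1$ and $w(c) \geqslant w_0$ for constants) ensures that both the weight inequality and the variable coefficient condition are preserved under substitution. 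Since the unit-coefficient assumption on AC symbols makes top-flattening commute with substitution in exactly the way used in the excerpt, the abstract proof of Lemma~\ref{lem:f-key} applies unchanged with $\geqslant_\kvc$ replaced by the variable-coefficient version of $=_\AC$, and Lemma~\ref{lem:ackbo stable} then yields stability under substitutions. Combined with the above, this gives an AC-compatible simplification order, and well-foundedness follows from finiteness of the signature as before.
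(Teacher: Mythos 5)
Your proposal is correct and follows exactly the route the paper intends: the paper omits this proof, stating only that it is ``very similar to the one of Theorem~\ref{thm:ackbo}'', and your adaptation of Lemmata~\ref{lem:ackbo order pair}--\ref{lem:ackbo stable} — using the unit coefficients on AC arguments for $=_\AC$-invariance of weights and variable coefficients, and the standard identities $w(s\sigma) = w(s) + \sum_x \vcoeff(x,s)(w(x\sigma)-w_0)$ and $\vcoeff(x,s\sigma) = \sum_y \vcoeff(y,s)\vcoeff(x,y\sigma)$ for closure under substitutions — supplies precisely the details that change. (Only a cosmetic slip: the preorder in the Lemma~\ref{lem:f-key} instantiation is $=_\AC$ as for $>_\ackbo$, not a replacement of $\geqslant_\kvc$.)
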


Subterm coefficients can be viewed as linear interpretations.
\citeN{L79} suggested to use polynomial interpretations for the weight
function of KBO.
A general framework for the use of arbitrary well-founded algebras in
connection with KBO is described in \cite{MZ97}. These developments
can be lifted to the AC setting with little effort.

\begin{example}
\label{example kusakari}
Consider the following TRS $\RR$ with $\circ \in \FF_\AC$:
\\[\abovedisplayskip]
\begin{minipage}{.45\textwidth}
\vspace{-\abovedisplayskip}
\begin{align}
 \m{f}(\m{0}, x \circ x) &\to x
 \tag{1} \\
 \m{f}(x,\m{s}(y)) &\to \m{f}(x \circ y, \m{0})
 \tag{2}
\end{align}
\end{minipage}
\hfill
\begin{minipage}{.5\textwidth}
\vspace{-\abovedisplayskip}
\begin{align}
 \m{f}(\m{s}(x),y) &\to \m{f}(x \circ y, \m{0})
 \tag{3} \\
 \m{f}(x \circ y, \m{0}) &\to \m{f}(x,\m{0}) \circ \m{f}(y,\m{0})
 \tag{4}
\end{align}
\end{minipage}
\\[\belowdisplayskip]
Termination of $\RR$ was shown using AC dependency pairs
in~\cite[Example~4.2.30]{K00}. Consider a precedence
$\m{f} > \circ > \m{s} > \m{0}$, and weights and subterm coefficients
given by $w_0 = 1$ and the following interpretation $\AA$, mapping
function symbols in $\FF$ to linear polynomials over $\Nat$:
\begin{xalignat*}{4}
 \m{s}_\AA(x) 	&= x + 6 &
 \m{f}_\AA(x,y) &= 4x + 4y + 5 &
 x \circ_{\!\AA} y &= x + y + 3 &
 \m{0}_\AA 	&= 1 
\end{xalignat*}
It is easy to check that the first three rules result in a weight
decrease. The left- and right-hand side of rule $(4)$ are both
interpreted as $4x+4y+21$, so both terms have weight $29$,
but since $\m{f} > \circ$ we conclude termination of $\RR$
from case~1 in Definition~\ref{def:ackbo} (\ref{def:s-ackbo}).
Note that termination of $\RR$ cannot be shown by AC-RPO or any of the
previously considered versions of AC-KBO.
\end{example}

\section{Experiments}
\label{experiments}

We ran experiments on a server equipped with eight dual-core AMD
Opteron$^{\mbox{\begin{scriptsize}\textregistered\end{scriptsize}}}$
processors 885 running at a clock rate of 2.6GHz with 64GB of main memory.
The different versions of AC-KBO considered in this paper as well
as AC-RPO~\cite{R02} were implemented on top of {\TTTT} using encodings
in SAT/SMT. These encodings resemble those for standard KBO~\cite{ZHM09}
and transfinite KBO~\cite{WZM12}.
The encoding of multiset extensions of order pairs are based on
\cite{CGST12}, but careful modifications were required
to deal with submultisets induced by the precedence.

\begin{table}[tb]
\caption{Experiments on 145 termination and 67 completion problems.}
\label{table}
\centering
\begin{tabular}{@{}%
l@{\qquad}
c@{~}c@{~}c@{}l@{\qquad}%
c@{~}c@{~}c@{}l@{\qquad}%
c@{~}c@{~}c@{}
}
\hline
\multicolumn{1}{c}{} &
\multicolumn{3}{c}{orientability} &&
\multicolumn{3}{c}{AC-DP} &&
\multicolumn{3}{c@{}}{completion}
\\
\cline{2-4}\cline{6-8}\cline{10-12}
method & 
yes & time & $\infty$ &&
yes & time & $\infty$ &&
yes & time & $\infty$
\\
\hline
AC-KBO &  
 32 & 1.7 & 0 &&
 66 & 463.1 & 3 &&
 25 & 2278.6 & 37
\\ 
Steinbach &   
23 & 1.6 & 0 &&
50 & 463.2 & 2 &&
24 & 2235.4 & 36
\\
Korovin \& Voronkov &
 30 & 2.0 & 0 &&
 66 & 474.3 & 4 &&
 25 & 2279.4 & 37
\\
KV$'$ &
 30 & 2.1 & 0 &&
 66 & 472.4 & 3 &&
 25 & 2279.6 & 37
\\
subterm coefficients &
 37 & \makebox[0mm][r]{4}7.1 & 0 &&
 68 & 464.7 & 2 &&
 28 & 1724.7 & 26
\\
AC-RPO &   
 63 & 2.8 & 0 &&
 79 & 501.5 & 4 &&
 28 & 1701.6 & 26
\\
\hline
\vphantom{\large $|$}%
total &
72 & & &&
94 & & &&
31 & &
\\
\hline
\end{tabular}
\end{table}

For termination experiments, our test set comprises all AC problems in
the \emph{Termination Problem Data Base~9.0},%
\footnote{\url{http://termination-portal.org/wiki/TPDB}}
all examples in this paper, some further problems harvested from the
literature, and constraint systems produced by the completion
tool {\mkbtt}~\cite{W13} (145 TRSs in total). The timeout was
set to 60 seconds.
The results are summarized in Table~\ref{table}, where we list for each
order the number of successful termination proofs, the total time, and the
number of timeouts (column $\infty$).  The `orientability' column directly
applies the order to orient all the rules.  Although AC-RPO succeeds on
more input problems, termination of 9
TRSs could only be established by (variants of) AC-KBO. 
We found that our definition of AC-KBO is about equally powerful 
as Korovin and Voronkov's order, but both are considerably 
more useful than Steinbach's version.
When it comes to proving termination, we did not
observe a difference between Definitions~\ref{def:KV} and~\ref{def:KV'}.
Subterm coefficients clearly increase the success rate,
although efficiency is affected.
In all settings partial precedences were allowed.

The `AC-DP' column applies the order in the AC-dependency pair framework of
\cite{ALM10}, in combination with \emph{argument filterings} and
\emph{usable rules}. Here AC symbols in dependency pairs are
\emph{unmarked}, as proposed in \cite{MU04}.
In this setting the variants of AC-KBO become
considerably more powerful and competitive to AC-RPO, since
argument filterings relax the variable condition,
as pointed out in \cite{ZHM09}.

For completion experiments, we ran the normalized completion tool 
{\mkbtt} with AC-RPO and the variants of AC-KBO for 
termination checks on 67 equational systems collected from the 
literature. The overall timeout was set to 60 seconds, the timeout for 
each termination check to 1.5 seconds. 
The `completion' column in
Table~\ref{table} summarizes our 
results, listing for each order the number of successful completions,
the total time, and the number of timeouts.
It should be noted that the results do not change if the overall timeout 
is increased to 600 seconds. For several of these input problems it is
actually unknown whether an AC-convergent system exists.

All experimental details, source code, and {\TTTT} binaries are 
available online.\footnote{%
\url{http://cl-informatik.uibk.ac.at/software/ackbo}}

The following example can be completed using AC-KBO, whereas AC-RPO 
does not succeed.

\begin{example}
Consider the following TRS~$\RR$~\cite{MU04} for addition of binary
numbers:
\begin{xalignat*}{3}
\# + \m{0} &\to \# &
x\m{0} + y\m{0} &\to (x + y)\m{0} &
x\m{1} + y\m{1} &\to (x + y + \#\m{1})\m{0} \\
x + \# &\to x &
x\m{0} + y\m{1} &\to (x + y)\m{1}
\end{xalignat*}
Here ${+} \in \FF_\AC$, $\m{0}$ and $\m{1}$ 
are unary operators in postfix notation, and $\#$ denotes the empty bit 
sequence. For example, $\#\m{100}$ represents the number 4.
This TRS is not compatible with AC-RPO but AC termination can 
easily be shown by AC-KBO, for instance with the weight function
$(w,w_0)$ with $w(\m{+}) = 0$, $w_0 = w(\m{0}) = w(\#) = 1$, and 
$w(\m{1}) = 3$.
It can be completed into an AC-convergent TRS using AC-KBO.
\end{example}

\section{Conclusion}
\label{conclusion}

We revisited the two variants of AC-compatible extensions of KBO.
We extended the first version $>_\steinbach$ introduced by Steinbach
\cite{S90} to a new version $>_\ackbo$, and presented a rigorous
correctness
proof. By this we conclude correctness of $>_\steinbach$, which had been
put in doubt in~\cite{KV03b}.
We also modified the order $>_\KV$ by \citeANP{KV03b}
to a new version $>_\KVC$ which is 
monotone on non-ground terms, in contrast to $>_\KV$.
We further presented several complexity results regarding these variants
(see Table~\ref{complexity results}).
While a polynomial time algorithm is known for the orientability problem
of standard KBO~\cite{KV03}, the problem becomes NP-complete even
for the ground version of $>_\KV$, as well as for our $>_\ackbo$.
Somewhat unexpectedly, even deciding $>_\KVC$ is NP-complete while
deciding standard KBO is linear \cite{L06b}.
In contrast, the membership problem is polynomial-time decidable
for our $>_\ackbo$.
Finally, we implemented these variants of AC-compatible KBO
as well as the AC-dependency pair framework of
\citeN{ALM10}.
We presented full experimental results both for termination proving
and normalized completion.

\begin{table}[t]
\caption{Complexity results (KV is the ground version of $>_\KV$).}
\label{complexity results}
\centering
\begin{tabular}{@{}l@{\quad}c@{~}c@{~}c@{~}c@{~}c@{~}c@{}}
\hline
\phantom{\large $|$}%
problem & KBO & S & AC-KBO & KV & KV$'$ & AC-RPO \\
\hline
\\[-2.2ex]
membership    & P & P & P & P & NP-complete & NP-hard \\[.5ex]
orientability & P & ? & NP-complete & NP-complete &
NP-complete & NP-hard \\
\hline
\end{tabular}
\end{table}

\paragraph{Acknowledgments.}
We are grateful to Konstantin Korovin for discussions and
the reviewers of the conference version~\cite{YWHM14}
for their detailed comments which helped to improve the presentation.
Ren\'e Thiemann suggested the proof of Lemma~\ref{ACKBO in NP}.

\bibliographystyle{acmtrans}
\bibliography{references}
\fi

\ifx\ARTICLE\undefined
\appendix

\section{Omitted Proofs}

\renewcommand{\thetheorem}{A.\arabic{theorem}}

\subsection{Correctness of $>_\ackbo$}
\label{A.1}

First we show that $({=_\AC},{>_\ackbo})$ is an order pair. To
facilitate the proof, we decompose $>_\ackbo$ into several orders. We
write
\begin{itemize}
\item
$s >_{01} t$ if $|s|_x \geqslant |t|_x$ for all $x \in \VV$
and either $w(s) > w(t)$ or $w(s) = w(t)$ and case~0 or
case~1 of Definition~\ref{def:ackbo} applies,
\smallskip
\item
$s >_{23,k} t$ if $|s|, |t| \leqslant k$, $|s|_x \geqslant |t|_x$
for all $x \in \VV$, $w(s) = w(t)$, and case~2 or case~3 applies.
\end{itemize}
The union of $>_{01}$ and $>_{23,k}$ is denoted by
$>_k$. The next lemma states straightforward properties.

\begin{lemma}
\label{lem:suborders}
The following statements hold:
\begin{enumerate}
\item
${>_\ackbo} \:=\: \bigcup\,\{ {>_k} \mid k \in \Nat \}$,
\item
$({=_\AC}, {>_{01}})$ is an order pair, and
\item
$({>_{01} \cdot >_k}) \cup ({>_k \cdot >_{01}})
\:\subseteq\: {>_{01}}$.
\end{enumerate}
\end{lemma}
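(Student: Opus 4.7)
The plan is to prove each of the three claims in turn, by a direct case analysis on Definition~\ref{def:ackbo}.

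For claim~(1), I would argue both inclusions. If $s >_\ackbo t$, then setting $k = \max(|s|,|t|)$ places the pair into $>_{01}$ (when a strict weight decrease, case~0, or case~1 applies) or into $>_{23,k}$ (when case~2 or case~3 applies); conversely every defining clause of $>_{01}$ and $>_{23,k}$ is itself a clause of $>_\ackbo$, and the variable condition carries over unchanged.

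For claim~(2), I would verify irreflexivity, transitivity, and the compatibility ${=_\AC} \cdot {>_{01}} \cdot {=_\AC} \subseteq {>_{01}}$ separately. Irreflexivity is immediate since $w(t) \not> w(t)$, case~0 is excluded by $t \neq f^k(t)$, and case~1 needs $\rt(t) > \rt(t)$. For transitivity of $s >_{01} u >_{01} t$, the variable condition transfers and weights weakly decrease, so either $w(s) > w(t)$ (done by the weight clause) or all three weights agree. In the latter situation both steps use case~0 or case~1, but a case~0 on the left forces the intermediate $u$ to be a variable (incompatible with any right step), while a case~0 on the right needs a unary weight-zero symbol at $\rt(u)$, which by admissibility is maximal in the precedence and thus contradicts the case~1 on the left. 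Hence only case~1 followed by case~1 remains, and transitivity of the precedence yields case~1 for $(s,t)$. Compatibility with $=_\AC$ follows from the observation that $=_\AC$ preserves weights, variable multiplicities, and root symbols, so every ingredient of cases~0 and~1 is invariant.

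For claim~(3), I treat $>_{01} \cdot {>_k}$ and $>_k \cdot {>_{01}}$ symmetrically. Given a composed triple $s, u, t$ in which one step is $>_{01}$ and the other $>_k$, the variable condition composes; a strict weight decrease at either step gives $s >_{01} t$ via the weight clause. Otherwise all three weights agree, so the $>_{01}$ step uses case~0 or case~1, and the $>_k$ step uses one of cases~0--3. A case~0 in the $>_{01}$ step forces the intermediate term to be a variable, incompatible with every structural case of the other relation, so the $>_{01}$ step must use case~1. A case split on which of cases~1, 2, or~3 is taken by the $>_k$ step then shows that, apart from one exceptional configuration, $\rt(s) > \rt(t)$, whence $s >_{01} t$ via case~1.

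The main obstacle is the exceptional configuration, in which case~0 appears on the right of $>_k \cdot {>_{01}}$: then $u = f^j(t)$ with $f$ unary and $w(f) = 0$, and admissibility tells us that such an $f$ is unique and maximal in the precedence. This rules out cases~0, 1, and~3 of the $>_k$ step, leaving only case~2 with the same $f$ at the root of $s$. From $w(s) = w(t) = w_0$ and $|s|_x \geqslant 1$, a short induction on the structure of $s$ (using the minimality of $w_0$ as a term weight and the uniqueness of $f$ among unary weight-zero symbols) forces $s = f^k(t)$, so case~0 of $>_{01}$ derives $s >_{01} t$. Once this exceptional case is dispatched, the remaining parts of the proof amount to routine bookkeeping of case splits.
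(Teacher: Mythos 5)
Your proposal is correct and follows essentially the same route as the paper's own proof: a direct case analysis on the definition in which admissibility is used to exclude the problematic interactions between case~0 and case~1 (a unary weight-zero root would have to be both smaller and maximal in the precedence). You are in fact more careful than the paper on part~(3): the paper dispatches the inclusion ${>_k} \cdot {>_{01}} \subseteq {>_{01}}$ with ``proved in exactly the same way'', whereas your ``exceptional configuration'' --- case~0 on the right, where $w(s)=w_0$ and $|s|_x\geqslant 1$ force $s=f^k(t)$ by induction using the uniqueness of the unary weight-zero symbol --- is exactly the point where the symmetry breaks and an extra argument is genuinely needed.
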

\begin{proof*}
\begin{enumerate}
\item
The inclusion from right to left is obvious from the definition.
For the inclusion from left to right, suppose $s >_\ackbo t$.
If either $w(s) > w(t)$, or $w(s) = w(t)$ and case~0 or case~1 of
Definition~\ref{def:ackbo} applies,
then trivially $s >_{01} t$.
If case~2 or case~3
applies, then $s >_{23,k} t$ for
any $k$ with $k \geqslant \max(|s|,|t|)$.
\item
\smallskip
First we show that $>_{01}$ is transitive.  Suppose $s >_{01} t >_{01} u$.
If $w(s) > w(t)$ or $w(t) > w(u)$, then $w(s) > w(u)$ and $s >_{01} u$.
Hence suppose $w(s) = w(t) =  w(u)$.  Since $s, t \notin \VV$, we may write
$s = f(\seq{s})$ and $t = g(\seq[m]{t})$ with $f > g$.  Because of
admissibility, $g$ is not a unary symbol with $w(g) = 0$.
Thus $u \notin \VV$, and we may write $u = h(\seq[l]{u})$ with
$g > h$. By the transitivity of $>$ we obtain $s >_{01} u$.
The irreflexivity of $>_{01}$ is obvious from the definition.
It remains to show the compatibility condition
${=_\AC} \cdot {>_{01}} \cdot {=_\AC} \subseteq {>_{01}}$. This easily
follows from the fact that $w(s) = w(t)$ and $\rt(s) = \rt(t)$ whenever
$s =_\AC t$.
\item
\smallskip
Suppose $s = f(\seq{s}) >_{01} t = g(\seq[m]{t}) >_k u$. If
$t >_{01} u$ then $s >_{01} u$ follows from the transitivity of $>_{01}$.
Suppose $t >_{23,k} u$.  So $w(t) = w(u)$.
Thus $w(s) > w(u)$ if $w(s) > w(t)$,
and case~1 applies if $w(s) = w(t)$.
The inclusion ${>_k} \cdot {>_{01}} \subseteq {>_k}$ is proved in
exactly the same way.
\qed
\end{enumerate}
\end{proof*}

\begin{lemma}
\label{lem:f-extension}
Let $>$ be a precedence, $f \in \FF$, and
$({\GS},{\GT})$ an order pair on terms. Then
$({\GS^f},{\GT^f})$ is an order pair.
\end{lemma}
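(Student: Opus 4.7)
The plan is to reduce the four order-pair conditions for $(\GS^f, \GT^f)$ to the corresponding properties of $(\GS^\mul, \GT^\mul)$, which we already have from Theorem~\ref{thm:order pair}. Recall that by definition $S \REL^f T$ iff $\rrs{S}{\nless} \REL^\mul \rrs{T}{\nless} \uplus T{\restriction}_\VV - S{\restriction}_\VV$, so the only feature that distinguishes $\REL^f$ from a plain multiset extension is the variable-subtraction term $T{\restriction}_\VV - S{\restriction}_\VV$ on the right-hand side. All of the work is in showing that this extra term can be shuffled around without damage when composing inequalities.

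Reflexivity of $\GS^f$ and irreflexivity of $\GT^f$ are immediate: when $S = T$ the subtraction $S{\restriction}_\VV - S{\restriction}_\VV$ is empty, so both conditions reduce to comparing $\rrs{S}{\nless}$ with itself under $\GS^\mul$ or $\GT^\mul$. For transitivity and compatibility, the main combinatorial tool I will use is the multiset inclusion
\[
U{\restriction}_\VV - S{\restriction}_\VV \;\subseteq\; (U{\restriction}_\VV - T{\restriction}_\VV) \uplus (T{\restriction}_\VV - S{\restriction}_\VV)
\]
which follows by a pointwise case analysis on variable multiplicities (using $\max(a,0)+\max(b,0)\geq\max(a+b,0)$). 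I will also appeal to two standard, easily verified facts about multiset extensions: that $\GS^\mul$ and $\GT^\mul$ are monotone under multiset union on both sides, and that any multiset superset relation $\supseteq$ is contained in $\GS^\mul$ (choose $k$ to be the size of the smaller multiset in Definition~\ref{lex and mul}).

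With these tools in hand, transitivity and compatibility become a chain of three moves. For instance, to prove $\GS^f \cdot \GT^f \cdot \GS^f \subseteq \GT^f$, given $S \GS^f T$, $T \GT^f U$, $U \GS^f V$, I will (i) add the multiset $(T{\restriction}_\VV - S{\restriction}_\VV)$ to both sides of the middle inequality to match the shape of the first, (ii) chain with the first using $\GS^\mul \cdot \GT^\mul \subseteq \GT^\mul$ (from Theorem~\ref{thm:order pair}), (iii) add a further variable difference and chain with the third using $\GT^\mul \cdot \GS^\mul \subseteq \GT^\mul$, and finally (iv) shrink the right-hand side down to $V_f \uplus (V{\restriction}_\VV - S{\restriction}_\VV)$ via the displayed inclusion together with $\supseteq \,\subseteq\, \GS^\mul$ and one more application of $\GT^\mul \cdot \GS^\mul \subseteq \GT^\mul$. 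Transitivity of $\GS^f$ and $\GT^f$ is obtained by the same template with one fewer composition.

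The main obstacle is purely bookkeeping: keeping track of which multiset difference sits on which side after each step and verifying that the inclusions go in the right direction so that the order-pair closure properties of $(\GS^\mul,\GT^\mul)$ can be applied. Once the key inclusion on variable differences is in place, no further ingenuity is required and the proof proceeds mechanically.
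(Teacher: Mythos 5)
Your proposal is correct and follows essentially the same route as the paper's proof: both reduce everything to the order-pair properties of $(\GS^\mul,\GT^\mul)$ from Theorem~\ref{thm:order pair}, adding the variable-difference multisets to both sides, chaining, and then shrinking via ${\supseteq}\subseteq{\GS^\mul}$. Your explicit inclusion $U{\restriction}_\VV - S{\restriction}_\VV \subseteq (U{\restriction}_\VV - T{\restriction}_\VV) \uplus (T{\restriction}_\VV - S{\restriction}_\VV)$ is exactly the bookkeeping fact the paper uses implicitly, so this is a (somewhat more detailed) version of the same argument.
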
 

\begin{proof}
We first prove compatibility. 
Suppose $S \GS^f T \GT^f U$.
From $T \GT^f U$ we infer that
$\rrs{T}{\nless} \uplus T{\restriction}_\VV \GT^\mul
\rrs{U}{\nless} \uplus U{\restriction}_\VV$.
Hence
$\rrs{S}{\nless} \GT^\mul \rrs{U}{\nless} \uplus U{\restriction}_\VV -
S{\restriction}_\VV$
follows from $S \GS^f T$.
Hence also
$S \mathrel{({\GS} \cdot {\GT})}^f U$.
We obtain the desired $S \GT^f U$ from
the compatibility of $\GS$ and $\GT$.
Transitivity of $\GS^f$ and $\GT^f$ is obtained in a very similar way.
Reflexivity of $\GS^f$ and irreflexivity of $\GT^f$ are obvious.
\end{proof}

We employ the following simple criterion to construct order pairs,
which enables us to prove correctness in a modular way.

\begin{lemma}
\label{lem:chain}
Let $({\GS},{\GT_k})$ be order pairs for $k \in \Nat$ with
${\GT_k} \subseteq {\GT_{k+1}}$. If $\GT$ is the union of all $\GT_k$ then
$({\GS},{\GT})$ is an order pair.
\end{lemma}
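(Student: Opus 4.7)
The plan is to verify the two defining properties of an order pair from the statement of the lemma: first that $\GT$ is a strict order (irreflexive and transitive), and second that the compatibility condition $\GS \cdot \GT \cdot \GS \subseteq \GT$ holds. Note that $\GS$ is already a preorder by assumption, since each $(\GS, \GT_k)$ is an order pair and the preorder component is fixed across $k$.

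For irreflexivity of $\GT$, I would argue by contradiction: if $s \GT s$ then $s \GT_k s$ for some $k$, contradicting irreflexivity of $\GT_k$. For transitivity of $\GT$, I would use the key chain hypothesis ${\GT_k} \subseteq {\GT_{k+1}}$: given $s \GT t$ and $t \GT u$, pick indices $k$ and $l$ with $s \GT_k t$ and $t \GT_l u$; taking $m = \max(k,l)$ places both comparisons into the single relation $\GT_m$ (by the chain inclusions), so transitivity of $\GT_m$ yields $s \GT_m u$, and hence $s \GT u$.

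For compatibility, suppose $s \GS t$, $t \GT u$, $u \GS v$. From $t \GT u$ pick $k$ with $t \GT_k u$. Since $(\GS, \GT_k)$ is an order pair, the compatibility condition ${\GS} \cdot {\GT_k} \cdot {\GS} \subseteq {\GT_k}$ yields $s \GT_k v$, and thus $s \GT v$ by definition of the union. This completes the verification that $(\GS, \GT)$ is an order pair.

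I do not expect a genuine obstacle here: the only nontrivial ingredient is transitivity of $\GT$, where one must exploit the directed chain assumption ${\GT_k} \subseteq {\GT_{k+1}}$ to reconcile two witnesses at different indices into a single one. The remaining properties follow immediately by picking a witness index and invoking the corresponding property of $(\GS, \GT_k)$.
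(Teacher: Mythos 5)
Your proof is correct and follows essentially the same route as the paper: transitivity via taking $m = \max(k,l)$ and using the chain inclusions to place both comparisons in $\GT_m$, with compatibility and irreflexivity reduced immediately to the corresponding properties of a single $\GT_k$. Your direct argument for irreflexivity is if anything slightly cleaner than the paper's appeal to "an easy induction proof."
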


\begin{proof}
The relation $\GS$ is a preorder by assumption.
Suppose $s \GT t \GT u$.  By assumption
there exist $k$ and $l$ such that $s \GT_k t \GT_l u$. Let
$m = \max(k,l)$.  We obtain
$s \GT_m t \GT_m u$ from the assumptions of the lemma and hence $s \GT_m u$
follows from the fact that $({\GS},{\GT_m})$ is an order pair.
Compatibility is an immediate consequence of the assumptions
and the irreflexivity of $\GT$
is obtained by an easy induction proof.
\end{proof}

\begin{proof}[Proof of Lemma \ref{lem:ackbo order pair}]
According to Lemmata~\ref{lem:chain} and \ref{lem:suborders}(1), it is
sufficient to prove that $({=_\AC},{>_k})$ is an order pair for all
$k \in \Nat$. Due to Lemma~\ref{lem:suborders}(2,3) it suffices
to prove that $({=_\AC},{>_{23,k}})$ is an order pair,
which follows by using induction on $k$ in combination
with Lemma~\ref{lem:f-extension} and Theorem~\ref{thm:order pair}.
\end{proof}

\begin{proof}[Proof of Theorem \ref{thm:total}]
Let $\TT_k$ denote the set of ground terms of size at most $k$.  We use
induction on $k \geqslant 1$ to show that $>_\ackbo$ is AC-total on
$\TT_k$.  Let $s, t \in \TT_k$.  We consider the case where $w(s) = w(t)$
and $\rt(s) = \rt(t) = f \in \FF_\AC$.  The other cases follow as for
standard KBO.  Let $S = \tf{f}(s)$ and $T = \tf{f}(t)$.
Clearly $S$ and $T$ are multisets over $\TT_{k-1}$.
According to the induction hypothesis,
$>_\ackbo$ is AC-total on $\TT_{k-1}$ and since
multiset extension preserves AC totality,
$>_\ackbo^\mul$ is AC-total on multisets over $\TT_{k-1}$.
Hence for any pair of multisets $U$ and $V$ over $\TT_{k-1}$,
either
\begin{gather*}
U >_\ackbo^\mul V
\quad\text{or}\quad
V >_\ackbo^\mul U
\quad\text{or}\quad 
U =_\AC^\mul V
\end{gather*}
Because the precedence $>$ is total and $S$ and $T$ contain
neither variables nor terms with $f$ as their root symbol, we have
\begin{xalignat*}{2}
S &= \rrs{S}{\nless} \cup \rrs{S}{<} = \rrs{S}{>} \cup \rrs{S}{<} &
T &= \rrs{T}{\nless} \cup \rrs{T}{<} = \rrs{T}{>} \cup \rrs{T}{<}
\end{xalignat*}
If $\rrs{S}{>} >_\ackbo^\mul \rrs{T}{>}$ or
$\rrs{T}{>} >_\ackbo^\mul \rrs{S}{>}$
then case~3(a) of Definition~\ref{def:ackbo} is applicable to derive
either $s >_\ackbo t$ or $t >_\ackbo s$.
Otherwise we must have $\rrs{S}{>} =_\AC^\mul \rrs{T}{>}$ by
AC-totality. If $|S| > |T|$ then we obtain $s >_\ackbo t$ by case~3(b).
Similarly, $|S| < |T|$ gives rise to $t >_\ackbo s$.

In the remaining case we have both $\rrs{S}{>} =_\AC^\mul \rrs{T}{>}$
and $|S| = |T|$.
Using case~3(c)
of Definition~\ref{def:ackbo} we obtain
$s >_\ackbo t$ when $\rrs{S}{<} >_\ackbo^\mul \rrs{T}{<}$
and $t >_\ackbo s$ when $\rrs{T}{<} >_\ackbo^\mul \rrs{S}{<}$.
By AC totality there is one case remaining:
$\rrs{S}{<} =_\AC^\mul \rrs{T}{<}$. Combined with
$\rrs{S}{>} =_\AC^\mul \rrs{T}{>}$ we obtain
$S =_\AC^\mul T$. We may write
$S = \{ \seq[n]{s} \}$ and $T = \{ \seq[n]{t} \}$ such that
$s_i =_\AC t_i$ for all $1 \leqslant i \leqslant n$.
Since $f$ is an AC symbol,
$s =_\AC f(s_1,f(\dots, s_n)\dots)$ and 
$t =_\AC f(t_1,f(\dots, t_n)\dots)$, from which we
conclude $s =_\AC t$.
\end{proof}

\subsection{Correctness of $>_\KVC$}

We prove that $>_\KVC$ is an AC-compatible
simplification order. The proof mimics the one given in
Sections~\ref{AC-KBO} and~\ref{A.1} for $>_\ackbo$,
but there are some subtle differences.
The easy proof of the following lemma is omitted.

\begin{lemma}
\label{lem:Wt order pair}
The pairs $({=_\AC},{>_\Wt})$ and $({\geqslant_\kvc},{>_\Wt})$ are order
pairs.
\qed
\end{lemma}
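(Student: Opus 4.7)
The plan is to verify each of the two claims by establishing that $>_\Wt$ is a strict order, that the left component is a preorder, and that the compatibility condition holds. Since $=_\AC$ is visibly an equivalence relation, the first claim reduces to handling $>_\Wt$. Irreflexivity is immediate (neither $w(s) > w(s)$ nor $\rt(s) > \rt(s)$ is possible). Transitivity is a trivial case split: either some weight in the chain decreases strictly and the conclusion follows by $w$-comparison, or all weights coincide and the conclusion follows from transitivity of the precedence (noting that in the latter sub-case all terms in the chain are non-variables, so their root symbols are defined). The compatibility ${=_\AC} \cdot {>_\Wt} \cdot {=_\AC} \subseteq {>_\Wt}$ holds because AC-equivalent terms share weight, root symbol, and variable multiplicities.

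For the second claim I first show that $\geqslant_\kvc$ is a preorder. Reflexivity is direct: if $s$ is a non-variable then $\rt(s) \geqslant \rt(s)$, and if $s \in \VV$ the $t \in \VV$ clause applies. For transitivity of $s \geqslant_\kvc t \geqslant_\kvc u$ the variable multiplicity condition chains without effort, and on weights either a strict decrease passes through or all three weights agree. In the latter sub-case, if $t \in \VV$ the only way $t \geqslant_\kvc u$ can hold is via $u \in \VV$, and then $s \geqslant_\kvc u$ follows by the $u \in \VV$ clause; otherwise $\rt(s) \geqslant \rt(t)$, and either $\rt(t) \geqslant \rt(u)$ yields $\rt(s) \geqslant \rt(u)$ by transitivity of the precedence, or $u \in \VV$ gives the conclusion via the variable clause.

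The main obstacle is the compatibility ${\geqslant_\kvc} \cdot {>_\Wt} \cdot {\geqslant_\kvc} \subseteq {>_\Wt}$. Given $s \geqslant_\kvc t >_\Wt u \geqslant_\kvc v$, the variable multiplicity condition chains, and the weight chain $w(s) \geqslant w(t) \geqslant w(u) \geqslant w(v)$ resolves all cases where some inequality is strict. The remaining case has all four weights equal, forcing $t$ and $u$ to be non-variables with $\rt(t) > \rt(u)$, and hence $s$ to be a non-variable with $\rt(s) \geqslant \rt(t)$. The delicate point is the sub-case in which $u \geqslant_\kvc v$ holds by virtue of $v \in \VV$: then $v$ is a variable of weight $w_0$, the multiplicity condition forces $u$ to contain $v$, yet $u$ is non-variable with $w(u) = w_0$. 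Admissibility of $(w,w_0)$ saves the day here: the only non-variable of weight $w_0$ that contains a variable $x$ has the shape $\m{f}_0^k(x)$ for the (necessarily unique) unary symbol $\m{f}_0$ of weight $0$, which admissibility forces to be maximal in the precedence. This contradicts $\rt(t) > \rt(u) = \m{f}_0$, so this sub-case is vacuous. In the remaining sub-case $\rt(u) \geqslant \rt(v)$, and $\rt(s) \geqslant \rt(t) > \rt(u) \geqslant \rt(v)$ delivers $s >_\Wt v$ as required.
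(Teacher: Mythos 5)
Your proof is correct. The paper does not actually give a proof of this lemma --- it is explicitly omitted as ``easy'' --- so there is no argument to compare against, but your direct verification (strictness of $>_\Wt$, preorder properties of $=_\AC$ and $\geqslant_\kvc$, and the two compatibility inclusions) is exactly what is needed. You also rightly flag the one genuinely non-obvious point, namely the compatibility sub-case $s \geqslant_\kvc t >_\Wt u \geqslant_\kvc v$ with all weights equal and $v \in \VV$: there the variable condition forces $u = \m{f}_0^k(v)$ for the unique unary symbol $\m{f}_0$ of weight $0$, and admissibility makes $\m{f}_0$ maximal, contradicting $\rt(t) > \rt(u)$; without the standing admissibility assumption this sub-case would be a genuine counterexample, so your appeal to it is not optional but essential.
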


\begin{lemma}
The pair $({=_\AC},{>_\KVC})$ is an order pair.
\end{lemma}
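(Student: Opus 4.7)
The plan is to mirror the decomposition proof of Lemma \ref{lem:ackbo order pair}. First I would introduce the auxiliary relations $>_{01}$ (the union of cases 0 and 1 together with the ``$w(s)>w(t)$'' case) and $>_{23,k}$ (cases 2 and 3 with $|s|,|t|\leqslant k$), both defined with respect to $>_\KVC$ rather than $>_\ackbo$. Since cases~0 and~1 of Definition \ref{def:KV'} are syntactically identical to those of Definition \ref{def:ackbo}, the analogue of Lemma \ref{lem:suborders} carries over verbatim: $>_\KVC$ is the union of the chain $\{>_k\}_{k\in\Nat}$, where ${>_k} = {>_{01}} \cup {>_{23,k}}$; the pair $({=_\AC},{>_{01}})$ is an order pair; and ${>_{01}}\cdot{>_k} \cup {>_k}\cdot{>_{01}} \subseteq {>_{01}}$. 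By Lemma \ref{lem:chain}, it then suffices to prove that $({=_\AC},{>_{23,k}})$ is an order pair for every $k$, which I would do by induction on $k$.

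In the inductive step, case~2 is dispatched exactly as in the $>_\ackbo$ proof: by Theorem \ref{thm:order pair} and the induction hypothesis, $({=_\AC^\lex},{>_{\KVC}^\lex})$ restricted to tuples of subterms of size $<k$ is an order pair. The bulk of the work is case~3. Here I would assemble the pieces as follows. Lemma \ref{lem:Wt order pair} supplies the base order pairs $({=_\AC},{>_\Wt})$ and $({\geqslant_\kvc},{>_\Wt})$; applying Theorem \ref{thm:order pair} yields $({\geqslant_\kvc^\mul},{>_\Wt^\mul})$ as an order pair (this is the multiset version actually used in case~3(a) of Definition \ref{def:KV'}, since $\geqslant_\kvc$ is the preorder there); and Lemma \ref{lem:f-extension} then lifts this to $({\geqslant_\kvc^f},{>_\Wt^f})$. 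Separately, the induction hypothesis together with Theorem \ref{thm:order pair} gives that $({=_\AC^\mul},{>_\KVC^\mul})$ is an order pair on multisets of subterms strictly smaller than $s$ and $t$, which is what case~3(c) needs.

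The remaining obligation is to verify that the \emph{disjunction} of the three subcases 3(a), 3(b), 3(c) on multisets, call it $\GT^{3}$, together with $=_\AC^\mul$ still forms an order pair; translated back via the top-flattening, this delivers the $>_{23,k}$ part of case~3. Transitivity requires checking the nine combinations obtained by picking one of (a), (b), (c) on each side. The key observation that makes all nine combinations close under $\GT^3$ is that in every subcase the $\geqslant_\kvc^f$ comparison holds (3(a) implies $\geqslant_\kvc^f$ trivially, since $>_\Wt^f\subseteq\geqslant_\kvc^f$), so one can always compose via the compatibility of $({\geqslant_\kvc^f},{>_\Wt^f})$ and of $({=_\AC^\mul},{>_\KVC^\mul})$; the multiset sizes are tracked through $|\cdot|$ to decide which of (a), (b), (c) is witnessed on the composite. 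Compatibility ${=_\AC^\mul}\cdot\GT^3\cdot{=_\AC^\mul}\subseteq\GT^3$ follows from $=_\AC \subseteq \geqslant_\kvc$, from root-symbol and weight preservation under $=_\AC$, and from compatibility of the two inner order pairs. Irreflexivity of each subcase is immediate from irreflexivity of $>_\Wt^f$, of strict size comparison, and of $>_\KVC^\mul$.

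The main obstacle, as just indicated, is the bookkeeping needed for the case analysis in case~3: one must verify that no combination of (a), (b), (c) on the left with (a), (b), (c) on the right produces a composite that escapes $\GT^3$. I expect the 3(b)/3(c) and 3(c)/3(b) combinations (where sizes interact nontrivially with the $\geqslant_\kvc^f$ constraint) to be the most delicate, but each is a short argument once the order pair facts above are in hand. The rest of the argument — closing the induction and concluding via Lemma \ref{lem:chain} — is then formal.
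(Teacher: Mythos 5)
Your proposal is correct and follows essentially the same route as the paper: the paper's own proof is a one-line reference to the decomposition argument for $({=_\AC},{>_\ackbo})$ (via the chain $\{>_k\}$, Lemma~\ref{lem:chain}, Lemma~\ref{lem:f-extension} and Theorem~\ref{thm:order pair}), adjusted in case~3 by invoking Lemma~\ref{lem:Wt order pair}, which is exactly the structure you spell out in more detail.
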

\begin{proof}
Similar to the proof of Lemma~\ref{lem:ackbo order pair}, except for
case~3 of Definition~\ref{def:KV'}, where we need
Lemma~\ref{lem:Wt order pair} and
Theorem~\ref{thm:order pair}.
\end{proof}

The subterm property follows exactly as in the proof of
Lemma~\ref{lem:subterm}; note that
the relation $>_{01}$ has the subterm property, and
we obviously have ${>_{01}} \subseteq {>_\KVC}$.

\begin{lemma}
\label{lem:kvd-sub}
The order $>_\KVC$ has the subterm property.
\qed
\end{lemma}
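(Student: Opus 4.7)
The plan is to mirror the argument used for Lemma~\ref{lem:subterm}, exploiting the fact that $>_\KVC$ differs from $>_\ackbo$ only in case~3 of the definition, while the subterm property can be derived entirely from cases~0 and~1 together with the weight-dominant clause. Concretely, I would reuse the auxiliary relation $>_{01}$ introduced in Section~\ref{A.1}, which captures exactly these parts of the definition, and show that ${>_{01}} \subseteq {>_\KVC}$ holds and that $>_{01}$ already enjoys the subterm property.

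First, I would observe that cases~0 and~1 of Definition~\ref{def:KV'} are syntactically identical to cases~0 and~1 of Definition~\ref{def:ackbo}, and that both orders incorporate the same weight-dominant clause. Hence, if $s >_{01} t$ then $s >_\KVC t$ by the very same derivation; in particular, the inclusion ${>_{01}} \subseteq {>_\KVC}$ is immediate and requires no induction.

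Second, I would appeal to the argument already spelt out for $>_\ackbo$ to establish $f(\seq{t}) >_{01} t_i$ for every $1 \leqslant i \leqslant n$. The variable condition $|f(\seq{t})|_x \geqslant |t_i|_x$ is trivial. For the weight, either $w(f(\seq{t})) > w(t_i)$, in which case the weight-dominant clause applies, or $w(f(\seq{t})) = w(t_i)$, in which case admissibility forces $f$ to be a unary symbol of weight $0$ with $f$ above every other function symbol in the precedence: case~0 applies when $t_i \in \VV$ and case~1 applies otherwise because $f$ then dominates $\rt(t_i)$.

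Combining these two steps, $f(\seq{t}) >_{01} t_i$ yields $f(\seq{t}) >_\KVC t_i$, which is the subterm property for $>_\KVC$. There is no real obstacle here: all the subtlety of $>_\KVC$ compared to $>_\ackbo$ sits inside case~3, and the subterm property is witnessed entirely outside of case~3. Consequently, nothing beyond what was already verified for Lemma~\ref{lem:subterm} has to be redone.
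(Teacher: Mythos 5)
Your overall route is exactly the paper's: the appendix handles this lemma in one line by noting ${>_{01}} \subseteq {>_\KVC}$ and deferring to the argument for Lemma~\ref{lem:subterm}. However, one explicit step in your equal-weight analysis fails. When $w(f(\seq{t})) = w(t_i)$, admissibility does force $n = 1$ and $f$ to be the unique unary symbol of weight $0$, maximal in the precedence --- but it only yields $f > g$ for $g$ \emph{different from} $f$. So in the sub-case $\rt(t_1) = f$, e.g.\ $s = \m{f}(\m{f}(x))$ and $t_1 = \m{f}(x)$ with $w(\m{f}) = 0$, case~1 does not apply (the precedence is irreflexive, so $f \ngtr f$), and case~0 does not apply either since $t_1 \notin \VV$. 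Consequently $>_{01}$ does not literally have the one-step subterm property, and your claim that ``case~1 applies otherwise'' is false here; the property is not witnessed by the weight clause and cases~0--1 alone.

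The repair is standard and does not affect your main point. In that sub-case write $s = f(s_1)$ and $t_1 = f(t_1')$ with $s_1 = t_1$; case~2 of Definition~\ref{def:KV'} (which is identical to case~2 of Definition~\ref{def:ackbo}, since a unary symbol cannot belong to $\FF_\AC$) reduces $s >_\KVC t_1$ to $(s_1) >_\KVC^\lex (t_1')$, i.e.\ to $f(t_1') >_\KVC t_1'$, which is the subterm property for the strictly smaller term $t_1$ and follows by induction on $|s|$. So the subterm property is witnessed by the weight clause and cases~0, 1 and~2 --- still entirely outside case~3, which is the only place where $>_\KVC$ and $>_\ackbo$ differ, so your conclusion stands once this sub-case is added. (For what it is worth, the paper's own remark that ``$>_{01}$ has the subterm property'' glosses over the same sub-case; its appeal to transitivity and admissibility in Lemma~\ref{lem:subterm} is what covers it.)
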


\begin{lemma}
\label{lem:kvd-mono}
The order $>_\KVC$ is closed under contexts.
\end{lemma}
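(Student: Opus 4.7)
The plan is to follow the structure of the proof of Lemma~\ref{lem:ackbo monotone}. Non-AC contexts can be handled exactly as for standard KBO (using the fact that $(=_\AC, >_\KVC)$ is an order pair and that cases~0, 1, 2 of Definition~\ref{def:KV'} behave as in the standard setting), so the interesting situation is an AC-rooted context $h(\Box, u)$ with $h \in \FF_\AC$. Assume $s >_\KVC t$ and set $s' = h(s,u)$, $t' = h(t,u)$. If $w(s) > w(t)$ the weight-strict case gives the conclusion immediately, so we focus on $w(s) = w(t)$. Let $S = \tf{h}(s)$, $T = \tf{h}(t)$, $U = \tf{h}(u)$, so that $\tf{h}(s') = S \uplus U$ and $\tf{h}(t') = T \uplus U$. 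Since adding the same multiset $U$ on both sides cancels when computing $\rrs{\,\cdot\,}{\nless}$ and the variable differences $V\!{\restriction}_\VV$, the task reduces to verifying that one of the cases~3(a,b,c) of Definition~\ref{def:KV'} holds for $(S \uplus U, T \uplus U)$ with respect to~$h$.

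The core of the argument is a case analysis on $f = \rt(s)$ and $g = \rt(t)$ relative to~$h$, mirroring the four cases in the proof of Lemma~\ref{lem:ackbo monotone}. When $f \nleqslant h$ we have $S = \{s\}$ and $\rrs{S \uplus U}{\nless}$ contains~$s$; by Lemma~\ref{lem:kvd-sub} and admissibility we get $h(x,y) >_\Wt x, y$, so Lemma~\ref{lem:tf}, applied to the order pair $(\geqslant_\kvc, >_\Wt)$, yields $\{s\} >_\Wt^\mul \tf{h}(t)$, and from this~3(a) follows. When $f = h > g$ the set $\rrs{T}{\nless} \uplus T\!{\restriction}_\VV$ is empty, so either $\rrs{S}{\nless} \neq \varnothing$ gives~3(a) or $|S| \geqslant 2 > 1 = |T|$ gives~3(b). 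When $f = g = h$, $s >_\KVC t$ must have been derived by one of 3(a,b,c) on $S$ and $T$, and these properties are preserved when the same $U$ is added on both sides, because the multiset extensions of order pairs are compatible with disjoint union. When $f, g < h$ the $\rrs{\,\cdot\,}{\nless}$ parts and the variable parts coincide after cancelling $U$, the cardinalities match, and 3(c) applies from $s >_\KVC t$ via cancellation in $>_\KVC^\mul$.

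To justify the last step we need the fact that $\geqslant_\kvc$ and $>_\KVC$ form an order pair so that their multiset extensions inherit cancellation, together with the observation that admissibility and $w(s) = w(t)$ force $f \geqslant g$, and if $t \in \VV$ then $s = f^k(t)$ with $w(f) = 0$, which forces $f > h$ by admissibility. These are exactly the facts used at the end of the proof of Lemma~\ref{lem:ackbo monotone} and translate unchanged.

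The main obstacle will be the verification that in the $>_\Wt^f$ step the extra ``$t \in \VV$'' clause of $\geqslant_\kvc$ (as compared to plain $=_\AC$) is what makes the case analysis go through when $S$ or $T$ contains variables: indeed, the very reason $>_\KVC$ was introduced to replace $>_\KV$ is that without this clause the analogues of Examples~\ref{counterexample} and~\ref{counterexample 2} block cases~3(a,b,c). Concretely, when a variable $x \in T\!{\restriction}_\VV - S\!{\restriction}_\VV$ is to be dominated by some term whose root is not below $h$, we need $\geqslant_\kvc$ to relate, e.g., $\m{f}(x)$ to $x$, a relation that $=_\Wt$ does not provide. Once this is handled, the rest of the proof is bookkeeping analogous to Lemma~\ref{lem:ackbo monotone}.
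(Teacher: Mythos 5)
Your overall strategy --- reduce to an AC context $h(\Box,u)$, cancel $U$, and redo the four-way case analysis of Lemma~\ref{lem:ackbo monotone} using Lemma~\ref{lem:tf} with the order pair $({\geqslant_\kvc},{>_\Wt})$ --- is the right one, and three of your four cases are fine. But the case $f \nleqslant h$ contains a genuine error, and it sits exactly where the proof for $>_\KVC$ must deviate from the one for $>_\ackbo$. You claim that Lemma~\ref{lem:tf} yields $\{s\} >_\Wt^\mul \tf{h}(t)$ and hence case~3(a). To obtain the strict conclusion of Lemma~\ref{lem:tf} you need either $s >_\Wt t$ or $|\tf{h}(t)| > 1$. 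If $s >_\KVC t$ was derived by case~1 of Definition~\ref{def:KV'}, so that $\rt(s) > \rt(t)$, then indeed $s >_\Wt t$ and 3(a) goes through. But if it was derived by case~0, 2 or 3, then $\rt(s) = \rt(t)$ or $t \in \VV$, so $s >_\Wt t$ fails ($>_\Wt$ demands a strict precedence decrease at the root), and $\tf{h}(t) = \{t\}$ is a singleton, so the preorder component cannot rescue the comparison either: by Definition~\ref{lex and mul} a $1$-versus-$1$ relation $\{s\} >_\Wt^\mul \{t\}$ forces $s >_\Wt t$. Concretely, for $s = \m{f}(x)$, $t = x$ and the context $\Box + y$ of Example~\ref{counterexample} (with $w(\m{f}) = 0$, hence $\m{f} > +$ and $f \nleqslant h$), case~3(a) is simply not applicable; the correct conclusion is 3(c), using $\m{f}(x) \geqslant_\kvc x$ via the new ``$t \in \VV$'' clause for the condition $S \geqslant_\kvc^f T$, together with $S \uplus U >_\KVC^\mul T \uplus U$. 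Your closing paragraph locates the role of the ``$t \in \VV$'' clause inside the $>_\Wt^f$ step of 3(a), but that is precisely where it cannot help; it enters through $\geqslant_\kvc^f$ in 3(b,c).

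This is why the paper's proof does not case-split on the position of $f$ and $g$ relative to $h$ but ``according to the derivation of $s >_\KVC t$'': derivation case~0 ($s = f^k(t)$, $t \in \VV$, where admissibility forces $f > h$) and derivation case~2 ($f = g \notin \FF_\AC$) are routed to 3(c) via $\geqslant_\kvc^\mul$ with $|S| = |T| = 1$ and $S >_\KVC^\mul T$; only the remaining derivation cases are handled as in Lemma~\ref{lem:ackbo monotone}, with Lemma~\ref{lem:tf} instantiated by $({\geqslant_\kvc},{>_\Wt})$ and with the $\geqslant_\kvc^\mul$ branch of its conclusion again landing in 3(b) or 3(c) rather than 3(a). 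Your argument can be repaired by splitting the case $f \nleqslant h$ further according to whether $\rt(s) > \rt(t)$ (then 3(a)) or not (then 3(c)), but as written it does not establish closure under contexts.
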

\begin{proof*}
Suppose $s >_\KVC t$. We follow the proof for $>_\ackbo$ 
in Lemma~\ref{lem:ackbo monotone} and consider here the case that
$w(s) = w(t)$. We will show that one of the cases~3(a,b,c) in
Definition~\ref{def:KV'} (\ref{def:KV}) is
applicable to $S = \tf{h}(s)$ and $T = \tf{h}(t)$.
Let $f = \rt(s)$ and $g = \rt(t)$.
The proof proceeds by case splitting according to the  derivation of
$s >_\KVC t$.
\begin{itemize}
\item
Suppose $s = f^k(t)$ with $k > 0$ and $t \in \VV$.
Admissibility enforces $f > h$ and thus
$\rrs[h]{S}{\nless} = \{ s \} \geqslant_\kvc^\mul \{ t \}$.
We have $|S| = |T| = 1$ and $S >_\KVC^\mul T$.
Hence 3(c) applies. (This case breaks down for $>_\KV$.)
\smallskip
\item
Suppose $f = g \notin \FF_\AC$.
We have $S \geqslant_\kvc^\mul T$, $|S| = |T| = 1$, and
$S = \{ s \} >_\KVC^\mul \{ t \} = T$.
Hence 3(c) applies.
\smallskip
\item
The remaining cases are similar to the proof of
Lemma~\ref{lem:ackbo monotone}, except that we use Lemma~\ref{lem:tf}
with $({\geqslant_\kvc},{>_\Wt})$.
\qed
\end{itemize}
\end{proof*}

For closure under substitutions we need to extend 
Lemma~\ref{lem:f-key} with the following case:
\begin{enumerate}
\item[\textit{3.}]
\textit{If $S \GS^f T$ and $S' \NGT^f T'$ then
$S' - T' \supseteq S\sigma - T\sigma$ and 
$T\sigma - S\sigma \supseteq T' - S'$.}
\end{enumerate}

\begin{proof*}
We continue the proof of Lemma~\ref{lem:f-key}.
From $\tf{f}(U\sigma) = U\sigma$ we infer that
$T' = T{\restriction}_\FF\sigma \uplus U\sigma \uplus \tf{f}(X\sigma)$.
On the other hand,
$S' = S{\restriction}_\FF\sigma \uplus \tf{f}(Y\sigma) \uplus
\tf{f}(X\sigma)$ with $Y = S{\restriction}_\VV - X$.
Hence
\begin{align*}
T' - S'~ &\subseteq~
T{\restriction}_\FF\sigma \uplus U\sigma - S{\restriction}_\FF\sigma \\
&=~
T{\restriction}_\FF\sigma \uplus U\sigma \uplus X\sigma -
(S{\restriction}_\FF \uplus X\sigma) \\
&\subseteq~
T\sigma - S\sigma
\end{align*}
and
\begin{align*}
S' - T'~ &\supseteq~
S{\restriction}_\FF\sigma - T{\restriction}_\FF\sigma - U\sigma \\
&=~
S{\restriction}_\FF\sigma \uplus X\sigma -
(T{\restriction}_\FF \uplus U\sigma \uplus X\sigma) \\
&\supseteq~
S\sigma - T\sigma
\end{align*}
establishing the desired inclusions. \qed
\end{proof*}

\begin{lemma}
\label{kvd-stab}
The order $>_\KVC$ is closed under substitutions.
\end{lemma}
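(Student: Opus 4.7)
The proof will follow the pattern of Lemma~\ref{lem:ackbo stable}, by induction on $s$. Cases~0, 1, and 2 of Definition~\ref{def:KV'} are handled exactly as for standard KBO or $>_\ackbo$: cases~0 and~1 depend only on the variable condition and weights, while case~2 reduces to the induction hypothesis applied to the arguments of $s$ and $t$.

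For the AC case, I apply Lemma~\ref{lem:f-key} with the order pair $(\geqslant_\kvc,>_\Wt)$; this is an order pair by Lemma~\ref{lem:Wt order pair}, and the side condition $f(x,y) >_\Wt x,y$ is immediate from $w_0 > 0$. If $s >_\KVC t$ holds by case~3(a), part~1 of Lemma~\ref{lem:f-key} yields $S' >_\Wt^f T'$, so case~3(a) transfers. For case~3(b), part~2 gives either $S' >_\Wt^f T'$ (hence case~3(a)) or $S' \geqslant_\kvc^f T'$ together with $|S'| - |T'| \geqslant |S| - |T| > 0$, so case~3(b) applies to $s\sigma >_\KVC t\sigma$.

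The delicate case is~3(c), since in $>_\KVC$ the comparison $S >_\KVC^\mul T$ ranges over the full top-flattenings rather than just the part with roots smaller than $f$ used for $>_\ackbo$. Suppose neither 3(a) nor 3(b) applies to the substituted instance; part~2 of Lemma~\ref{lem:f-key} then gives $|S'| = |T'|$ and $S' \geqslant_\kvc^f T'$, and the extension (part~3) provides the inclusions $S' - T' \supseteq S\sigma - T\sigma$ and $T\sigma - S\sigma \supseteq T' - S'$. The cardinality constraints $|S|=|T|$ and $|S'|=|T'|$ force both inclusions to be multiset equalities. From the induction hypothesis we get $S\sigma >_\KVC^\mul T\sigma$, and after stripping the common multiset $S\sigma \cap T\sigma$ via Lemma~\ref{lem:equivalence} with $=_\AC$ as $\sim$, we conclude $(S\sigma-T\sigma) >_\KVC^\mul (T\sigma-S\sigma)$. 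The above equalities then let us re-insert the common part $S' \cap T'$ on both sides to obtain $S' >_\KVC^\mul T'$, so case~3(c) applies. The main obstacle is precisely this transfer of the full-flattening multiset comparison: $f$-rooted substitutions into variables of $S$ alter the multiset structure, and the extended Lemma~\ref{lem:f-key}(3) is exactly what certifies that the change is symmetric between $S$ and $T$ whenever the simpler cases fail.
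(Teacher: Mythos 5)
Your proof is correct and follows essentially the same route as the paper's: cases 0--3(b) are transferred via Lemma~\ref{lem:f-key}(1,2), and case 3(c) is handled by passing to the difference multisets $S\sigma - T\sigma$ and $T\sigma - S\sigma$ and invoking the extension Lemma~\ref{lem:f-key}(3). The only cosmetic differences are that you sharpen the inclusions of part (3) to equalities using the cardinality constraints and justify the strip/re-insert step through Lemma~\ref{lem:equivalence}, where the paper instead appeals directly to the folklore equivalence $M >^\mul N \iff M-N >^\mul N-M$.
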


\begin{proof}
By induction on $|s|$ we verify that $s >_\KVC t$ implies 
$s\sigma >_\KVC t\sigma$.
If $s >_\KVC t$ is derived by one of the cases~0, 1, 2, 3(a) or 3(b)
in Definition~\ref{def:KV'}
(\ref{def:KV}),
the proof of Lemma~\ref{lem:ackbo monotone} goes through.
So suppose that $s >_\KVC t$ is derived by case~3(c)
and 
further suppose that
$s\sigma >_\KVC t\sigma$ can be derived neither by case~3(a) nor 3(b).
By definition we have
$\tf{f}(s) >_\KVC^\mul \tf{f}(t)$. This is equivalent\footnote{%
This property is well-known for standard multiset extensions (involving a
single \proper\ order). It is also not difficult to prove for the
multiset extension defined in Definition~\ref{lex and mul}.}
to
\[
\tf{f}(s) - \tf{f}(t) >_\KVC^\mul \tf{f}(t) - \tf{f}(s)
\]
We obtain $\tf{f}(s)\sigma - \tf{f}(t)\sigma >_\KVC^\mul
\tf{f}(t)\sigma - \tf{f}(s)\sigma$ from the induction hypothesis and thus
$\tf{f}(s\sigma) - \tf{f}(t\sigma) >_\KVC^\mul
\tf{f}(t\sigma) - \tf{f}(s\sigma)$ by 
Lemma~\ref{lem:f-key}(1).
Using the earlier equivalence, we infer
$\tf{f}(s\sigma) >_\KVC^\mul \tf{f}(t\sigma)$ and hence case~3(c)
applies to obtain the desired
$s\sigma >_\KVC t\sigma$.
\end{proof}

The combination of the above results proves Theorem~\ref{KV' correctness}.
%
%

\subsection{NP-Hardness of AC-KBO}
\label{sec:ACKBO NP-hard}

Next we show NP-hardness of the orientability problem for $>_\ackbo$.
To this end we introduce the TRS $\RR_0'$ consisting of the rules
\[
\High(p_1(\Bot)) \to p_1(\High(\Bot))
\qquad\cdots\qquad
\High(p_m(\Bot)) \to p_m(\High(\Bot))
\]
together with a rule
$\ToP_i^0(\ToP_i^1(\Bot)) \to \ToP_i^1(\ToP_i^0(\Bot))$
for each clause $C_i$ that contains a negative literal.
The next property is immediate.

\begin{lemma}
\label{lem:base2}
If $\RR_0' \subseteq {>_\ackbo}$ then
$\ToP_i^0 > \ToP_i^1$ for all $1 \leqslant i \leqslant n$ and
$\High > p_j$ for all $1 \leqslant j \leqslant m$.
\qed
\end{lemma}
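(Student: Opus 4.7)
The plan is to observe that every rule in $\RR_0'$ is ground and has equal weight on both sides: each left-hand side and its corresponding right-hand side contain exactly the same multiset of function symbol occurrences (in the first family, one $\High$, one $p_j$, one $\Bot$; in the second, one $\ToP_i^0$, one $\ToP_i^1$, one $\Bot$). Hence for any admissible weight function we have $w(\ell) = w(r)$ and the variable condition holds vacuously, so any derivation of $\ell >_\ackbo r$ must proceed through one of the cases~0--3 of Definition~\ref{def:ackbo} in the equal-weight regime.

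Next I would walk through those cases for a fixed rule $\ell \to r$ of $\RR_0'$. Case~0 requires $r$ to be a variable and is excluded by groundness. Case~3 requires the common root to lie in $\FF_\AC$, but only $+$ is declared AC in $\FF_\phi$, and none of the symbols $\High$, $p_j$, $\ToP_i^0$, $\ToP_i^1$ is $+$. Case~2 requires $\ell$ and $r$ to have the same (non-AC) root, but in both families of rules the two roots differ ($\High \neq p_j$ for the first family, $\ToP_i^0 \neq \ToP_i^1$ for the second). Hence only case~1 remains, forcing the precedence inequality $\rt(\ell) > \rt(r)$: this yields $\High > p_j$ from $\High(p_j(\Bot)) \to p_j(\High(\Bot))$ and $\ToP_i^0 > \ToP_i^1$ from $\ToP_i^0(\ToP_i^1(\Bot)) \to \ToP_i^1(\ToP_i^0(\Bot))$, as claimed.

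There is no real obstacle: the TRS $\RR_0'$ is deliberately engineered so that each of its rules admits only the ``case~1'' avenue for orientation, thereby directly pinning down a precedence constraint. The argument parallels the corresponding step in the proof of Lemma~\ref{lem:base} for $>_\KV$, and the conclusion of the lemma then feeds into the subsequent NP-hardness reduction for $>_\ackbo$ in exactly the same fashion as in the $>_\KV$ case.
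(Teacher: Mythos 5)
Your proof is correct and is exactly the reasoning the paper leaves implicit (it marks Lemma~\ref{lem:base2} as immediate and gives no proof): since both sides of each rule of $\RR_0'$ carry the same multiset of function symbols, the weights coincide for every weight function, and with cases~0, 2 and~3 of Definition~\ref{def:ackbo} ruled out (ground terms, distinct roots, non-AC unary roots), only case~1 remains, which forces the stated precedence constraints.
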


The TRS $\RR_0 \cup \RR_0' \cup
\{ \ell_i \to r_i \mid 1 \leqslant i \leqslant n \}$
is denoted by $\RR_\phi'$.

\begin{lemma}
\label{lem:base3}
Suppose $\High > + > \Low$
and the consequence of Lemma~\ref{lem:base2} holds.
Then $\RR_\phi' \subseteq {>_\ackbo}$
for some $(w, w_0)$ if and only if for every $i$
there is some $p$ such that $p \in C_i$ with $p \nless +$ or
$\neg p \in C_i$ with $+ > p$.
\end{lemma}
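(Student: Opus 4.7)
The plan is to mirror the proof of Lemma~\ref{lem:clause}, replacing $>_\Wt^\mul$ by $>_\ackbo^\mul$ throughout and invoking the sharper precedence constraints provided by Lemma~\ref{lem:base2}. A direct computation gives $|T| = |S| + 1$ for $S = \tf{+}(\ell_i)$ and $T = \tf{+}(r_i)$, so among the disjuncts 3(a), 3(b), 3(c) of Definition~\ref{def:ackbo} only 3(a) can possibly establish $\ell_i >_\ackbo r_i$. Since $S$ and $T$ are ground, the entire argument reduces to characterising when $\rrs[+]{S}{\nless} >_\ackbo^\mul \rrs[+]{T}{\nless}$ holds.

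For the ``if'' direction I will first fix weights with $w(\High) = w(\Low) = w(p_j)$ for all $j$ (the analogue of Lemma~\ref{lem:base} for $>_\ackbo$ is straightforward, since every rule in $\RR_0$ is oriented by case~1 of Definition~\ref{def:ackbo}) and invoke Lemma~\ref{lem:base2} to orient $\RR_0'$. For each clause $C_i$ satisfying the hypothesis I will enlarge either $w(\Top_i)$ (when some positive literal $p'_j \in C_i$ has $p'_j \nless +$) or $w(\ToP_i^m)$ for the least $m$ with $+ > p''_m$ (otherwise), exactly as in the proof of Lemma~\ref{lem:clause}, so that a single element of $\rrs[+]{S}{\nless}$ strictly outweighs every element of $\rrs[+]{T}{\nless}$; since strict weight domination entails $>_\ackbo$, case~3(a) then applies.

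The ``only if'' direction is the main obstacle and requires a finer argument than for $>_\KV$, because unlike $>_\Wt$ the relation $>_\ackbo$ admits equal-weight comparisons through cases~1 and~2 of Definition~\ref{def:ackbo}. Suppose for contradiction that some clause $C_i$ violates the hypothesis; if $l = 0$ then the two relevant multisets coincide as a singleton and $>_\ackbo^\mul$ fails by irreflexivity, so assume $l \geqslant 1$. I will focus on $w_0 = \High(\ToP_i^0(\ToP_i^0(\Bot))) \in W$: using that every $p''_j$ shares the weight of $\High$, one obtains $w(v_a) \leqslant w(\High) + 2M + w(\Bot)$ with $M = \max_j w(\ToP_i^j)$, so whenever $w(\ToP_i^0) = M$ strict weight domination of $w_0$ is impossible. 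In that equal-weight subcase Definition~\ref{def:ackbo} leaves only cases~1 and~2: case~1 requires $p''_a > \High$, excluded by Lemma~\ref{lem:base2}; case~2 forces $a = 0$ and, unfolding the lexicographic comparison twice, ultimately demands $\ToP_i^1 > \ToP_i^0$, contradicting Lemma~\ref{lem:base2}. The main technical subtlety is to reduce the sub-case $w(\ToP_i^0) < M$ to the one just handled by a careful analysis of the cyclic wrap-around of the $\ToP_i^j$-indices in $V$; this cyclic chase is the delicate part of the proof.
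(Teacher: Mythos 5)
Your overall setup — reducing to case~3(a) of Definition~\ref{def:ackbo} via $|T| = |S| + 1$, and the ``if'' direction by enlarging $w(\Top_i)$ or $w(\ToP_i^m)$ — matches the paper, whose proof simply declares the ``if'' direction analogous to Lemma~\ref{lem:clause}. Your treatment of the subcase $w(\ToP_i^0) = M$ is also essentially the paper's key step: the paper first asserts (importing the weight argument from Lemma~\ref{lem:clause}) that all terms of $V$ and $W$ have equal weight, and then observes that $\High(\ToP_i^0(\ToP_i^0(\Bot))) \in W$ is $>_\ackbo$-greater than every other term by Lemma~\ref{lem:base2}, contradicting $V >_\ackbo^\mul W$.

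The genuine gap is the subcase $w(\ToP_i^0) < M$, which you defer to ``a careful analysis of the cyclic wrap-around'' without carrying it out — and this is exactly where the difficulty sits. Lemma~\ref{lem:base2} constrains only $\ToP_i^0 > \ToP_i^1$ and $\High > p_j$; it says nothing about the relative precedence of $\ToP_i^1, \dots, \ToP_i^l$ or of $p''_1, \dots, p''_l$, and unlike $>_\Wt$ the order $>_\ackbo$ admits equal-weight dominations via case~1 with an arbitrary root comparison and via case~2 by recursing into arguments. Consequently the chase around the cycle need not close. Concretely, take $l = 2$ with $w(\ToP_i^0) < w(\ToP_i^1) = w(\ToP_i^2)$, $\ToP_i^2 > \ToP_i^1$ and $p''_1 > p''_2$ (all consistent with Lemma~\ref{lem:base2}): then $\High(\ToP_i^{0,0})$ is dominated by $\High(\ToP_i^{0,1})$ on weight alone, $p''_1(\ToP_i^{1,1})$ by $p''_1(\ToP_i^{1,2})$ via case~2 using $\ToP_i^2 > \ToP_i^1$, and $p''_2(\ToP_i^{2,2})$ by $p''_1(\ToP_i^{1,2})$ via case~1, so $V >_\ackbo^\mul W$ holds even though the clause is falsified. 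So the step you leave open cannot be completed as envisioned; note that the same configuration also defeats the equal-weight claim that the paper transfers from Lemma~\ref{lem:clause} (whose max-weight argument only forces two cyclically adjacent $\ToP_i^j$ to attain the maximum), so closing this gap appears to require strengthening $\RR_0'$ to pin down the precedence or weights of all the $\ToP_i^j$ and the $p''_j$, not just a more careful chase.
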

\begin{proof}
The ``if'' direction is analogous to Lemma~\ref{lem:clause}.
Let us prove the ``only if'' direction by contradiction.
Suppose $+ > p'_j$ for all $1 \leqslant j \leqslant k$,
$p''_j \nless +$ for all $1 \leqslant j \leqslant l$,
and $\RR_\phi' \subseteq {>_\ackbo}$.
As discussed in the proof of Lemma~\ref{lem:clause},
for the multisets $V$ and $W$ 
on page~\pageref{encoding}
we obtain $V >_\ackbo^\mul W$ and
all terms in $V$ and $W$ have the same weight.
With the help of Lemma~\ref{lem:base2} we infer that
$\High(\ToP_i^0(\ToP_i^0(\Bot))) \in W$ is greater than every
other term in $V$ and $W$.
This contradicts $V >_\ackbo^\mul W$.
\end{proof}

Using Lemmata~\ref{lem:base2} and~\ref{lem:base3},
Theorem~\ref{ACKBO orientation NP-hard} can now be proved
in the same way as Theorem~\ref{KV orientation NP-hard}.

\subsection{AC-RPO}

\begin{proof*}[Proof of Lemma \ref{lem:acrpo}]
Because of totality of the precedence, $\rrs{S}{\not <}$ is
identified with $\rrs{S}{>}$ in the sequel.
First suppose $s >_\acrpo t$ holds by case~4.
We may assume that $>_\acrpo$ and $>_\acrpoo$ coincide on smaller
terms.
The conditions on $\embsm{f}$ are obviously the same. We distinguish
which case applies.
\begin{enumerate} 
\item[4(a)]
We have 
$\rrs{S}{>} >_\acrpo^\mul \rrs{T}{>} \uplus T{\restriction}_\VV -
S{\restriction}_\VV$ and thus both
$\rrs{S}{>} \uplus S{\restriction}_\VV \geqslant_\acrpo^\mul 
\rrs{T}{>} \uplus T{\restriction}_\VV$ and
$\rrs{S}{>} >_\acrpo^\mul \rrs{T}{>}$. So case~4$'$(a) is applicable.
\smallskip
\item[4(b)]
We have $|S| > |T|$ and $S =_\AC^f T$, i.e.,
$\rrs{S}{>} =_\AC^\mul \rrs{T}{>} \uplus T{\restriction}_\VV -
S{\restriction}_\VV$, and in particular 
$T{\restriction}_\VV \subseteq S{\restriction}_\VV$. Thus 
$\rrs{S}{>} \uplus S{\restriction}_\VV \geqslant_\acrpo^\mul 
\rrs{T}{>} \uplus T{\restriction}_\VV$ holds. 
Since $T{\restriction}_\VV \subseteq S{\restriction}_\VV$ and
$|S| > |T|$ imply $\#(S) > \#(T)$, case~4$'$(b) applies.
\smallskip
\item[4(c)]
We obtain $\rrs{S}{>} \uplus S{\restriction}_\VV \geqslant_\acrpo^\mul 
\rrs{T}{>} \uplus T{\restriction}_\VV$ as in case~4(b).
Together with $|S| = |T|$ this implies $\#(S) \geqslant \#(T)$.
As $S = \rrs{S}{>} \uplus S{\restriction}_\VV \uplus \rrs{S}{<}$
and similar for $T$, we obtain $S >_\acrpo^\mul T$ from the
assumption $\rrs{S}{<} >_\acrpo^\mul \rrs{T}{<}$. Hence case~4$'$(c)
is applicable.
\end{enumerate} 
Now let $s >_\acrpoo t$ by case~4$'$.
Again we assume that $>_\acrpo$ and $>_\acrpoo$ coincide on smaller
terms. We have
$\rrs{S}{>} \uplus S{\restriction}_\VV \geqslant_\acrpo^\mul 
\rrs{T}{>} \uplus T{\restriction}_\VV$ ($\ast$).
\begin{enumerate} 
\item[4$'$(a)]
We have $\rrs{S}{>} >_\acrpo^\mul \rrs{T}{>}$. 
Suppose  $S \not >_\acrpo^f T$, i.e., 
$\rrs{S}{>} >_\acrpo^\mul \rrs{T}{>} \uplus T{\restriction}_\VV -
S{\restriction}_\VV$ does not hold.
This is only possible if there is some variable
$x \in T{\restriction}_\VV - S{\restriction}_\VV$ for which there is no
term $s' \in \rrs{S}{>}$ with $s' >_\acrpo x$. This however contradicts
($\ast$), so $S >_\acrpo^f T$ holds and case~4(a)
applies.
\smallskip
\item[4$'$(b)]
If $\rrs{S}{>} >_\acrpo^\mul \rrs{T}{>}$ holds then case~4(a)
applies by
the reasoning in case~4$'$(a).
Otherwise, due to ($\ast$) we must have 
$S =_\AC^f T$. Since $\#(S) > \#(T)$ implies $|S| > |T|$, case~4(b)
applies.
\smallskip
\item[4$'$(c)]
If $\#(S) > \#(T)$ is satisfied we argue as in the preceding case.
Otherwise $\#(S) \geqslant \#(T)$ and $\#(S) \ngtr \#(T)$. This implies
both $|S| = |T|$ and $S{\restriction}_\VV \supseteq T{\restriction}_\VV$.
We obtain $S =_\AC^f T$ as in case~4$'$(b).
From the assumption $S >_\acrpo^\mul T$ we infer
$\rrs{S}{<} >_\acrpo^\mul \rrs{T}{<}$ and thus case~4(c)
applies.
\qed
\end{enumerate}
\end{proof*}
\fi

\end{document}